\documentclass[a4paper,UKenglish]{lipics-v2018}

\bibliographystyle{plainurl}
\usepackage[utf8]{inputenc}
\usepackage{amsmath}
\usepackage{amssymb}
\usepackage{amsfonts}
\usepackage{latexsym}
\usepackage{xspace}
\usepackage[]{todonotes}
\captionsetup[sub]{font={small,sf}}
\usepackage{color}
\usepackage{hyperref}
\usepackage{graphicx}
\usepackage{csquotes}

\usepackage{wrapfig}

\title{Stabbing Rectangles by Line Segments~-- 
  How~Decomposition Reduces the Shallow-Cell Complexity}

\titlerunning{Stabbing Rectangles by Line Segments}

\author{Timothy M.\ Chan}{University of Illinois at Urbana-Champaign,
  U.S.A.}{tmc@illinois.edu}{}{}

\author{Thomas C.\ van Dijk}{Universit{\"a}t W{\"u}rzburg,
  W{\"u}rzburg, Germany}{}{https://orcid.org/0000-0001-6553-7317}{}

\author{Krzysztof Fleszar}{Max-Planck-Institut f{\"u}r Informatik,
  Saarbr{\"u}cken, Germany}{}{https://orcid.org/0000-0002-1129-3289}{This research was partially supported by Conicyt Grant PII 20150140 and by Millennium Nucleus Information and Coordination in Networks RC130003.}

\author{Joachim~Spoerhase}{Universit\"at W\"urzburg, W\"urzburg,
  Germany}{joachim.spoerhase@uni-wuerzburg.de}{https://orcid.org/0000-0002-2601-6452}{}

\author{Alexander~Wolff}{Universit\"at W\"urzburg, W\"urzburg,
  Germany}{}{https://orcid.org/0000-0001-5872-718X}{}

\authorrunning{T.\,M.~Chan, T.\,C.~van Dijk, K.~Fleszar, J.~Spoerhase,
  and A.~Wolff} 

\Copyright{Timothy M.~Chan, Thomas C.~van Dijk, Krzysztof~Fleszar,
  Joachim~Spoerhase, Alexander~Wolff}

\subjclass{F.2.2 Nonnumerical Algorithms and Problems}

\keywords{Geometric optimization, NP-hard, approximation, shallow cell
  complexity, line stabbing}

\hideLIPIcs

\newtheorem{observation}{\bfseries Observation}

\newcommand{\emphasizedAlreadyInAbstract}[1]{#1}
\newcommand{\runtimeclass}[1]{\ensuremath{\mathsf{#1}}}
\newcommand{\FPT}{\runtimeclass{FPT}\xspace}
\newcommand{\NP}{\runtimeclass{NP}\xspace}
\newcommand{\APX}{\runtimeclass{APX}\xspace}
\newcommand{\manuscript}{paper\xspace}

\newcommand{\Fig}{Figure\xspace}
\newcommand{\fig}{Fig.\xspace}
\newcommand{\figs}{Figs.\xspace}
\newcommand{\bigOh}{O}

\newcommand{\finO}{=}

\newcommand{\thSuffix}{-th\xspace}
\newcommand{\len}[1]{\left\|#1\right\|}

\newcommand{\ceil}[1]{\left\lceil #1 \right\rceil}

\newcommand{\formulaPunctuationSpace}{~}
\newcommand{\problemName}[1]{\textsc{#1}}
\newcommand{\GMMN}{\problemName{Generalized Minimum Manhattan Network}\xspace}

\newcommand{\Stabbing}{\problemName{Stabbing}\xspace}
\newcommand{\cost}{\mathrm{cost}}

\newcommand{\ConstrainedStabbing}{\problemName{Constrained Stabbing}\xspace}
\newcommand{\CardinalityStabbing}{\problemName{Cardinality Stabbing}\xspace}

\renewcommand{\paragraph}[1]{\medskip\noindent{\sffamily\bfseries #1}~}
\renewcommand{\phi}{\varphi}
\renewcommand{\epsilon}{\varepsilon}
\renewcommand{\rho}{\varrho}

\newcommand{\opt}{\mathrm{OPT}}

\newcommand{\fixC}{c}
\newcommand{\RvG}[1]{R_{#1}} 
\newcommand{\ReG}[1]{r_{#1}} 
\newcommand{\Sopt}{S_\opt}
\newcommand{\rTop}{r_\mathrm{top}}
\newcommand{\rBot}{r_\mathrm{bot}}
\newcommand{\Sact}[1]{S_\mathrm{act}^{#1}}
\newcommand{\Sina}[1]{S_\mathrm{ina}^{#1}}
\newcommand{\lenAct}[1]{\len{{\Sact{#1}}}}
\newcommand{\lenIna}[1]{\len{{\Sina{#1}}}}
\newcommand{\w}{\operatorname{width}}
\newcommand{\SvOpt}[1]{\Sopt^{#1}}

\newcommand{\vOpt}[1]{\len{{\SvOpt{#1}}}}

\newcommand{\SPSC}{\problemName{Special-3sc}\xspace}
\newcommand{\intAs}{\mathcal{I}} 
\newcommand{\shortSSC}{\textrm{3SC}}
\newcommand{\shortStab}{\textrm{stab}}
\newcommand{\xSSC}{I_\shortSSC}
\newcommand{\xStab}{I_\shortStab}
\newcommand{\copt}{\mathrm{OPT}}
\newcommand{\coptSSC}{\copt}
\newcommand{\coptStab}{\copt}
\newcommand{\ySSC}{S_\shortSSC}
\newcommand{\yStab}{S_\shortStab}
\newcommand{\costSSC}{\cost}
\newcommand{\costStab}{\cost}

\newcommand{\sets}{\ensuremath{\mathcal{F}}}
\newcommand{\universe}{\ensuremath{U}}
\newcommand{\unipoint}{\ensuremath{p_\mathrm{u}}}
\newcommand{\theTick}{\includegraphics{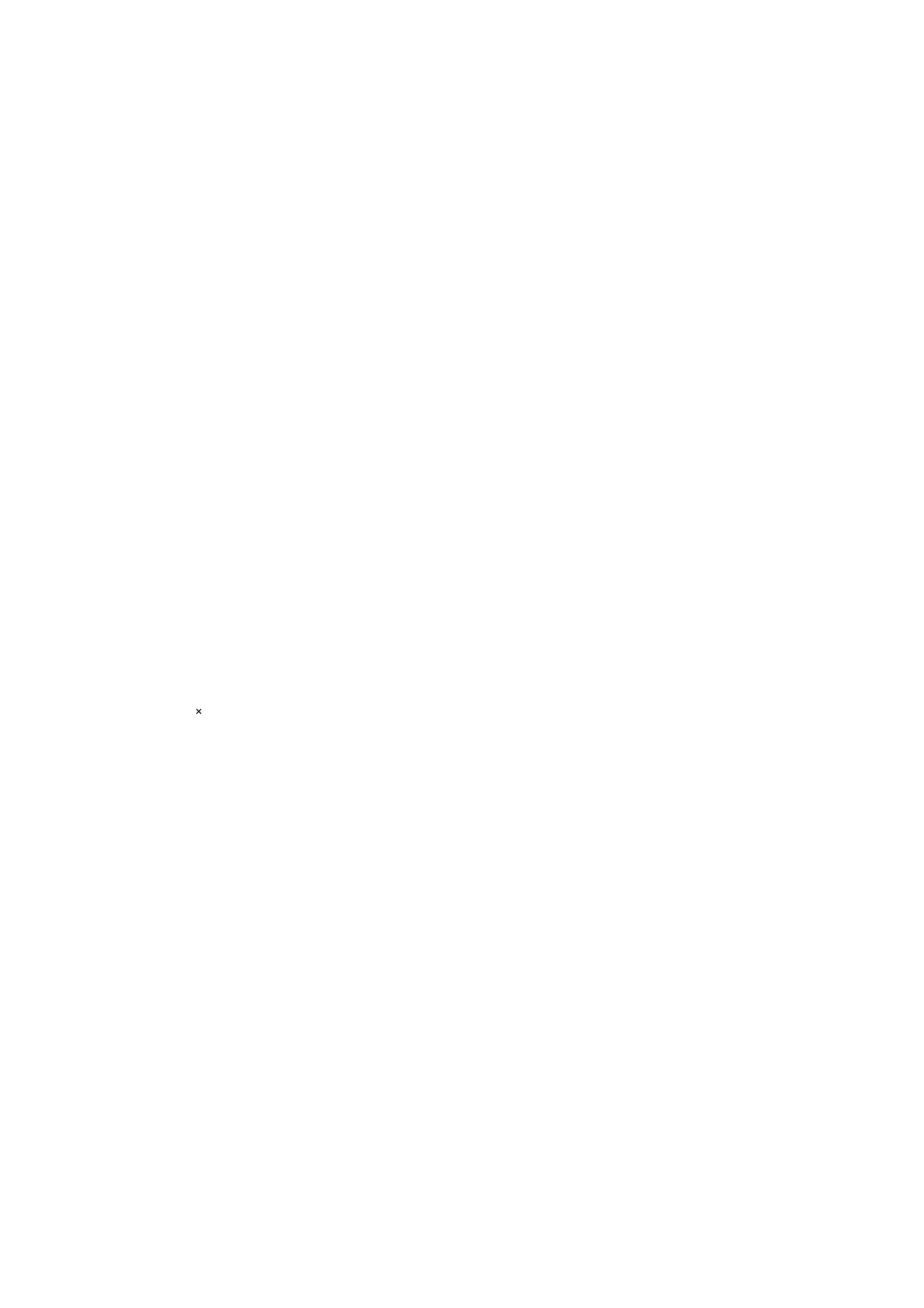}}
\newcommand{\Ll}{\ensuremath{x_\mathrm{1}}}
\newcommand{\Lr}{\ensuremath{x_\mathrm{2}}}
\newcommand{\Ly}{\ensuremath{y}}
\newcommand{\Rl}{\ensuremath{x_\mathrm{min}}}
\newcommand{\Rr}{\ensuremath{x_\mathrm{max}}}
\newcommand{\Rb}{\ensuremath{y_\mathrm{min}}}
\newcommand{\Rt}{\ensuremath{y_\mathrm{max}}}

\usepackage[inline]{enumitem}
\newenvironment{listRoman}{\begin{enumerate}[topsep=.5ex,label=(\roman*),align=parleft,leftmargin=*]}{\end{enumerate}}
\newenvironment{listArabic}{\begin{enumerate}[label=(\arabic*)]}{\end{enumerate}}

\graphicspath{{pictures/}}

\begin{document}
\maketitle

\begin{abstract}
  We initiate the study of the following natural geometric
  optimization problem.  The input is a set of axis-aligned rectangles
  in the plane.  The objective is to find a set of horizontal line
  segments of minimum total length so that every rectangle is
  \emph{stabbed} by some line segment.  A line segment stabs a
  rectangle if it intersects its left and its right boundary.  The
  problem, which we call \Stabbing, can be motivated by a resource
  allocation problem and has applications in geometric network design.
  To the best of our knowledge, only special cases of this
  problem have been considered so far.

  \Stabbing is a weighted geometric set cover problem, which we show
  to be \NP-hard.  A constrained variant of \Stabbing turns out to be
  even \APX-hard. While for general set cover the best possible
  approximation ratio is $\Theta(\log n)$, it is an important field in
  geometric approximation algorithms to obtain better ratios for
  geometric set cover problems. Chan et al.\ [SODA'12] generalize
  earlier results by Varadarajan [STOC'10] to obtain sub-logarithmic
  performances for a broad class of \emph{weighted} geometric set
  cover instances that are characterized by having low
  \emph{shallow-cell complexity}.  The shallow-cell complexity of
  \Stabbing instances, however, can be high so that a direct
  application of the framework of Chan et al.\ gives only logarithmic
  bounds.  We still achieve a constant-factor approximation by
  decomposing general instances into what we call \emph{laminar}
  instances that have low enough complexity.  

  Our decomposition technique yields constant-factor approximations
  also for the variant where rectangles can be stabbed by horizontal
  and vertical segments and for two further geometric set cover
  problems.
\end{abstract}

\section{Introduction}

In this \manuscript, we study the following geometric
optimization problem, which we call \Stabbing.  The input is a
set~${R}$ of~${n}$ axis-aligned rectangles in the plane.  The
objective is to find a set~${S}$ of horizontal line segments of
minimum total length~${\len{S}}$, where~${\len{S}=\sum_{s\in
    S}\len{s}}$, such that each rectangle~${r\in R}$ is
\emphasizedAlreadyInAbstract{stabbed} by some line segment~${s\in
  S}$. Here, we say that~${s}$ stabs~${r}$ if~${s}$
intersects the left and the right edge of~${r}$ (see
\fig~\ref{fig:stabbing-example}).  The length of a line segment~${s}$
is denoted by~${\len{s}}$.  Throughout this \manuscript, rectangles
are assumed to be axis-aligned and segments are horizontal line
segments (unless explicitly stated otherwise).

\begin{wrapfigure}{R}{5cm}
	\centering
	\includegraphics{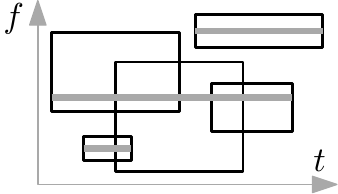}
	\caption{An instance of \Stabbing (rectangles) with an optimal
		solution (gray line segments).}
	\label{fig:stabbing-example}
\end{wrapfigure}

Our problem can be viewed as a resource allocation problem.  Consider
a server that receives a number of communication requests.  Each
request~${r}$ is specified by a time window~${[t_1,t_2]}$ and a
frequency band~${[f_1,f_2]}$.  In order to satisfy the request~${r}$,
the server has to open a communication channel that is available in
the time interval~${[t_1,t_2]}$ and operates at a fixed frequency
within the frequency band~${[f_1,f_2]}$.  Therefore, the server has to
open several channels over time so that each request can be
fulfilled. Requests may share the same channel if their frequency
bands and time windows overlap.  Each open channel incurs a fixed cost
per time unit and the goal is to minimize the total cost.  Consider
a~${t}$--$f$ coordinate system.  A request~${r}$ can be identified
with a rectangle~${[t_1,t_2]\times[f_1,f_2]}$.  An open channel
corresponds to horizontal line segments and the operation cost equals
its length.  Satisfying a request is equivalent to stabbing the
corresponding rectangle.

To the best of our knowledge, general \Stabbing has not been studied,
although it is a natural problem.  Finke et
al.~\cite{finke08-batch-processing} consider the special case of the
problem where the left sides of all input rectangles lie on
the~${y}$-axis.  They derive the problem from a practical application
in the area of batch processing and give a polynomial time algorithm
that solves this special case of \Stabbing to optimality.  Das et
al.~\cite{dfksvw-agmmnp-Algorithmica18} describe an application of
\Stabbing in geometric network design.  They obtain a constant-factor
approximation for a slight generalization of the special case of Finke
et al.\ in which rectangles are only constrained to \emph{intersect}
the~${y}$-axis.  This result constitutes the key step for
an~${\bigOh(\log n)}$-approximation algorithm to the \GMMN problem.

We also consider the following variant of our problem, which we call
\ConstrainedStabbing. Here, the input additionally consists of a
set~${F}$ of horizontal line segments of which any solution~${S}$ must
be a subset. We will also consider the unweighted variant of
\ConstrainedStabbing, called \CardinalityStabbing, where the objective
consists in minimizing the \emph{number} of segments.

\paragraph{Related Work.}
\Stabbing can be interpreted as a weighted geometric set cover problem
where the rectangles play the role of the elements, the potential line
segments correspond to the sets and a segment~${s}$ ``contains'' a
rectangle~${r}$ if~${s}$ stabs~${r}$.  The weight of a segment~${s}$
equals its length~${\len{s}}$.  \problemName{Set Cover} is one of the
classical \NP-hard problems.  The greedy algorithm yields
a~${\ln n}$-approximation (where~${n}$ is the number of elements) and
this is known to be the best possible approximation ratio for the
problem unless
$\runtimeclass{P}=\NP$~\cite{feige98-setcover,DinurS14:set-cover-pvsnp}.
It is an important field of computational geometry to surpass the
lower bound known for general \problemName{Set Cover} in geometric
settings.  In their seminal work, Br\"onniman and
Goodrich~\cite{bronnimann-goodrich95-setcover-vcdimension} gave
an~${\bigOh(\log\opt)}$-approximation algorithm for \emph{unweighted}
\problemName{Set Cover}, where~${\opt}$ is the size of an optimum
solution, for the case when the underlying
\emph{VC-dimension}\footnote{Informally, the~VC-dimension of a set
  cover instance~$(\universe,\sets)$ is the size of a largest
  subset~$X\subseteq\universe$ such that~$X$ \emph{induces} in~$\sets$
  the set cover instance~$(X,2^X)$.}  is constant.  This holds in many
geometric settings.  Numerous subsequent works have improved upon this
result in specific geometric settings.  For example, Aronov et
al.~\cite{aronov10-eps-nets-rectangles} obtained
an~${\bigOh(\log\log\opt)}$-approximation algorithm for the problem of
piercing a set of axis-aligned rectangles with the minimum number of
points (\problemName{Hitting Set} for axis-aligned rectangles) by
means of so-called~\emph{${\epsilon}$-nets}.  Mustafa and Ray
\cite{mustafa-ray10-geometric-hitting-set} obtained a PTAS for the
case of piercing pseudo-disks by points.  A limitation of these
algorithms is that they only apply to \emph{unweighted} geometric
\problemName{Set Cover}; hence, we cannot apply them directly to our
problem. In a break-through, Varadarajan
\cite{varadarajan10-quasi-uniform-sampling} developed a new technique,
called \emph{quasi-uniform sampling}, that gives sub-logarithmic
approximation algorithms for a number of \emph{weighted} geometric set
cover problems (such as covering points with weighted fat triangles or
weighted disks).  Subsequently, Chan et
al.~\cite{chan-etal12-weighted-geom-sc} generalized Varadarajan's
idea.  They showed that quasi-uniform sampling yields a
sub-logarithmic performance if the underlying instances have low
\emphasizedAlreadyInAbstract{shallow-cell complexity}.  Bansal and
Pruhs \cite{bansal-pruhs14-geometry-scheduling} presented an
interesting application of Varadarajan's technique.  They reduced a
large class of scheduling problems to a particular geometric set cover
problem for anchored rectangles and obtained a constant-factor
approximation via quasi-uniform sampling.  Recently, Chan and
Grant~\cite{chan-grant-exactAlgos} and Mustafa et
al.~\cite{Mustafa2015} settled the \APX-hardness status of all natural
weighted geometric \problemName{Set Cover} problems where the elements
to be covered are points in the plane or space.

Gaur et al.~\cite{gaur-etal02-stabbing-rects-lines} considered the
problem of stabbing a set of axis-aligned rectangles by a minimum
number of axis-aligned \emph{lines}.  They obtain an elegant
$2$-approximation algorithm for this \NP-hard problem by rounding the standard LP-relaxation.  Kovaleva and
Spieksma~\cite{kovaleva-etal06-weighted-stabbing} considered a
generalization of this problem involving weights and demands.  They
obtained a constant-factor approximation for the problem.  Even et
al.~\cite{even-etal08-capacitated-stabbing} considered a
\emph{capacitated} variant of the problem in arbitrary dimension.
They obtained approximation ratios that depend linearly on the
dimension and extended these results to approximate certain lot-sizing
inventory problems.  Giannopoulos et
al.~\cite{giannopoulos-etal13-stabbing-fpt} investigated the
fixed-parameter tractability of the problem where given translated
copies of an object are to be stabbed by a minimum number of lines
(which is also the parameter).  Among others, they showed that the
problem is~${\runtimeclass{W}[1]}$-hard for unit-squares but becomes
\FPT if the squares are disjoint.

\paragraph{Our Contribution.}
We are the first to investigate \Stabbing in its general
form. (Previous works considered only special cases of the problem.)
We examine the complexity and the approximability of the problem.

First, we rule out the possibility of efficient exact algorithms by
showing that \Stabbing is \NP-hard; see
Section~\ref{sec:stab-np}.  \ConstrainedStabbing and
\CardinalityStabbing turn out to be even \APX-hard; see
Section~\ref{sec:stab-apx}.

Another negative result is that \Stabbing instances can have high
shallow-cell complexity so that a direct application of the
quasi-uniform sampling method yields only the same logarithmic bound
as for arbitrary set cover instances.

Our main result is a constant-factor approximation algorithm for
\Stabbing which is based on the following three ideas.  First, we show
a simple decomposition lemma that implies a constant-factor
approximation for (general) set cover instances whose set family can
be decomposed into two disjoint sub-families each of which admits a
constant-factor approximation.  
Second, we show that \Stabbing
instances whose segments have a special \emph{laminar} structure have
low enough shallow-cell complexity so that they admit a
constant-factor approximation by quasi-uniform sampling.  Third, we
show that an arbitrary instance can be transformed in such a way that
it can be decomposed into two disjoint laminar families.  Together
with the decomposition lemma, this establishes the constant-factor
approximation.

Another (this time more obvious) application of the decomposition
lemma gives also a constant-factor approximation for the variant of
\Stabbing where we allow horizontal and vertical stabbing segments.
Also in this case, a direct application of quasi-uniform sampling
gives only a logarithmic bound as there are laminar families of
horizontal \emph{and} vertical segments that have high shallow-cell
complexity.  This and two further applications of the decomposition
lemma are sketched in Section~\ref{sec:applications}.

The above results provide two natural examples for the fact that the
property of having low shallow-cell complexity is \emph{not} closed
under the union of the set families.  In spite of this,
constant-factor approximations are still possible.  Our results also
show that the representation as a union of low-complexity families may
not be obvious at first glance.  We therefore hope that our approach
helps to extend the reach of quasi-uniform sampling beyond the concept
of low shallow-cell complexity also in other settings.  Our results
for \Stabbing may also lead to new insights for other related
geometric problems such as the \GMMN
problem~\cite{dfksvw-agmmnp-Algorithmica18}.

As a side remark, we first explore the relationship of
\problemName{Stabbing} to well-studied geometric set cover (or
equivalently hitting set) problems; see
Appendix~\ref{sec:relat-probs}.  We show that \Stabbing can be seen as
(weighted) \problemName{Hitting Set} for axis-aligned boxes in 
three dimensions.  
This immediately implies an~${\bigOh(\log\log
	n)}$-approximation algorithm for \CardinalityStabbing, the
unweighted variant.  The embedding does not yield a sub-logarithmic
performance for \Stabbing, however.  A similar embedding is not
possible in two dimensions: There are set cover instances that can be
realized as instances of our problem but not as instances of
\problemName{Hitting Set} for axis-aligned rectangles.  We also show
that natural greedy approaches for \Stabbing fail to beat the
logarithmic bound.

\section{NP-Hardness of \Stabbing}
\label{sec:stab-np}

To show that \Stabbing is \NP-hard,
we reduce from \problemName{Planar Vertex~Cover}: Given a planar
graph~$G$ and an integer~$k$, decide whether $G$ has a vertex cover of
size at most~$k$.  This problem is
\NP-hard~\cite{Garey-1974-planar-vc-np-complete}.

\begin{theorem}
  \Stabbing is \NP-hard, even for interior-disjoint rectangles.
\end{theorem}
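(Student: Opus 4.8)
The plan is to reduce from \problemName{Planar Vertex Cover}, which is \NP-hard even on planar graphs of maximum degree~$3$. Given such a graph $G=(V,E)$ with $|V|=n$ and an integer $k$, I would first fix a crossing-free orthogonal grid drawing $\Gamma$ of $G$ of polynomial size -- vertices at grid points, each edge drawn as a rectilinear path with a polynomial number of bends -- and then scale $\Gamma$ by a large polynomial factor $M$ and replace its pieces by rectangles, producing a set $R$ of interior-disjoint rectangles together with a length budget $\mathsf{L}$, so that $R$ can be stabbed by horizontal segments of total length at most $\mathsf{L}$ if and only if $G$ has a vertex cover of size at most $k$.

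At the gadget level the intended correspondence is ``$v$ lies in the cover'' $\leftrightarrow$ ``the solution contains a distinguished long segment $s_v$ of length $\Theta(M)$ associated with $v$''. Each edge $e=uv$ contributes a small cluster of rectangles whose only non-negligible cost is one \emph{wide} rectangle; this cluster is designed so that its wide rectangle is stabbed at no extra cost whenever $s_u$ or $s_v$ is present, whereas stabbing it in any other way costs $\Theta(M)$. All remaining rectangles -- coming from the vertices, the bends of $\Gamma$, and the ``wiring'' -- are made thin enough that they can be stabbed at negligible cost. Choosing $\mathsf{L}=kM$ plus the small unavoidable slack then forces any cheap solution to consist of at most $k$ of the segments $s_v$ that together hit every edge gadget, i.e.\ a vertex cover of size at most $k$; conversely a vertex cover of size $k$ is easily turned into a solution of length at most $\mathsf{L}$.

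The heart of the argument -- and the step I expect to be the main obstacle -- is implementing the edge gadget and placing all gadgets inside $\Gamma$ so that (a) no ``cheating'' solution can stab a wide edge rectangle cheaply by extending or combining segments not associated with an endpoint of that edge, and (b) the resulting rectangle set is genuinely interior-disjoint. Both points are exactly where planarity is used: because $\Gamma$ is crossing-free, the gadgets of different edges can be routed through pairwise non-interfering regions of the plane and the wide edge rectangles can be separated into distinct horizontal strips, so the two requirements can be satisfied simultaneously. Once the gadget and its placement are pinned down, verifying the two directions of the equivalence and the polynomial bound on all coordinates is routine bookkeeping.
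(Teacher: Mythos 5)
Your proposal identifies the right source problem (\problemName{Planar Vertex Cover}) and the right high-level shape (a planar drawing, one distinguished horizontal segment per vertex, one expensive rectangle per edge that is ``free'' exactly when an endpoint's segment is chosen), which is also the shape of the paper's reduction. But the part you defer as ``the main obstacle'' and ``routine bookkeeping once the gadget is pinned down'' is in fact the entire substance of the proof, and your sketch of the cost structure has concrete failure modes that are not addressed. First, nothing in your design forces an all-or-nothing choice of $s_v$: if the wide rectangles of the (up to three) edges at $v$ all lie in the stabbing range of the single long segment $s_v$, a solution may instead pay for a shorter sub-segment that stabs only one or two of them, so ``number of long segments'' no longer counts vertices and the budget $\mathsf{L}=kM+\text{slack}$ does not certify a vertex cover. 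Second, you must rule out hybrid segments that stab wide rectangles belonging to non-adjacent edges; planarity alone does not give this, since a horizontal segment can cross many gadget regions. Third, there is a genuine placement problem: the wide rectangle of edge $uv$ must be stabbable by \emph{both} $s_u$ and $s_v$, i.e., it must lie inside the $x$-span of each and vertically span both their heights, simultaneously for all edges at a degree-$3$ vertex and while keeping everything interior-disjoint; this is exactly where an orthogonal drawing with bends and ``wiring'' rectangles becomes delicate, and you have not exhibited a layout.

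The paper resolves all three issues with specific machinery you would need to reinvent. It uses a \emph{visibility representation} (vertices are vertical segments, edges are horizontal segments touching their endpoints' vertical segments), which makes the placement problem trivial. The vertex gadget is a vertical stack of an odd number of overlapping rectangles admitting exactly two minimal stabbings, ``active'' and ``inactive'', whose costs differ by exactly $1$; a sequence of lemmas shows that in an optimal solution every vertex gadget is in one of these two states (this is the all-or-nothing enforcement) and that each edge rectangle overlaps exactly one rectangle of each incident vertex gadget by a fixed amount, so it can share a segment with at most one of them (this kills hybrid cheating). The accounting is then \emph{exact}, $\opt_R=\fixC+k'$ with $k'$ the number of active gadgets, rather than a gap argument with slack to be controlled. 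As written, your proposal is a plan for a proof rather than a proof: the gadget whose existence you assume is precisely what has to be constructed and verified.
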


\begin{figure}[t]
  \begin{subfigure}[b]{.4\textwidth}
    \centering
    \includegraphics[page=1]{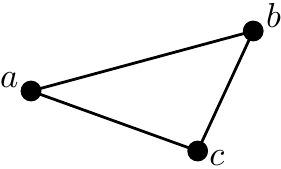}
    \caption{a \problemName{Planar Vertex~Cover} instance}
    \label{fig:np-G}
  \end{subfigure}
  \hfill
  \begin{subfigure}[b]{.28\textwidth}
    \centering
    \includegraphics[page=2]{npGadgets1}
    \caption{a visibility representation}
    \label{fig:np-VisRep}
  \end{subfigure}
  \hfill
  \begin{subfigure}[b]{.23\textwidth}
    \centering
    \includegraphics[page=3]{npGadgets1}
    \caption{no coinciding edges}
    \label{fig:np-ModVisRep}
  \end{subfigure}

  \caption{Obtaining a visibility representation from a
    \problemName{Planar Vertex~Cover} instance.}
  \label{fig:vtx-cv-instance}
\end{figure}

Let~$G=(V,E)$ be a planar graph with $n$ vertices, and let~$k$ be a
positive integer.  Our reduction will map~$G$ to a set~$R$ of
rectangles and $k$ to another integer~$k^\star$ such that $(G,k)$ is a
yes-instance of \problemName{Planar Vertex Cover} if and only if
$(R,k^\star)$ is a yes-instance of \Stabbing.
Consider a \emph{visibility representation} of $G$, which 
represents the vertices of~$G$ by non-overlapping vertical line
segments (called \emph{vertex segments}), and each edge of~$G$ 
by a horizontal line segment (called \emph{edge segment}) that touches
the vertex segments of its endpoints; see
\figs~\ref{fig:np-G} and~\ref{fig:np-VisRep}.  Any planar graph admits
a visibility representation on a grid of size $O(n) \times O(n)$,
which can be found in polynomial time~\cite{Lin-SIAM04-visRep}.
We compute such a visibility representation for~${G}$. 
Then we stretch the vertex segments and vertically shift the edge segments
so that no two edge segments coincide (on a vertex segment);
see \fig~\ref{fig:np-ModVisRep}.  The height of the visibility
representation remains linear.

In the next step, we create a \Stabbing{} instance based on this visibility representation,
using the edge segments and vertex segments as indication
for where to put our rectangles. 
All rectangles will be interior-disjoint, 
have positive area 
and lie on an integer grid that we obtain by scaling the visibility
representation by a sufficiently large factor (linear in~$n$). 
A vertex segment 
will intersect~${O(n)}$ rectangles (above each other, since they are disjoint), and each
rectangle will have width~${O(n)}$.
The precise number of rectangles and their sizes will depend on the constraints formulated below. 
Our construction will be polynomial in~$n$.

For each edge~${e}$ in $G$, we introduce an \emph{edge gadget}
$\ReG{e}$, which is a rectangle that we placed such that it is stabbed
by the edge segment of~$e$ in the visibility representation.

\begin{figure}[tb]
  \begin{subfigure}[t]{.38\textwidth}
    \centering
    \includegraphics[page=1]{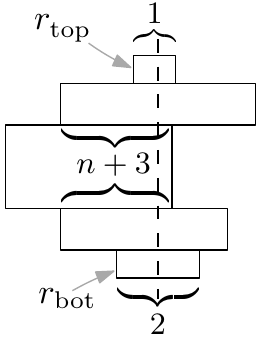}
    \caption{all rectangles of~$\RvG{v}$ are intersected by the
      (dashed) vertex segment of~$v$}%
    \label{fig:vtxGadget}
  \end{subfigure}
  \hfill
  \begin{subfigure}[t]{.23\textwidth}
    \centering
    \includegraphics[page=2]{npGadgets2}
    \caption{$\RvG{v}$ stabbed by~$\Sact{v}$}
    \label{fig:vtxActGadget}
  \end{subfigure}
  \hfill 
  \begin{subfigure}[t]{.23\textwidth}
    \centering
    \includegraphics[page=3]{npGadgets2}
    \caption{$\RvG{v}$ stabbed by~$\Sina{v}$}
    \label{fig:vtxInaGadget}
  \end{subfigure}
  \caption{The vertex gadget~$\RvG{v}$ of vertex~$v$}
\end{figure}

For each vertex~$v$ in $G$, we introduce a \emph{vertex gadget} $\RvG{v}$
as shown in \fig~\ref{fig:vtxGadget}.  It consists of an odd number of
rectangles that are (vertically) stabbed by the vertex segment
of~$v$ in the visibility representation.
Any two neighboring rectangles share a horizontal line segment.
Its length is exactly~${n+3}$ if neither of the rectangles is the
top-most rectangle~${\rTop}$ or the bottom-most rectangle~${\rBot}$.
Otherwise, the intersection length equals the width of the respective
rectangle~${\rTop}$ or~${\rBot}$.
We set the widths of~$\rTop$ and~$\rBot$ to~$1$ and $2$, respectively.
A vertex gadget~${\RvG{v}}$ is called \emph{incident} to an edge
gadget~${\ReG{e}}$ if~${v}$ is incident to~${e}$.

Before we describe the gadgets and their relation to each other in
more detail, we construct, in two steps, a set~$S^v$ of line segments
for each vertex gadget~$\RvG{v}$.
First, let $S^v$ be the set of line segments that correspond to the 
top and bottom edges of the rectangles in~$\RvG{v}$.  Second, replace
each pair of overlapping line segments in~$S^v$ by its union.
Then number the line segments in~$S^v$ from top to bottom starting
with~$1$.  Let~$\Sina{v}$ be the set of the odd-numbered line
segments, and let~${\Sact{v}}$ be the set of the even-numbered ones;
see \figs~\ref{fig:vtxActGadget} and~\ref{fig:vtxInaGadget}.
By construction, ${\Sact{v}}$ and~${\Sina{v}}$ are feasible stabbings
for~${\RvG{v}}$.  Furthermore,~${|\Sina{v}| = |\Sact{v}|}$
as~$|\RvG{v}|$ is odd and, hence, $|S^v|$ is even.
Given the difference in the widths of~${\rTop}$ and~${\rBot}$,
we have that $\lenAct{v} = \lenIna{v} + 1$.
Note that this equation holds regardless of the widths of the
rectangles in~${\RvG{v}\setminus\{\rTop,\rBot\}}$.

The rectangles of all gadgets together form a \Stabbing
instance~${R}$. 
They meet two further constraints: 
First, no two rectangles of different vertex gadgets intersect. We
can achieve this by scaling the visibility representation by an
appropriate linear factor. 
Second, each edge
gadget~${\ReG{e}}$ intersects exactly two rectangles, one of its
incident left vertex gadgets,~$\RvG{v}$, and one of its incident
right vertex gadgets,~$\RvG{u}$. 
The top edge of~${\ReG{e}}$ touches
a segment of~${\Sact{v}}$ and the bottom edge of~${\ReG{e}}$ touches a
segment of~${\Sact{u}}$. The length of each of the two intersections
is exactly~${n+3}$; see \fig~\ref{fig:np-R}. 
Thus, we have~${|\RvG{v}|=O(\deg(v))=O(n)}$.

\begin{figure}[tb]
  \centering
  \includegraphics{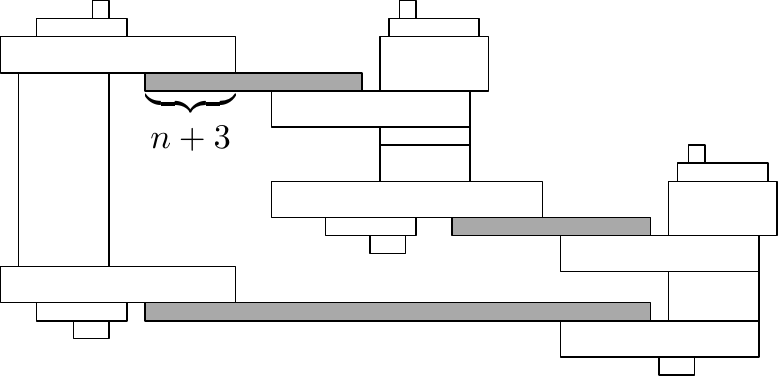}
  \caption{The \Stabbing instance that encodes the \problemName{Planar
      Vertex Cover} instance of \fig~\ref{fig:vtx-cv-instance}; edge
    gadgets are shaded gray.}
  \label{fig:np-R}
\end{figure}

Let~${S}$ be a feasible solution to the instance~${R}$.  We call a vertex
gadget~${\RvG{v}}$ \emph{active} in~$S$ 
if~${\{ s \cap \bigcup \RvG{v} \mid s \in S\} = \Sact{v}}$, and
\emph{inactive} in~$S$ if ${\{ s \cap \bigcup \RvG{v} \mid
  s \in S\} =  \Sina{v}}$.
We will see that in any optimum solution each vertex gadget is
either active or inactive.  Furthermore, we will establish a direct
correspondence between the \problemName{Planar Vertex Cover}
instance~$G$ and the \Stabbing instance~$R$: Every optimum solution
to~${R}$ \emph{covers} each edge gadget by an active vertex gadget
while minimizing the number of active vertex gadgets.

Let~${\opt_G}$ denote the size of a minimum vertex cover for~${G}$,
let~${\opt_R}$ denote the length of an optimum solution to~${R}$,
let~${\w(r)}$ denote the width of a rectangle~${r}$, and finally
let~$\fixC = \sum_{e\in E} \left(\w(\ReG{e}) - n - 3\right) +
\sum_{v\in V} \lenIna{v}$.
We now prove that $\opt_G \le k$ if and only if $\opt_R \le \fixC+k$.
We show the two directions separately. 

\newcounter{redLR}
\setcounter{redLR}{\thetheorem}
\newcommand{\lemmaRedLR}{%
  $\opt_G \le k$ implies that $\opt_R \le \fixC + k$.}
\begin{lemma}
  \label{lem:redLR}
  \lemmaRedLR
\end{lemma}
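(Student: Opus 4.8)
The plan is to take a vertex cover $C \subseteq V$ of $G$ with $|C| \le k$ and construct a feasible stabbing $S$ of $R$ whose total length is at most $\fixC + k$. The natural construction is: for every vertex $v \in C$, include the active stabbing $\Sact{v}$ of the vertex gadget $\RvG{v}$; for every vertex $v \notin C$, include the inactive stabbing $\Sina{v}$. This handles all the vertex gadgets, and it remains only to check that all edge gadgets are stabbed as well. Recall from the construction that each edge gadget $\ReG{e}$, with $e = uv$, is a rectangle whose top edge lies on a segment of $\Sact{v}$ (for the left endpoint $v$) and whose bottom edge lies on a segment of $\Sact{u}$ (for the right endpoint $u$), with each intersection having length exactly $n+3$, which is precisely the stabbing requirement. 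Since $C$ is a vertex cover, at least one of $u, v$ lies in $C$, so at least one of $\Sact{u}, \Sact{v}$ is part of $S$; hence the corresponding segment stabs $\ReG{e}$. Thus $S$ is feasible.

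It then remains to bound $\len{S}$. I would split the sum as $\len{S} = \sum_{v \in C} \lenAct{v} + \sum_{v \notin C} \lenIna{v}$. Using the identity $\lenAct{v} = \lenIna{v} + 1$ established in the construction (which holds regardless of the widths of the internal rectangles of $\RvG{v}$, thanks to the choice of widths $1$ and $2$ for $\rTop$ and $\rBot$), the first sum becomes $\sum_{v \in C} \lenIna{v} + |C|$. Combining the two sums gives $\len{S} = \sum_{v \in V} \lenIna{v} + |C| \le \sum_{v \in V} \lenIna{v} + k$. Finally I would observe that $\fixC = \sum_{e \in E}(\w(\ReG{e}) - n - 3) + \sum_{v \in V} \lenIna{v} \ge \sum_{v \in V} \lenIna{v}$, since each edge gadget has width $\w(\ReG{e}) \ge n+3$ (the edge rectangle must be at least as wide as the required stabbing length of $n+3$), so every term of the first sum in $\fixC$ is nonnegative. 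Hence $\len{S} \le \fixC + k$, as desired, and $\opt_R \le \len{S}$.

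The only delicate point — and the part I would write out most carefully — is verifying feasibility at the edge gadgets: one must confirm that the segment of $\Sact{v}$ (respectively $\Sact{u}$) that the edge gadget's top (respectively bottom) edge lies on is genuinely \emph{present} in $S$ as a full segment, and that its horizontal extent actually covers the left-to-right span of $\ReG{e}$. This is exactly what the constraints in the construction were engineered to guarantee: each edge gadget intersects exactly two vertex-gadget rectangles, the relevant touching segments belong to the even-numbered family $\Sact{\cdot}$, and the intersection length is exactly $n+3$. Everything else — the bound on $\len{S}$ and the nonnegativity argument for $\fixC$ — is a routine calculation once the two stated identities ($\lenAct{v} = \lenIna{v}+1$ and $\w(\ReG{e}) \ge n+3$) are invoked.
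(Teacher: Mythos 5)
There is a genuine gap in your feasibility argument. You claim that an edge gadget $\ReG{e}$ is already stabbed by the segment of $\Sact{v}$ that its top edge touches, because ``the intersection having length exactly $n+3$ is precisely the stabbing requirement.'' It is not: stabbing requires the segment to cross both the left and the right edge of $\ReG{e}$, i.e.\ to span its full width $\w(\ReG{e})$, and by construction $\ReG{e}$ overlaps \emph{two disjoint} vertex gadgets by $n+3$ each, so $\w(\ReG{e})\ge 2(n+3)>n+3$. The segment of $\Sact{v}$ only overlaps $\ReG{e}$ on a sub-interval of length $n+3$ and stops inside it; your solution $S=\bigcup_{v\in C}\Sact{v}\cup\bigcup_{v\notin C}\Sina{v}$ therefore leaves every edge gadget unstabbed and is infeasible. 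The telltale sign is your treatment of the term $\sum_{e\in E}\bigl(\w(\ReG{e})-n-3\bigr)$ in $\fixC$ as mere nonnegative slack: in the paper's proof this term is exactly the \emph{cost of repairing feasibility}. There, for each edge gadget one picks an active incident vertex gadget $\RvG{v}$, takes the segment $s\in\Sact{v}$ with $\len{s\cap\ReG{e}}=n+3$, and \emph{extends} $s$ by $\w(\ReG{e})-n-3$ so that it reaches the far edge of $\ReG{e}$. Summing these extensions over all edges, plus $\sum_{v\in V}\lenIna{v}$ for the vertex gadgets and $+1$ for each of the $k'\le k$ active gadgets (via $\lenAct{v}=\lenIna{v}+1$), gives exactly $\fixC+k'\le\fixC+k$.

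Your bookkeeping for the vertex-gadget part (splitting into active and inactive gadgets and using $\lenAct{v}=\lenIna{v}+1$) matches the paper and is fine. But as written the length bound $\len{S}\le\sum_{v\in V}\lenIna{v}+k$ is a bound on an infeasible set of segments, so the lemma does not follow. You even flag the right delicate point at the end --- whether the touching segment's ``horizontal extent actually covers the left-to-right span of $\ReG{e}$'' --- but then assert the construction guarantees this, when in fact it deliberately does not: the whole point of defining $\fixC$ with the $\sum_{e\in E}(\w(\ReG{e})-n-3)$ term is to pay for the necessary extensions.
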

\begin{proof}
	Given a vertex cover of size~${k'\le k}$, we set all vertex gadgets
	that correspond to vertices in the vertex cover to active and all
	the other ones to inactive. Then for each edge gadget~${\ReG{e}}$, at
	least one incident vertex gadget is active, say~${\RvG{v}}$.  By our
	construction of~${R}$, there is a line segment~${s}$ in~${\Sact{v}}$
	with~${\len{ s \cap \ReG{e}} = n+3}$.  We increase the length
	of~${s}$ by~${\w(\ReG{e}) - n-3}$ so that~${\ReG{e}}$ is stabbed.
	Hence, we obtained a feasible solution to~$R$.  Recall that there
	are~${k'}$ active vertex gadgets and that, for each vertex~${v}$, we
	have~${\lenAct{v} = \lenIna{v} + 1}$.  Thus, the total length of our
	solution is
	\[\sum\limits_{v\in V} \lenIna{v} + k' + \sum\limits_{e \in E}
	\left(\w(\ReG{e}) - n - 3\right) \le \fixC + k' \le \fixC + k\] and
	the lemma follows.
\end{proof}

Next we show the other direction, which
is more challenging.  Consider an optimum solution~${\Sopt}$ to~${R}$
and choose~$k \le n$ such that~${\opt_R \le \fixC + k}$ is
satisfied.  Let~${\RvG{v}}$ be any vertex gadget, and let~${\rTop}$
and~${\rBot}$ be its top- and bottom-most rectangles.
Furthermore, let~${\SvOpt{v} = \{ s \cap \bigcup \RvG{v} \mid s \in \Sopt\}}$.  In
the following steps, we prove that~${\SvOpt{v}}$ equals
either~${\Sina{v}}$ or~${\Sact{v}}$.

First, we make two observations.
We transform~${\Sopt}$ as follows without increasing its total length.
Let~${s\in \Sopt}$ be a line segment stabbing~${\rTop}$. If~${s}$ 
stabs only~${\rTop}$, then we move~${s}$ to the top edge
of~${\rTop}$.  If~${s}$ also stabs other rectangles, then one
of these rectangles must touch~${\rTop}$ (otherwise we could
split~${s}$ and shrink its subsegments, contradicting
optimality). Note that the only rectangle touching~${\rTop}$ lies below it
and belongs to~${\RvG{v}}$.  A similar argument holds
for~${\rBot}$.
\begin{observation}\label{obs:TopBot}
	Without loss of generality, it holds that:
	\begin{listRoman}
		\item\label{obs:Top} Any segment in~${\Sopt}$ stabbing~${\rTop}$
		either stabs~${\rTop}$ through its top edge, or also stabs
		another rectangle in~${\RvG{v}}$.
		\item\label{obs:Bot} The same holds for the rectangle~${\rBot}$ and
		its bottom edge.
	\end{listRoman}
\end{observation}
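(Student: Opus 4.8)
The plan is to prove both items through length-preserving local adjustments of $\Sopt$, handling one segment $s \in \Sopt$ that stabs $\rTop$ at a time; item~\ref{obs:Bot} follows from item~\ref{obs:Top} by the obvious top/bottom (and below/above) symmetry, so I would carry out only item~\ref{obs:Top}. The first case is when $s$ stabs no rectangle of $R$ besides $\rTop$: then I translate $s$ vertically upwards until it lies on the top edge of $\rTop$. A vertical translation leaves $\len{s}$ unchanged, and since $s$ previously stabbed only $\rTop$ and its horizontal extent---still containing that of $\rTop$---is unchanged, the modified family is again a feasible, hence optimum, solution; now $s$ stabs $\rTop$ through its top edge, as required.

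In the second case $s$ stabs $\rTop$ together with at least one other rectangle, and I would show that one of these other rectangles touches $\rTop$. Since, by the construction, the only rectangle touching $\rTop$ is the one of $\RvG{v}$ lying directly below it, this immediately gives item~\ref{obs:Top}. To prove that some other stabbed rectangle touches $\rTop$, I argue by contradiction: assume no rectangle stabbed by $s$ other than $\rTop$ touches $\rTop$. For each such rectangle $r$, the horizontal line through $s$ meets the left and right edges of both $\rTop$ and $r$, so the vertical extents of $\rTop$ and $r$ overlap; as they are interior-disjoint rectangles of positive area that do not touch, their horizontal extents are then separated by a positive gap---in particular each such $r$ lies strictly to the left of the left edge of $\rTop$ or strictly to the right of its right edge. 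Because $s$ stabs $\rTop$ as well as some other rectangle, the horizontal extent of $s$ properly contains that of $\rTop$, so it sticks out beyond $\rTop$ on (say) the left; on that side there is a non-empty open horizontal interval inside the extent of $s$, adjacent to $\rTop$, over which $s$ covers no rectangle that it stabs. Cutting $s$ at a point of this interval replaces it by at most two strictly shorter segments, and every rectangle formerly stabbed by $s$ still has its horizontal extent entirely on one side of the cut and is therefore still stabbed. The resulting solution is feasible but shorter than $\Sopt$, a contradiction.

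Finally, I would observe that the only modification actually performed---the upward translation in the first case---is harmless: it moves a segment onto the width-one top edge of a topmost rectangle of some vertex gadget, after which that segment stabs only this rectangle (nothing of $R$ sits directly above a topmost rectangle inside its narrow column), so it does not affect the hypotheses for any other topmost or bottommost rectangle; applying all such translations simultaneously yields the optimum solution asserted in the observation. I expect the crux to be the second case: one must argue carefully that the ``does not touch $\rTop$'' assumption yields \emph{strictly} separated horizontal extents (this uses both interior-disjointness and the positive area of the rectangles, plus the fact that the construction puts no rectangle corner on the boundary of $\rTop$), and then pin down the empty sub-interval of $s$ that can be removed while every stabbed rectangle stays covered.
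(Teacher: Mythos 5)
Your argument follows essentially the same route as the paper's (which is just: if $s$ stabs only $\rTop$, move it to the top edge; otherwise some other stabbed rectangle must touch $\rTop$, since else one could \emph{split $s$ and shrink its subsegments}, contradicting optimality), only spelled out in more detail. One small correction: cutting $s$ at a single point does not decrease the total length---you must also delete the non-empty open sub-interval you identified (equivalently, shrink the two resulting pieces to the minimal extent needed), exactly as in the paper's ``split and shrink'', to obtain a strictly cheaper feasible solution and hence the contradiction.
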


\begin{observation}\label{obs:shortest}
	Every line segment in~${\SvOpt{v}}$ that does not stab a rectangle
	in~${\{\rTop, \rBot\}}$ has length at least~${n+3}$.
\end{observation}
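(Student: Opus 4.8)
The plan is to show that a trace $\sigma=s\cap\bigcup\RvG{v}$ with $s\in\Sopt$ that does not stab $\rTop$ or $\rBot$ must contain the entire width of some ``middle'' rectangle $\rho\in\RvG{v}\setminus\{\rTop,\rBot\}$, and then to invoke the fact that, by the construction, every such $\rho$ has width at least $n+3$ (it overlaps a neighbouring rectangle of $\RvG{v}$ in a horizontal segment of length $n+3$). Given this, the bound follows at once: $\len{\sigma}\ge\len{s\cap\rho}\ge\w(\rho)\ge n+3$. I would also record the elementary structural fact that, since the rectangles of $\RvG{v}$ are interior-disjoint and stacked so that consecutive ones share a horizontal edge, the horizontal segment $s$ meets at most two of them. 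Hence, if $s$ stabs \emph{any} rectangle of $\RvG{v}$, then by the hypothesis on $\sigma$ it stabs one that is neither $\rTop$ nor $\rBot$, i.e.\ a middle one, and we are done.

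It remains to treat the case in which $s$ meets $\bigcup\RvG{v}$ but stabs no rectangle of $\RvG{v}$. Then some endpoint of $s$, say its right endpoint $p$, lies strictly between the left and right edges of some rectangle $\rho\in\RvG{v}$. If $p$ were not the right edge of a rectangle stabbed by $s$, we could push $p$ slightly to the left without destroying feasibility, contradicting the optimality of $\Sopt$ --- this is the same shrinking argument used for Observation~\ref{obs:TopBot}. So $p$ is the right edge of some rectangle $r'$ stabbed by $s$, and this right edge lies strictly inside the $x$-range of $\rho$. The rectangle $r'$ is not in $\RvG{v}$ (by assumption $s$ stabs none of its rectangles) and not in another vertex gadget (distinct vertex gadgets are separated by the scaling of the construction), so $r'$ is an edge gadget $\ReG{e}$. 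Then $\ReG{e}$ overlaps $\rho$ horizontally, so $\rho$ is the unique rectangle of $\RvG{v}$ that $\ReG{e}$ intersects, and by the construction this overlap has horizontal extent at least $n+3$. Since $s$ stabs $\ReG{e}$, the $x$-interval spanned by $s$ contains that spanned by $\ReG{e}$; hence the horizontal extent of $s\cap\rho$ is at least that of $\ReG{e}\cap\rho$, which is at least $n+3$, so $\len{\sigma}\ge\len{s\cap\rho}\ge n+3$, settling this case too.

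I expect the main obstacle to be exactly this edge-gadget case: one has to read off from the (partly implicit) gadget construction the precise property that an edge gadget intersecting a vertex gadget overlaps the relevant vertex-gadget rectangle in a horizontal segment of length at least $n+3$, and that no rectangle of any other gadget can ``pin'' a protruding segment inside $\RvG{v}$. This is where the ``each intersection has length exactly $n+3$'' feature of the construction, together with the separation of distinct vertex gadgets, is used; the remainder is the tightness/shrinking reasoning already employed for Observation~\ref{obs:TopBot}.
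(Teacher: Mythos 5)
The paper states this observation without proof, treating it as an immediate consequence of the construction: every rectangle of~$\RvG{v}$ other than~$\rTop$ and~$\rBot$, as well as every overlap between an edge gadget and a vertex-gadget rectangle, has width at least~$n+3$, so any segment whose trace avoids~$\rTop$ and~$\rBot$ must span one of these widths. Your write-up is a correct and more detailed elaboration of exactly this reasoning (the only cosmetic remark being that a middle rectangle whose neighbours are~$\rTop$ and~$\rBot$ gets its width bound from the edge-gadget overlap rather than from a neighbouring rectangle), so it takes essentially the same route as the paper.
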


Note that Observation~\ref{obs:shortest} also holds for line
segments that stab only rectangles belonging to edge gadgets as those
rectangles have length at least~$n+3$.

\newcounter{containment}
\setcounter{containment}{\thetheorem}
\newcommand{\lemmaContainment}{%
  If~${\Sina{v} \not\subseteq \SvOpt{v}}$ and~${\Sact{v} \not\subseteq
    \SvOpt{v}}$, then~${\vOpt{v} > \lenAct{v} + n}$.}
\begin{lemma}
  \label{lem:containment}
  \lemmaContainment
\end{lemma}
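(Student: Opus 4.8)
The plan is to argue by contradiction. Number the rectangles of $\RvG{v}$ from top to bottom as $r_1 = \rTop, r_2, \dots, r_m = \rBot$ and set $t := (m-1)/2$ (an integer, since $m = |\RvG{v}|$ is odd). From the construction, every segment of $S^v$ except its first one (the top edge of $\rTop$, of length $1$) and its last one (the bottom edge of $\rBot$, of length $2$) has length $n+3$, and $\Sina{v}$, $\Sact{v}$ are the odd- and the even-numbered segments of $S^v$; hence $\lenIna{v} = t(n+3)+1$ and $\lenAct{v} = t(n+3)+2$. Assume $\Sina{v}\not\subseteq\SvOpt{v}$, $\Sact{v}\not\subseteq\SvOpt{v}$, and $\vOpt{v}\le\lenAct{v}+n$; as all coordinates are integral this reads $\vOpt{v}\le t(n+3)+n+2 < (t+1)(n+3)$, and I will contradict it by showing that $\SvOpt{v}$ contains at least $t+1$ segments of length $\ge n+3$. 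Let $z$ be the number of such segments. By Observation~\ref{obs:shortest}, a segment of $\SvOpt{v}$ shorter than $n+3$ stabs $\rTop$ or $\rBot$; since the rectangle of $\RvG{v}$ adjacent to $\rTop$ (resp.\ to $\rBot$) has width $n+3$, such a segment stabs only that extreme rectangle and, by Observation~\ref{obs:TopBot}, equals the top edge of $\rTop$ or the bottom edge of $\rBot$. So $\SvOpt{v}$ has at most two ``short'' segments, one of each kind, and $z(n+3)\le\vOpt{v}<(t+1)(n+3)$ gives $z\le t$; but the short segments stab only $\rTop$ and $\rBot$, so the $z$ long ones must together stab all $m-2=2t-1$ interior rectangles, and as a horizontal segment stabs at most two consecutive rectangles of $\RvG{v}$ we get $z\ge\lceil(2t-1)/2\rceil = t$. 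Thus $z = t$, and it remains to rule this out.

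I split into cases according to which of the two extreme edges belong to $\SvOpt{v}$. If neither does, the $t$ long segments would have to stab all $2t+1$ rectangles, impossible since they stab at most $2t$. If exactly one does, say the top edge of $\rTop$, then the $t$ long segments must stab $r_2,\dots,r_m$; this forces them to form $t$ pairwise disjoint pairs of consecutive rectangles, necessarily $\{r_2,r_3\},\{r_4,r_5\},\dots,\{r_{2t},r_{2t+1}\}$. Since a segment stabbing a pair $\{r_{2i},r_{2i+1}\}$ must lie on the common boundary of those two rectangles and its intersection with $\bigcup\RvG{v}$ is exactly the corresponding segment of $S^v$, we get $\SvOpt{v} = \Sina{v}$, contradicting the hypothesis (and symmetrically $\SvOpt{v} = \Sact{v}$ when the bottom edge of $\rBot$ is the one present). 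The same forced-partition argument settles the subcase where both extreme edges are present and the $t$ long segments additionally stab $r_1$ or $r_{2t+1}$: then $\SvOpt{v}\supseteq\Sact{v}$ or $\SvOpt{v}\supseteq\Sina{v}$, again a contradiction. Hence the only surviving configuration is: both extreme edges lie in $\SvOpt{v}$, and the $t$ long segments stab exactly the $2t-1$ interior rectangles.

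This configuration is the crux, because there the length count only gives $\vOpt{v}\ge t(n+3)+3$, which does not exceed $\lenAct{v}+n$; to eliminate it I would use the \emph{global} optimality of $\Sopt$, not just feasibility. Since $t$ segments each stabbing at most two of the $2t-1$ interior rectangles must cover them all, either some long segment stabs just one rectangle or two long segments stab a common rectangle; in both situations one can find an initial stretch of rectangles of $\RvG{v}$ that $\Sopt$ stabs using the top edge of $\rTop$ and only segments irrelevant to the edge gadgets, and re-stab this stretch instead by the corresponding even-numbered segments of $S^v$. Because every edge gadget incident to $v$ touches a segment of $\Sact{v}$ — an even-numbered segment of $S^v$ — the segments removed by this exchange carry no edge gadget, so the exchange leaves the remainder of $\Sopt$ unaffected while deleting the (length-$1$) top edge of $\rTop$, decreasing $\len{\Sopt}$ by at least $1$ and contradicting optimality. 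The delicate part of the whole argument is exactly this rerouting — verifying that the removed segments serve no edge gadget and that their even-numbered replacements still stab all the involved rectangles — which is where the fine geometry of the reduction (edge gadgets sitting on active segments) is needed. With this last case excluded, $z\ge t+1$, so $\vOpt{v}\ge(t+1)(n+3) = \lenAct{v}+n+1 > \lenAct{v}+n$.
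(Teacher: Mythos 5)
Your argument rests on the identities $\lenIna{v}=t(n+3)+1$ and $\lenAct{v}=t(n+3)+2$, but these do not hold in the construction. Only the \emph{overlaps} of neighbouring rectangles are fixed (to $n+3$ in the interior, and to $1$ resp.\ $2$ at the pairs involving $\rTop$ and $\rBot$); the widths of the interior rectangles themselves are left free --- they are whatever is needed to reach the incident edge gadgets, and the paper explicitly remarks that $\lenAct{v}=\lenIna{v}+1$ holds ``regardless of the widths of the rectangles in $\RvG{v}\setminus\{\rTop,\rBot\}$''. The segment of $S^v$ between $r_i$ and $r_{i+1}$ has length $\w(r_i)+\w(r_{i+1})-\w(r_i\cap r_{i+1})$, which can exceed $n+3$ by an amount of order $n$ per rectangle. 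Consequently your central step --- that $\vOpt{v}\le\lenAct{v}+n$ forces $\vOpt{v}<(t+1)(n+3)$ and hence at most $t$ ``long'' segments --- is unsound: when the interior rectangles are wide, $\lenAct{v}+n$ is far larger than $(t+1)(n+3)$, and counting segments of length at least $n+3$ yields no contradiction. The paper's proof avoids this by lower-bounding $\vOpt{v}$ by $\sum_{r\in\RvG{v}}\w(r)-\sum_{(r_1,r_2)\in P}\w(r_1\cap r_2)$ over a maximum set $P$ of disjointly stabbed pairs; the first sum absorbs the unknown widths, and the comparison reduces to the overlap values $1$, $2$ and $n+3$, which \emph{are} fixed by the construction.

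Even setting that aside, your final configuration (both extreme edges present and the $t$ long segments covering exactly the $2t-1$ interior rectangles) is not actually closed: the ``rerouting'' exchange is only sketched, and its two key claims --- that the segments you delete serve no edge gadget, and that the replacement even-numbered segments of $S^v$ restore global feasibility --- are precisely the delicate points and are left unverified. Note in particular that an interior segment of $\SvOpt{v}$ may be the restriction of a segment of $\Sopt$ that also stabs an edge-gadget rectangle, so deleting it is not obviously harmless. As written, the proof therefore rests on a false premise and leaves its hardest case open; the width-and-overlap accounting used in the paper is the cleaner route.
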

\begin{proof}
	We say that a pair of rectangles is \emph{stabbed} by a line
	segment if the line segment stabs both rectangles.  Let~${P}$
	be a maximum-cardinality set of rectangle pairs
	of~${\RvG{v}}$ where each
	pair is stabbed by a line segment in~${\SvOpt{v}}$ and each
	rectangle appears in at most one pair.
	For~${\Sina{v}}$ and~${\Sact{v}}$, such a maximum-cardinality
	set of pairs is unique and excludes exactly one
	rectangle, namely~$\rTop$ or~$\rBot$, respectively.
	
	Now, as~${\RvG{v}}$ contains an odd
	number of rectangles, the number of rectangles not in~${P}$ is
	odd and at least one.  If there is exactly one
	rectangle not in~${P}$,	this rectangle is different
	from~${\rTop}$ and~${\rBot}$, as otherwise
	Observation~\ref{obs:TopBot} would yield~${\Sina{v} \subseteq
		\SvOpt{v}}$ or~${\Sact{v} \subseteq \SvOpt{v}}$; 
	a contradiction since~${\Sina{v} \not\subseteq \SvOpt{v}}$ and~${\Sact{v} \not\subseteq \SvOpt{v}}$.
	If there are at least three rectangles not in~${P}$, then one among them is different from~${\rTop}$ and~${\rBot}$.
	Hence,
	in both cases, there is at least one rectangle~${r'}$ not in~${P}$
	that is different from~${\rTop}$ and~${\rBot}$.
	
	Thus,~${\SvOpt{v}}$ contains a line segment that stabs~${r'}$ and
	that does not stab any other rectangle pair in~${P}$. The line
	segment is not shorter than~${\w(r')}$.  Furthermore, for each
	pair~${(r_1,r_2) \in P}$,~${\SvOpt{v}}$ contains a line segment of
	length
	${\w(r_1)+\w(r_2) - \w(r_1 \cap r_2)}$ that stabs~${r_1}$
	and~${r_2}$.  Putting things together, we bound~${\vOpt{v}}$ from
	below by
	\begin{equation}
	\sum_{r \in \RvG{v}} \w(r) - \sum_{(r_1,r_2) \in P} \w(r_1 \cap r_2)\formulaPunctuationSpace.\hfill\label{eq:lower-bound}
	\end{equation}
	The first sum is independent of~${\SvOpt{v}}$.  Thus,
	bound~\eqref{eq:lower-bound} is minimized by maximizing the second
	sum.  Let's examine the value of~${\w(r_1 \cap r_2)}$ for
	various pairs~${(r_1,r_2)}$.  For
	the unique pair containing~${\rTop}$, the value is~${1}$, for the
	unique pair containing~${\rBot}$, it is~${2}$. For all the other
	pairs, by construction it is~${n+3}$.  Thus, the second sum is maximized
	when~${\rTop}$ is the only rectangle not in~${P}$.  This is exactly
	the case for~${\Sina{v}}$. As~${\Sina{v}}$ contains one line segment
	for each pair and one line segment for~${\rTop}$, and each line
	segment is only as long as necessary,~${\lenIna{v}}$ reaches
	bound~\eqref{eq:lower-bound} and is consequently optimal.
	
	Due to the assumption of the lemma, there is a
	rectangle~$r'$ that is not in~$P$.  Note that the second sum
	is maximized if~$r'$ is the \emph{only} rectangle not in~$P$.  
	Compared to the optimal situation when $\SvOpt{v}=\Sina{v}$,
	the value of the sum changes by~${1 - (n+3)}$: replace the
	pair with~$r'$ in $\Sina{v}$ by the pair with~$\rTop$.
	Consequently, under the assumption of the lemma,
	bound~\eqref{eq:lower-bound} for~${\SvOpt{v}}$ is at 
	least~${\lenIna{v} + n + 2 = \lenAct{v} + n + 1}$, and the claim
	follows.
\end{proof}

\newpage
\newcounter{actOrExp}
\setcounter{actOrExp}{\thetheorem}
\newcommand{\lemmaActOrExp}{%
  Exactly one of the following three statements holds:
  \begin{listRoman}
  \item~${\SvOpt{v}=\Sina{v}}$, or
  \item~${\SvOpt{v}=\Sact{v}}$, or
  \item~${\vOpt{v} > \lenIna{v}+n}$.
  \end{listRoman}
}
\begin{lemma}
  \label{lem:actOrExp}
  \lemmaActOrExp
\end{lemma}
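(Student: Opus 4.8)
The plan is to derive this lemma as a fairly short corollary of Lemma~\ref{lem:containment} together with the two observations, by a case analysis on the containment relations between $\SvOpt{v}$ and the two canonical stabbings $\Sina{v}, \Sact{v}$. First I would observe that the three listed statements are pairwise mutually exclusive: $\Sina{v}$ and $\Sact{v}$ are distinct sets (they partition $S^v$ into odd- and even-numbered segments, and $S^v$ is nonempty), and $\vOpt{v} > \lenIna{v}+n$ contradicts $\SvOpt{v}=\Sina{v}$ (which gives $\vOpt{v}=\lenIna{v}$) and also contradicts $\SvOpt{v}=\Sact{v}$ (which gives $\vOpt{v}=\lenAct{v}=\lenIna{v}+1 \le \lenIna{v}+n$, using $n\ge 1$). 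So it suffices to prove that at least one of the three holds.

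**Main case split.** The key dichotomy is whether $\Sina{v}\subseteq\SvOpt{v}$ or $\Sact{v}\subseteq\SvOpt{v}$, or neither. If neither holds, Lemma~\ref{lem:containment} immediately gives $\vOpt{v} > \lenAct{v}+n = \lenIna{v}+n+1 > \lenIna{v}+n$, so statement~(iii) holds. The remaining work is the case where one of the two canonical stabbings is contained in $\SvOpt{v}$; I would argue that containment forces equality (unless (iii) holds). The intuition: $\Sina{v}$ and $\Sact{v}$ are each already feasible stabbings of $\RvG{v}$, so if, say, $\Sina{v}\subseteq\SvOpt{v}$ but $\Sina{v}\ne\SvOpt{v}$, then $\SvOpt{v}$ contains a segment $s$ not in $\Sina{v}$; this $s$ stabs some rectangle of $\RvG{v}$ but is ``redundant'' for feasibility, so by Observation~\ref{obs:shortest} (if $s$ avoids $\rTop,\rBot$) it has length at least $n+3$, and the total length of $\SvOpt{v}$ then exceeds that of $\Sina{v}$ by at least $n+3 > n$, giving (iii). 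The cases where the extra segment does touch $\rTop$ or $\rBot$ need Observation~\ref{obs:TopBot}: such a segment can be assumed to lie flush against the top/bottom edge, and one checks it is then already a (sub)segment of the canonical stabbing, or that keeping both it and the canonical segment still forces length growth of at least a constant.

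**The delicate point.** I expect the main obstacle to be the bookkeeping in the ``containment but not equality'' subcase: one must carefully argue that the surplus segment truly adds length that is not already ``absorbed'' by merging with existing canonical segments, and that this surplus is at least $n+1$ (not merely positive) so that it dominates the additive $+n$ in statement~(iii). The cleanest route is probably to avoid this by noting that $\SvOpt{v}$ is derived from an \emph{optimum} solution $\Sopt$ (no strictly shorter feasible solution exists): if $\Sina{v}\subsetneq\SvOpt{v}$ then replacing $\SvOpt{v}$ by $\Sina{v}$ inside $\Sopt$ would strictly shorten $\Sopt$ unless the extra segments of $\SvOpt{v}$ are needed to stab edge gadgets --- but an edge gadget is stabbed only through a segment of $\Sact{\cdot}$, not $\Sina{\cdot}$, and has width at least $n+3$, so such a helper segment again contributes $\ge n+3$ to $\vOpt{v}$ over the baseline $\lenIna{v}$. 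Either way one lands in statement~(iii). Finally I would assemble the cases: neither-containment $\Rightarrow$ (iii); $\Sina{v}\subseteq\SvOpt{v}$ $\Rightarrow$ (i) or (iii); $\Sact{v}\subseteq\SvOpt{v}$ $\Rightarrow$ (ii) or (iii); and since these three antecedents are exhaustive, the lemma follows.
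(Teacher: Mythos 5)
Your proposal matches the paper's proof in both structure and substance: the ``neither contained'' case is dispatched by Lemma~\ref{lem:containment}, and in the proper-containment case the extra segment either stabs only $\rTop$ or $\rBot$ and is removable via Observation~\ref{obs:TopBot} (contradicting optimality of $\Sopt$), or stabs a middle rectangle and contributes at least $n+3$ by Observation~\ref{obs:shortest}, yielding statement~(iii). The ``delicate point'' you flag is not an issue because $\Sina{v}\subsetneq\SvOpt{v}$ is set containment, so the surplus segment's length simply adds to $\lenIna{v}$; otherwise your argument is the paper's.
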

\begin{proof}
	Suppose that~${\Sina{v}}$ is a proper subset of~${\SvOpt{v}}$ for some
	vertex~${v}$, and let~${s\in\SvOpt{v}\setminus\Sina{v}}$.  
	Consider the 
	segment~${s'\in\Sopt}$ that \enquote{induces}~$s$, that is,~${s=s'\cap \bigcup \RvG{v}}$. 
	If~${s}$ stabs
	only a rectangle in~${\{\rTop,\rBot\}}$, then, by
	Observation~\ref{obs:TopBot},~${s'}$ stabs no other rectangle
	in~${R}$.  Hence, we can safely remove~${s'}$ from~${\Sopt}$,
	as~${\rTop}$ and~${\rBot}$ are already stabbed in~${\Sina{v}}$; a
	contradiction to the optimality of~${\Sopt}$.  Consequently,~${s}$
	must stab a rectangle in~${\RvG{v}\setminus \{\rTop, \rBot\}}$.  By
	Observation~\ref{obs:shortest}, we get
	\[{\vOpt{v} \ge \lenIna{v} + n + 3 > \lenIna{v} +
		n}\formulaPunctuationSpace.\]
	The same holds for~${\Sact{v} \subsetneq \SvOpt{v}}$.
	By Lemma~\ref{lem:containment}, this yields the claim.
\end{proof}

Now, we show that~${\Sopt}$ forces each vertex gadget to
be either active or inactive. 

\begin{lemma}\label{lem:actOrIna}
  In~${\Sopt}$, each vertex gadget is either active or inactive.
\end{lemma}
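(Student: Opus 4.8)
The plan is a proof by contradiction. Suppose some vertex gadget~$\RvG{v}$ is neither active nor inactive in~$\Sopt$. By Lemma~\ref{lem:actOrExp} this means~$\vOpt{v} > \lenIna{v} + n$, and the goal is to show that this forces~$\opt_R$ above the available budget. Since~$k$ was chosen with~$k \le n$ and~$\opt_R \le \fixC + k$, it suffices to derive~$\opt_R > \fixC + n$.

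The main tool will be the global lower bound
\[
  \opt_R \;\ge\; \sum_{v \in V} \vOpt{v} \;+\; \sum_{e \in E}\bigl(\w(\ReG{e}) - n - 3\bigr),
\]
obtained by splitting~$\Sopt$ along the point set~$Y := \bigcup_{v\in V}\bigcup \RvG{v}$ covered by the vertex gadgets. Since distinct vertex gadgets are disjoint and (after merging overlapping segments) we may assume that no two segments of~$\Sopt$ overlap, the portion of~$\Sopt$ lying inside~$Y$ has total length exactly~$\sum_{v} \vOpt{v}$. For the portion outside~$Y$ I would charge the edge gadgets: for each~$e \in E$ fix a segment of~$\Sopt$ that stabs~$\ReG{e}$; its intersection with~$\ReG{e}$ has length~$\w(\ReG{e})$, and --- this is the crucial point --- because~$\ReG{e}$ meets the vertex gadgets only along two length-$(n+3)$ pieces sitting on its top and bottom edge, and is otherwise interior-disjoint from all vertex gadgets, while a horizontal segment lives at a single height, this stabbing segment can meet~$Y$ inside~$\ReG{e}$ in at most~$n+3$ units. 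Hence it contributes at least~$\w(\ReG{e}) - n - 3$ to~$\Sopt$ outside~$Y$, and, the~$\ReG{e}$ being pairwise interior-disjoint, these contributions do not overlap. Summing the inside and outside portions yields the displayed inequality.

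Given this bound, the conclusion is immediate. By Lemma~\ref{lem:actOrExp} every vertex gadget satisfies~$\vOpt{v} \ge \lenIna{v}$, so the offending gadget~$\RvG{v}$ contributes an extra~$\vOpt{v} - \lenIna{v} > n$, giving~$\sum_{v\in V}\vOpt{v} > \sum_{v\in V}\lenIna{v} + n$. Plugging this into the lower bound and recalling that~$\fixC = \sum_{e\in E}(\w(\ReG{e}) - n - 3) + \sum_{v\in V}\lenIna{v}$ yields~$\opt_R > \fixC + n \ge \fixC + k$, contradicting~$\opt_R \le \fixC + k$. Hence no such gadget exists, and every vertex gadget is active or inactive in~$\Sopt$.

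I expect the delicate step to be the tightness of the decomposition: a naive accounting would only save~$\w(\ReG{e}) - 2(n+3)$ per edge gadget --- one overlap strip at each of the two incident vertex gadgets --- which is useless here because~$|E|\,(n+3)$ dwarfs the~$n$-sized slack we are fighting for. What makes the argument go through is that a \emph{single} horizontal stabbing segment can exploit only \emph{one} of the two boundary contacts of~$\ReG{e}$ with the vertex gadgets; this is exactly the geometry the construction was set up to provide (each edge gadget touches precisely one rectangle of each incident vertex gadget, along an~$(n+3)$-segment of its top resp.\ bottom edge, and nowhere else). The remaining bookkeeping --- that restricting~$\Sopt$ to~$\bigcup \RvG{v}$ really has length~$\vOpt{v}$, and that the charged sub-segments of distinct edge gadgets are disjoint --- is routine once we assume, without loss of generality, that~$\Sopt$ has no two overlapping segments.
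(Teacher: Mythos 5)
Your proof is correct and follows essentially the same route as the paper: lower-bound $\opt_R$ by splitting $\Sopt$ into the portion inside the (pairwise disjoint) vertex gadgets, which has length $\sum_v \vOpt{v} > \sum_v \lenIna{v} + n$ via Lemma~\ref{lem:actOrExp}, plus the portion outside, where each edge gadget forces at least $\w(\ReG{e}) - n - 3$ because a single horizontal stabbing segment can overlap only one of its two incident vertex gadgets. The paper's proof makes exactly this decomposition (it calls the outside portion $\Sopt^{\textrm{out}}$), so there is nothing further to compare.
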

\begin{proof}
  Suppose that there is a vertex gadget~${\RvG{u}}$ that is neither
  active nor inactive in~${\Sopt}$.  This
  implies~${\opt_R > \fixC + n}$ and contradicts our previous
  assumption~${\opt_R \le \fixC + k \le \fixC + n}$.
	
  To this end, we give a lower bound on~${\opt_R}$.
  Since~${\RvG{u}}$ is neither active nor inactive,~${\SvOpt{u}>
    \lenIna{u}+n}$ by Lemma~\ref{lem:actOrExp}. Thus,
  $\sum_{v \in V} \len{\SvOpt{v}} > \sum_{v\in V}
  \lenIna{v} + n\formulaPunctuationSpace.$
  Let~${\Sopt^{\textrm{out}}}$ be the set of all segment fragments
  of~${\Sopt}$ lying outside of~${\bigcup_{v\in V} \SvOpt{v}}$.
  Each edge gadget~${\ReG{v}}$ contains a segment fragment
  from~${\Sopt^{\textrm{out}}}$ of length at least~${\w(\ReG{v}) - n -
    3}$, since by construction it can share a line segment with only
  one of its incident vertex gadgets.  Since all edge gadgets are
  interior-disjoint, we have
  $\len{\Sopt^{\textrm{out}}} \ge \sum_{e\in E} \w(\ReG{v}) - n
  - 3$. Hence,
  \begin{alignat*}{1}
    \opt_R \; \ge & \; \len{\Sopt^{\textrm{out}}} + \sum_{v \in
      V} \len{\SvOpt{v}} \\
    > & \sum\limits_{e\in E} \left(\w(\ReG{e}) - n - 3\right) +
    \sum\limits_{v\in V} \lenIna{v} + n \;=\; c + n \formulaPunctuationSpace. \hskip97px
    \qedhere
  \end{alignat*}
\end{proof}

\begin{lemma}\label{lem:gadgetCover}
  For each edge gadget, one of its incident vertex gadgets is active
  in~${\Sopt}$.
\end{lemma}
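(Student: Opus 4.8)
The plan is to reuse the accounting from the proof of Lemma~\ref{lem:actOrIna} and to show that an uncovered edge gadget is responsible for an extra $n+3$ units of segment length, which the budget $\fixC+k$ cannot afford. Suppose, for contradiction, that some edge gadget $\ReG{e}$ with $e=uv$ has no active incident vertex gadget. By Lemma~\ref{lem:actOrIna} every vertex gadget is active or inactive, so both $\RvG{v}$ and $\RvG{u}$ are inactive, that is, $\SvOpt{v}=\Sina{v}$ and $\SvOpt{u}=\Sina{u}$. Recall that $k\le n$ and $\opt_R\le\fixC+k$.

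First I would pin down where a stabber of $\ReG{e}$ can lie. Since $\Sopt$ is feasible, some segment $s\in\Sopt$ stabs $\ReG{e}$; being a stabber, $s$ spans the entire horizontal extent of $\ReG{e}$ and lies at some height $h$ between the bottom and the top edge of $\ReG{e}$. By construction, the intersection of $\ReG{e}$ with the union of all vertex gadgets consists of a sub-segment of the top edge of $\ReG{e}$ lying on a segment of $\Sact{v}$ and a sub-segment of the bottom edge of $\ReG{e}$ lying on a segment of $\Sact{u}$. If $h$ is the height of the top edge of $\ReG{e}$, then $s\cap\bigcup\RvG{v}$ contains that top-edge sub-segment, hence a nonempty segment at the height of a segment of $\Sact{v}$; since the segments of $\Sact{v}$ and $\Sina{v}$ have pairwise distinct heights, this element cannot belong to $\Sina{v}$, so $\SvOpt{v}\ne\Sina{v}$, contradicting the inactivity of $\RvG{v}$. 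Symmetrically, $h$ cannot be the height of the bottom edge of $\ReG{e}$. Hence $h$ lies strictly between the two edges, and over the horizontal extent of $\ReG{e}$ the segment $s$ avoids $\RvG{v}$ and $\RvG{u}$, and hence -- since $\ReG{e}$ meets no other vertex gadget -- every vertex gadget. Consequently the length-$\w(\ReG{e})$ fragment of $s$ lying over $\ReG{e}$ belongs entirely to $\Sopt^{\textrm{out}}$ -- an excess of $n+3$ over the bound $\w(\ReG{e})-n-3$ used for a generic edge gadget in Lemma~\ref{lem:actOrIna}.

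Then the bookkeeping closes the argument. Exactly as in the proof of Lemma~\ref{lem:actOrIna}, each edge gadget $\ReG{e'}$ contributes a fragment of length at least $\w(\ReG{e'})-n-3$ to $\Sopt^{\textrm{out}}$, and since the edge gadgets are interior-disjoint these fragments are pairwise disjoint; replacing the contribution of the special gadget $\ReG{e}$ by its improved value $\w(\ReG{e})$ yields $\len{\Sopt^{\textrm{out}}}\ge\sum_{e'\in E}\bigl(\w(\ReG{e'})-n-3\bigr)+(n+3)$. Moreover $\len{\SvOpt{w}}\ge\lenIna{w}$ for every $w\in V$, since every vertex gadget is active or inactive. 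Summing, $\opt_R\ge\len{\Sopt^{\textrm{out}}}+\sum_{w\in V}\len{\SvOpt{w}}\ge\fixC+n+3>\fixC+n\ge\fixC+k\ge\opt_R$, a contradiction.

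The step I expect to be the main obstacle is the geometric claim in the second paragraph: arguing from the specifics of the reduction that the top and bottom edges of every edge gadget are effectively reserved for the active configurations of its two incident vertex gadgets, so that a stabber of an uncovered edge gadget must be routed through the empty interior of the gadget. Once that is settled, everything else is the charging scheme already established in the proof of Lemma~\ref{lem:actOrIna}.
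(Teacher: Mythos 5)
Your proof is correct, and its core coincides with the paper's: the decisive observation in both is that a stabber of~$\ReG{e}$ lying on the top (resp.\ bottom) edge of~$\ReG{e}$ leaves a nonempty trace on~$\RvG{v}$ (resp.~$\RvG{u}$) at the height of a segment of~$\Sact{v}$, which is incompatible with~$\SvOpt{v}=\Sina{v}$, so an inactive incident gadget is contradicted. Where you diverge is in how the remaining case is eliminated. The paper simply asserts ``without loss of generality'' that the stabber lies on the top or bottom edge of~$\ReG{e}$ and stops there; you instead rule out a stabber at a strictly interior height by a quantitative charging argument: such a stabber contributes its full length~$\w(\ReG{e})$ to~$\Sopt^{\textrm{out}}$ rather than the generic~$\w(\ReG{e})-n-3$ used in the proof of Lemma~\ref{lem:actOrIna}, and rerunning that accounting gives $\opt_R\ge\fixC+n+3>\fixC+k\ge\opt_R$. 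This is a legitimate alternative: it costs an extra paragraph of bookkeeping but makes explicit something the paper's one-word WLOG leaves to the reader (unpacking that WLOG requires noting that an interior stabber stabs only~$\ReG{e}$ and could be slid to the top edge at no cost, which itself interacts with the active/inactive dichotomy of Lemma~\ref{lem:actOrIna}). Your version is self-contained and, if anything, more rigorous; the ``main obstacle'' you flag --- that the top and bottom edges of~$\ReG{e}$ are reserved for the active configurations of the incident gadgets --- is exactly what the construction guarantees (``the top edge of~$\ReG{e}$ touches a segment of~$\Sact{v}$\,'', and interior-disjointness confines the intersection with the vertex gadgets to those two boundary pieces), so no gap remains.
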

\begin{proof}
  Suppose that for an edge gadget~${\ReG{e}}$ both vertex gadgets
  are not active in~${\Sopt}$.  By Lemma~\ref{lem:actOrIna}, they are
  inactive.  Without loss of generality, the line segment~${s}$
  stabbing~${\ReG{e}}$ lies on the top or bottom edge
  of~${\ReG{e}}$. 
  Then~${s}$ intersects
  a vertex gadget to the left or right, say~${\RvG{v}}$, and hence~${\SvOpt{v}
    \not= \Sina{v}}$ and~${\SvOpt{v} \not= \Sact{v}}$. A contradiction
  given Lemma~\ref{lem:actOrIna}.
\end{proof}

\newcounter{gadgetCost}
\setcounter{gadgetCost}{\thetheorem}
\newcommand{\lemmaGadgetCost}{%
  $\opt_R = \fixC + k'$, where~${k'}$ is the number of active vertex
  gadgets in~${\Sopt}$.
}
\begin{lemma}
  \label{lem:gadgetCost}
  \lemmaGadgetCost
\end{lemma}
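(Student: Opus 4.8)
The plan is to prove the identity by establishing matching upper and lower bounds on $\opt_R$, reusing the two technical cores already available: the cost-exact construction from the proof of Lemma~\ref{lem:redLR} and the inside/outside length decomposition from the proof of Lemma~\ref{lem:actOrIna}. For the upper bound I would first observe that, by Lemma~\ref{lem:gadgetCover}, every edge of $G$ is incident to a vertex whose gadget is active in $\Sopt$; hence the set $C$ of vertices with active gadgets is a vertex cover of $G$, and $|C| = k'$ by definition of $k'$. Feeding $C$ into the construction underlying the proof of Lemma~\ref{lem:redLR} -- set the gadgets of $C$ to active and all others to inactive, and for each edge gadget $\ReG{e}$ extend one segment of a chosen incident active gadget by $\w(\ReG{e}) - n - 3$ -- yields a feasible stabbing of $R$ of total length exactly $\sum_{v\in V}\lenIna{v} + k' + \sum_{e\in E}\bigl(\w(\ReG{e}) - n - 3\bigr) = \fixC + k'$. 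Therefore $\opt_R \le \fixC + k'$.

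For the lower bound I would reuse the decomposition argument already carried out inside the proof of Lemma~\ref{lem:actOrIna}: split $\Sopt$ into the fragments $\SvOpt{v}$ lying in the (pairwise disjoint) vertex gadgets $\RvG{v}$ and the remaining fragments $\Sopt^{\textrm{out}}$, so that disjointness gives $\opt_R = \len{\Sopt} \ge \len{\Sopt^{\textrm{out}}} + \sum_{v\in V}\len{\SvOpt{v}}$. By Lemma~\ref{lem:actOrIna}, each $\RvG{v}$ is active or inactive in $\Sopt$, so $\len{\SvOpt{v}}$ equals $\lenAct{v} = \lenIna{v}+1$ for the $k'$ active gadgets and $\lenIna{v}$ otherwise; summing, $\sum_{v\in V}\len{\SvOpt{v}} = \sum_{v\in V}\lenIna{v} + k'$. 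Exactly as in Lemma~\ref{lem:actOrIna}, every edge gadget $\ReG{e}$ forces a fragment of $\Sopt^{\textrm{out}}$ of length at least $\w(\ReG{e}) - n - 3$ (it can share a stabbing segment with at most one incident vertex gadget, which covers only $n+3$ of its width), and interior-disjointness of the edge gadgets yields $\len{\Sopt^{\textrm{out}}} \ge \sum_{e\in E}\bigl(\w(\ReG{e}) - n - 3\bigr)$. Plugging the last two estimates into the first gives $\opt_R \ge \fixC + k'$, and together with the upper bound this proves $\opt_R = \fixC + k'$.

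There is no genuinely hard step left: the cost-exact construction from a vertex cover and the inside/outside length decomposition do all the heavy lifting, and both are already established. The only place needing a little care is the upper bound, where one must (i) invoke Lemma~\ref{lem:gadgetCover} to be sure the active gadgets really do correspond to a vertex cover of the right size, and (ii) check that applying the construction of Lemma~\ref{lem:redLR} to this particular vertex cover incurs no slack -- i.e. that beyond the base length $\sum_{v\in V}\lenIna{v} + k'$ the only additional length is the per-edge extension $\w(\ReG{e}) - n - 3$ -- so that the bound one obtains is $\fixC + k'$ and not something larger.
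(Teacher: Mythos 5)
Your proof is correct, but it is organized differently from the paper's. The paper proves the identity by a direct exact accounting: it argues that each edge gadget $\ReG{e}$ is stabbed by exactly one segment $s$ in $\Sopt$, that $s$ contains as a subsegment the unique segment of $\Sact{v}$ of the (necessarily active) incident vertex gadget it overlaps, so that $\len{s}=\w(r)+\w(\ReG{e})-n-3$, and then sums these exact lengths together with $\lenAct{v}=\lenIna{v}+1$ to get $\opt_R=\fixC+k'$ in one pass. You instead sandwich $\opt_R$: the upper bound $\opt_R\le\fixC+k'$ comes from feeding the vertex cover of active gadgets (guaranteed by Lemma~\ref{lem:gadgetCover}) back into the construction of Lemma~\ref{lem:redLR}, and the lower bound $\opt_R\ge\fixC+k'$ comes from the inside/outside decomposition already used in the proof of Lemma~\ref{lem:actOrIna}, refined by the fact that exactly $k'$ of the gadget restrictions have length $\lenIna{v}+1$. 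Your route buys some robustness: it avoids having to justify the paper's exactness claims (``stabbed by only one line segment,'' ``this segment is a subsegment of $s$'') about the structure of an arbitrary optimum, replacing them with two inequalities each of which reuses an argument that is already fully established elsewhere in the section. The paper's version is shorter on the page but leans harder on optimality-based structural assertions. Both are valid; yours is arguably the cleaner modularization.
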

\begin{proof}
	Consider any edge gadget~${\ReG{e}}$.  It is stabbed by only one
	line segment~${s}$, and, without loss of generality, the line
	segment~${s}$ lies on the top or bottom edge
	of~${\ReG{e}}$. Thus, it intersects a vertex gadget~${\RvG{v}}$ on a
	rectangle~${r}$. Then~${\RvG{v}\ne \Sina{v}}$ and~${\RvG{v}}$ is
	active according to Lemma~\ref{lem:actOrIna}.  By our construction
	of~${R}$, there is exactly one segment in~${\Sact{v}}$
	intersecting~${\ReG{e}}$, which also stabs~${r}$. Hence, this
	segment is a subsegment of~${s}$ and we have
	\[{\len{s} = \w(r) + \w(\ReG{e}) - \w(r \cap \ReG{e}) = \w(r) +
		\w(\ReG{e}) - n-3}\formulaPunctuationSpace.\]
	Thus, by Lemma~\ref{lem:actOrIna} and~${\lenAct{v}=\lenIna{v}+1}$,
	\begin{equation*}
	\opt_R~~=\enspace~ \sum\limits_{e\in E} \left(\w(\ReG{e}) -
	n - 3\right)~~ + \sum\limits_{v\in V} \lenIna{v}~~ +~~ k'
	~~=\enspace~ c+ k'
	\end{equation*}
	where~${k'}$ is the number of active vertex gadgets in~${\Sopt}$.
\end{proof}

Given~${\Sopt}$, we put exactly those vertices in the vertex cover
whose vertex gadgets are active.  By Lemma~\ref{lem:gadgetCover}, this
yields a vertex cover of~${G}$.  By Lemma~\ref{lem:gadgetCost},
the size of the vertex cover is exactly~${\opt_R - \fixC}$, which is
bounded from above by~${k}$ given that~${\opt_R\le c + k}$.
\begin{lemma}\label{lem:redRL}
  $\opt_R \le \fixC + k$ implies that $\opt_G \le k$.
\end{lemma}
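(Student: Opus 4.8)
The plan is to read a vertex cover of $G$ directly off an optimum solution $\Sopt$ to $R$. First I would dispose of the trivial case: if $k \ge n$, then $\opt_G \le n \le k$ and there is nothing to prove, so from now on I assume $k < n$; this is exactly the regime $\opt_R \le \fixC + k \le \fixC + n$ under which Lemmas~\ref{lem:actOrIna}--\ref{lem:gadgetCost} were established, so all of them are at my disposal. Fix an optimum solution $\Sopt$ to $R$ and let $C := \{\, v \in V \mid \RvG{v} \text{ is active in } \Sopt \,\}$. By Lemma~\ref{lem:actOrIna} every vertex gadget is either active or inactive in $\Sopt$, so $C$ is well defined and $|C|$ equals the number $k'$ of active vertex gadgets in $\Sopt$.

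Next I would verify that $C$ is a vertex cover of $G$ and bound its size. Let $e = uv$ be any edge of $G$; by Lemma~\ref{lem:gadgetCover} at least one of the incident vertex gadgets $\RvG{v}, \RvG{u}$ is active in $\Sopt$, i.e.\ $u \in C$ or $v \in C$. Hence $C$ meets every edge, so $\opt_G \le |C| = k'$. It then remains to bound $k'$, and this is precisely the content of Lemma~\ref{lem:gadgetCost}: $\opt_R = \fixC + k'$, whence $k' = \opt_R - \fixC \le (\fixC + k) - \fixC = k$, where the last inequality is the hypothesis of the lemma. Combining the two bounds gives $\opt_G \le k' \le k$, as claimed.

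I do not expect a genuine obstacle here: all the difficulty has been absorbed into the earlier structural lemmas, which already show that an optimum solution of length at most $\fixC + n$ cannot leave a vertex gadget in a ``mixed'' state (Lemma~\ref{lem:actOrIna}), must place an active gadget on every edge gadget (Lemma~\ref{lem:gadgetCover}), and has length exactly $\fixC$ plus the number of active gadgets (Lemma~\ref{lem:gadgetCost}). The only points to keep straight are (i) using the trivial case $k \ge n$ so that the standing assumption $\opt_R \le \fixC + k$ indeed forces $\opt_R \le \fixC + n$, and (ii) observing that the argument is not circular -- the hypothesis $\opt_R \le \fixC + k$ is assumed first, and only then are the structural lemmas (which themselves rest on it) applied. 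Together with Lemma~\ref{lem:redLR} this yields the equivalence $\opt_G \le k \iff \opt_R \le \fixC + k$, completing the reduction and hence the \NP-hardness proof.
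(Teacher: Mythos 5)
Your proof is correct and follows essentially the same route as the paper: take the active vertex gadgets as the cover, invoke Lemma~\ref{lem:gadgetCover} for feasibility and Lemma~\ref{lem:gadgetCost} for the size bound $k' = \opt_R - \fixC \le k$. Your explicit handling of the trivial case $k \ge n$ (so that the standing assumption $\opt_R \le \fixC + k \le \fixC + n$ behind the structural lemmas applies) is a small but welcome clarification that the paper leaves implicit.
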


With Lemmas~\ref{lem:redLR} and~\ref{lem:redRL}, 
we conclude that \Stabbing is \NP-hard. 

\section{APX-Hardness of Cardinality and Constrained Stabbing}
\label{sec:stab-apx}

In this section, we consider \CardinalityStabbing and
\ConstrainedStabbing.  The latter is the variant of \Stabbing, where
the solution is constrained to be some subset of a given set of line
segments.  By reducing a restricted \APX-hard variant of
\problemName{Set Cover} to these problems, we show that neither
\CardinalityStabbing nor \ConstrainedStabbing admits a PTAS.  The
following lemma follows directly from Definition~1.2 and Lemma~1.3 by
Grant and Chan \cite{chan-grant-exactAlgos}.

\begin{lemma}[Grant and Chan \cite{chan-grant-exactAlgos}]
	\SPSC is \APX-hard, where \SPSC is defined as unweighted
	\problemName{Set Cover} with the following properties: The input is
	a family~${\mathcal{S}}$ of subsets of a universe~${\mathcal{U} = A
		\cup W \cup X \cup Y \cup Z}$ that comprises the disjoint sets
	(which are part of the input)~${A = \{a_1, \dots, a_n\}}$,~${W =
		\{w_1, \dots, w_m\}}$,~${X = \{x_1, \dots, x_m\}}$,~${Y = \{y_1,
		\dots, y_m\}}$, and~${Z = \{z_1, \dots, z_m\}}$ where~${2n = 3m}$.
	The family~${\mathcal{S}}$ consists of~${5m}$ sets and satisfies the
	following two conditions:
	\begin{itemize}
		\item For every~${t}$ with~${1\le t \le m}$, there are
		integers~${i}$ and~${j}$ with~${1 \le i < j < k \le n}$ such
		that~${\mathcal{S}}$ contains the sets~${\{a_i, w_t\}}$,~${\{w_t,
			x_t\}}$,~${\{a_j, x_t, y_t\}}$,~${\{y_t, z_t\}}$, and~${\{a_k,
			z_t\}}$.  (See \fig~\ref{fig:specialExample}.)
		\item For every~${i}$ with~${1 \le i \le n}$, the element~${a_i}$ is
		in exactly two sets in~${\mathcal{S}}$.
	\end{itemize}
\end{lemma}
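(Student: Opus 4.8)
The statement is attributed to Grant and Chan~\cite{chan-grant-exactAlgos}, and the sentence just before it already pins down the route: it is to be \emph{read off} from their Definition~1.2 together with their Lemma~1.3, not re-proved from scratch. So the plan is to recall their construction and match it, line by line, to the description above. Grant and Chan's Definition~1.2 introduces a structured, unweighted \problemName{Set Cover} family --- sets of size at most three, organized into disjoint five-set ``chains'' whose only shared elements form one global part of the universe --- and their Lemma~1.3 proves its \APX-hardness via an L-reduction from a bounded-occurrence optimization problem (a restricted variant of \problemName{Vertex Cover} / \problemName{Max-SAT}). The point of restating it in the very explicit form above is that this ``special'' chain shape is exactly what later makes the instances geometrically realizable, here on the way to \ConstrainedStabbing and \CardinalityStabbing.

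First I would line up the ground sets: their ``global'' part plays the role of $A=\{a_1,\dots,a_n\}$, and the four ``chain-local'' elements of the $t$-th chain play the roles of $w_t,x_t,y_t,z_t$, so that $W,X,Y,Z$ each have $m$ elements and are pairwise disjoint and disjoint from $A$. Next I would check that the family is $m$ chains of five sets each --- hence $|\mathcal{S}|=5m$ --- and that, after renaming indices, the $t$-th chain is $\{a_i,w_t\}$, $\{w_t,x_t\}$, $\{a_j,x_t,y_t\}$, $\{y_t,z_t\}$, $\{a_k,z_t\}$ with $i<j<k$; in particular the only size-three sets are the middle ones $\{a_j,x_t,y_t\}$, and each of $w_t,x_t,y_t,z_t$ occurs only within chain~$t$. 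The two numerical conditions are then bookkeeping: each chain uses three \emph{distinct} elements of $A$, and in Grant and Chan's construction every $a_i$ lies in exactly two sets of $\mathcal{S}$ (equivalently, is used by exactly two chains); double-counting incidences between $A$ and the chains gives $3m$ on one side and $2n$ on the other, so $2n=3m$, and the second bullet is the same fact restated. Finally I would confirm that the reduction is an L-reduction, so that \APX-hardness transfers: both the source optimum and the optimum of the resulting \SPSC instance are $\Theta(m)$, and any $c$-approximate cover maps back to a source solution losing only an additive $O(m)$, i.e.\ a constant factor of the optimum.

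The main obstacle is purely expository: Grant and Chan may phrase Definition~1.2 in a slightly different but equivalent language (say, as a constraint graph or a gadget rather than the explicit chains), so the real work is to unfold that description and exhibit the five named sets per chain and the two conditions verbatim. A secondary point is the precise size pattern $(2,2,3,2,2)$ along each chain with the size-three set in the middle; if their instances do not literally have this shape, I would force it by adding a constant number of private dummy elements and sets per chain, which perturbs every optimum by at most an additive $O(m)$ and hence preserves the \APX-gap, leaving the lemma intact. (Alternatively, one could give a self-contained proof by reducing from a bounded-degree \problemName{Vertex Cover} variant directly, using the five-set chain as the edge/vertex gadget, but since the result is available off the shelf, matching the cited construction is the shorter path.)
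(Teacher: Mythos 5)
Your proposal matches the paper exactly: the paper gives no proof of this lemma, stating only that it ``follows directly from Definition~1.2 and Lemma~1.3 by Grant and Chan,'' which is precisely the citation-and-matching route you describe. Your additional bookkeeping (aligning the ground sets, the $5m$ chains, and the incidence count $2n=3m$) is a correct elaboration of that same approach rather than a different argument.
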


\begin{figure}[tb]
	\begin{minipage}{0.45\linewidth}
		\centering \subcaptionbox{\label{fig:specialExample}%
			For~${1 \le t \le m}$, there are five sets. Each element appears
			in exactly two sets.}%
		{\quad\includegraphics{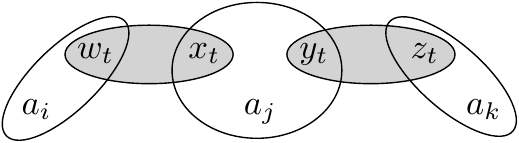}\quad}%
		\vspace*{4ex}%
		
		\subcaptionbox{\label{fig:allEncoding}%
			Distinct rectangles~${a_1, \dots, a_n}$ ($a_1$ is thick) have
			intersection~${I}$ (shaded) which is subdivided into areas~${1,
				\dots, m}$.}%
		{\quad\includegraphics[page=6]{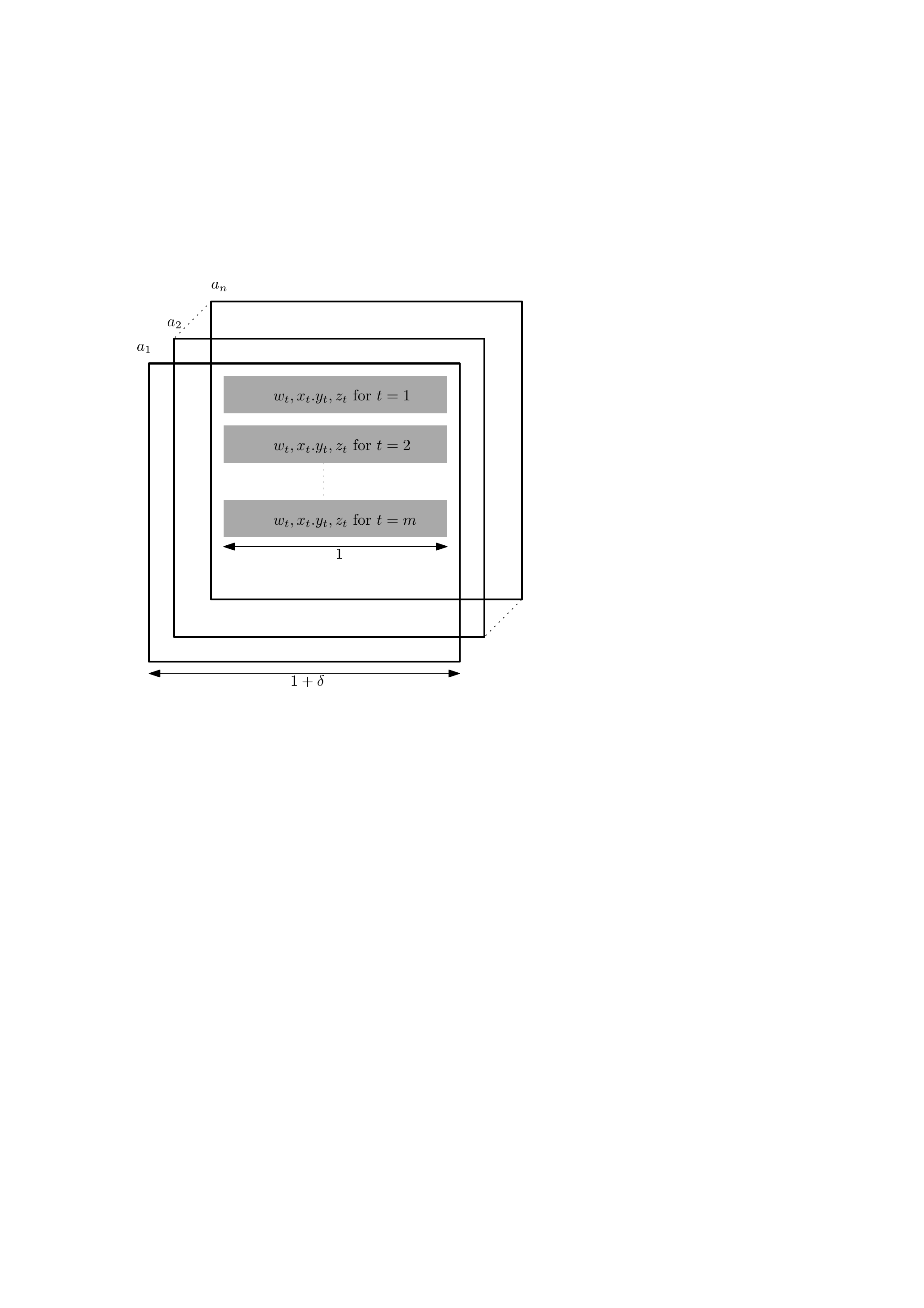}\quad}%
	\end{minipage}%
	\hfill
	\begin{minipage}{0.5\linewidth}
		\centering \subcaptionbox{\label{fig:encoding}%
			For~${1 \le t \le m}$, there are five line segments (thick gray
			horizontal line segments). Observe that each line segment stabs
			exactly one of~${\{a_i, w_t\}}$,~${\{w_t, x_t\}}$,~${\{a_j, x_t,
				y_t\}}$,~${\{y_t, z_t\}}$, and~${\{a_k, z_t\}}$.%
		}%
		{\qquad\includegraphics[page=4]{apxHardnessForGivenSegments}\qquad\vspace{7pt}}%
	\end{minipage}
	\caption{Encoding of \SPSC via \Stabbing.}%
\end{figure}

We begin with \CardinalityStabbing.

\begin{theorem}
	\label{thm:apx-card-stab}
	\CardinalityStabbing is \APX-hard.
\end{theorem}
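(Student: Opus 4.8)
The plan is to give a reduction from \SPSC, which is \APX-hard by the preceding lemma, to \CardinalityStabbing, and to check that it preserves optimum values \emph{exactly}; since this makes it (trivially) an L-reduction, \APX-hardness transfers, and in particular \CardinalityStabbing admits no PTAS unless~${\runtimeclass{P}=\NP}$.

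Given a \SPSC instance with universe~${\mathcal{U}=A\cup W\cup X\cup Y\cup Z}$ and family~${\mathcal{S}}$ of~$5m$ sets, I would build a \CardinalityStabbing instance~$(R,F)$ with one rectangle in~$R$ for every element of~${\mathcal{U}}$ (so~$n+4m$ rectangles) and one horizontal segment~${\sigma_s\in F}$ for every set~${s\in\mathcal{S}}$ (so~$5m$ segments), choosing coordinates so that~${\sigma_s}$ stabs \emph{exactly} the rectangles corresponding to the elements of~$s$. Concretely, the rectangles~${a_1,\dots,a_n}$ pairwise overlap in a common window~$I$; I subdivide~$I$ into~$m$ horizontal slabs, one per index~${t\in[m]}$, place the four slim rectangles~${w_t,x_t,y_t,z_t}$ inside slab~$t$, and route the five segments of index~$t$ inside slab~$t$ as in \figs~\ref{fig:allEncoding} and~\ref{fig:encoding}. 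Recalling that a segment stabs a rectangle iff it horizontally spans the rectangle's $x$-range and vertically meets it, the segment encoding~${\{a_i,w_t\}}$ reaches exactly from the left to the right edge of~${a_i}$ at the height of~${w_t}$ in slab~$t$, the segment encoding the $3$-element set~${\{a_j,x_t,y_t\}}$ spans~${a_j}$ at a height common to~${x_t}$ and~${y_t}$, and so on. To ensure that a segment spanning~${a_i}$ does not also span some~${a_{i'}}$, I stagger the vertical edges of~${a_1,\dots,a_n}$ in a staircase pattern so that within each slab no two of them have nested $x$-ranges. Since each~${a_i}$ lies in exactly two sets of~${\mathcal{S}}$, it is stabbed by exactly the two corresponding segments, possibly in two different slabs, which is harmless as each~${a_i}$ crosses all slabs vertically. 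The construction uses~${O(n+m)}$ rectangles and~$5m$ segments with polynomially bounded integer coordinates and is computable in polynomial time.

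With this construction the correspondence is exact: a subfamily~${\mathcal{S}'\subseteq\mathcal{S}}$ covers~${\mathcal{U}}$ if and only if~${\{\sigma_s : s\in\mathcal{S}'\}}$ stabs every rectangle of~$R$, and this gives a cardinality-preserving bijection between set covers of~${\mathcal{S}}$ and feasible solutions~${S\subseteq F}$ of the \CardinalityStabbing instance. Hence the two optimum values coincide, so any~${\alpha}$-approximation for \CardinalityStabbing is an~${\alpha}$-approximation for \SPSC; as \SPSC admits no PTAS unless~${\runtimeclass{P}=\NP}$, neither does \CardinalityStabbing, i.e., \CardinalityStabbing is \APX-hard.

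The main obstacle is purely geometric: verifying that \emph{all}~$5m$ incidences \emph{and} all non-incidences of \SPSC can be met simultaneously by axis-aligned rectangles and horizontal segments. The two delicate points are (i)~making the~$n$ rectangles~${a_i}$ mutually intersecting (they share~$I$) while each remains individually stabbable, which pins down the staircase of their vertical edges and the subdivision of~$I$; and (ii)~realizing, inside a single slab~$t$, the path-like gadget of index~$t$ so that the segment of the $3$-element set spans three rectangles at once, the other four segments span exactly two rectangles each, and none of the five segments vertically meets a rectangle of, or horizontally escapes into, a different slab~${t'\ne t}$. Once coordinates witnessing this are fixed, the rest is the routine bijection above.
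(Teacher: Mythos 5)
Your proposal is correct and matches the paper's own proof essentially step for step: the same element-to-rectangle and set-to-segment encoding, the same shifted family of equal-width rectangles~${a_1,\dots,a_n}$ with common intersection subdivided into~$m$ slabs hosting the gadget for index~$t$, and the same exact cost-preserving correspondence yielding \APX-hardness from \SPSC.
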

\begin{proof}
	Given a \SPSC instance~${(\mathcal{U}, \mathcal{S})}$
	with~${\mathcal{U} = A \cup W \cup X \cup Y \cup Z}$, we efficiently
	encode it as a \Stabbing instance by creating a rectangle for each
	element of the universe~${\mathcal{U}}$ and adding a line segment
	for each set of~${\mathcal{S}}$. We will achieve the property that a
	line segment corresponding to a set~${s}$ stabs exactly those
	rectangles that correspond to the elements of~${s}$.
	
	Let~$n = |A|$ and~$m=|W|$ (recall~${2n = 3m}$).  To
	encode~${(\mathcal{U}, \mathcal{S})}$, we place~${n}$ rectangles of
	equal size on one spot and then shift them one by one to the right
	such that all rectangles are distinct and their total
	intersection~${\intAs}$ is not empty.  For~${1\le i \le n}$,
	the~${i}$\thSuffix rectangle from the left corresponds to
	element~${a_i}$; see \fig~\ref{fig:allEncoding}.
	
	Next, we horizontally subdivide the intersection~${\intAs}$
	into~${m}$ areas.  For~${1\le t\le m}$, we place four thin
	rectangles inside the~${t}$\thSuffix area from the top such that the
	vertical projections of the rectangles intersect sequentially and
	only pairwise as in \fig~\ref{fig:encoding}.  From top to bottom
	they correspond to~${w_t}$,~${x_t}$,~${y_t}$, and~${z_t}$.
	
	Now, we show that our rectangle configuration allows a feasible set
	of line segments that corresponds to~${\mathcal{S}}$.  Recall the
	definition of \SPSC. For~${1 \le t \le m}$, the input contains the
	sets~${\{a_i, w_t\}}$,~${\{w_t, x_t\}}$,~${\{a_j, x_t,
		y_t\}}$,~${\{y_t, z_t\}}$ and~${\{a_k, z_t\}}$.  Consider the
	first set.  As the rectangle~${w_t}$ is inside the
	rectangle~${a_i}$, we can stab both with one line segment.  We can
	even stab them exclusively if we place the line segment above the
	rectangle~${x_t}$ and do not leave the rectangle~${a_i}$ (note that
	no rectangle of~${A}$ is contained in another one).  Hence, the line
	segment corresponds to the set~${\{a_i, w_t\}}$.  With a similar
	discussion, we can find line segments that correspond to the
	sets~${\{w_t, x_t\}}$,~${\{a_j, x_t, y_t\}}$,~${\{y_t, z_t\}}$
	and~${\{a_k, z_t\}}$, respectively; see \fig~\ref{fig:encoding}.
	
	Since the objective is to minimize the cardinality of line segments,
	there is a cost-preserving correspondence between solutions to the
	\SPSC instance and the generated \Stabbing instance.  Hence,
	\CardinalityStabbing is \APX-hard.
\end{proof}

Next, we show that there is an
\emph{$L$-reduction}~\cite{papadimitriou-approximation} from \SPSC
onto \ConstrainedStabbing (where we minimize the total segment
length).  This implies \APX-hardness
\cite{papadimitriou-approximation}.

\begin{theorem}
	\label{thm:apx-constr-stab}
	\ConstrainedStabbing is \APX-hard.
\end{theorem}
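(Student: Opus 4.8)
The plan is to exhibit an $L$-reduction from \SPSC to \ConstrainedStabbing, reusing the rectangle-and-segment construction from the proof of Theorem~\ref{thm:apx-card-stab} but now feeding the generated segments as the constraint set~$F$. Recall that in that construction every set of \SPSC is realized by a line segment that stabs exactly the rectangles corresponding to the elements of that set. For the $L$-reduction we must control lengths rather than cardinalities, so the first step is to normalize the geometry: by a suitable affine scaling we arrange that the horizontal extent of the common intersection~$\intAs$ of the rectangles $a_1,\dots,a_n$ is tiny (say, of length~$\epsilon$ for a small constant), while the ``excursions'' of the segments outside~$\intAs$ — the parts needed to reach into a particular rectangle of~$A$ but exit before entering a forbidden one — are all of (essentially) the same length, say~$1$. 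Concretely, each of the $5m$ candidate segments has length in the range $[1,1+\epsilon]$; after rounding the coordinates we may even take all segment lengths to lie in $\{1,1+\epsilon\}$, or simply observe $\len{s}\in[1,1+\epsilon]$ suffices. Thus a solution using $\ell$ of the segments has total length between~$\ell$ and~$(1+\epsilon)\ell$.

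The two $L$-reduction inequalities then follow. For the first, note that $\opt_{\mathrm{Stab}} \le (1+\epsilon)\opt_{\mathrm{3SC}}$: take an optimum set cover, pick the corresponding segments, and bound each length by $1+\epsilon$. Combined with the trivial bound $\opt_{\mathrm{3SC}} = \Omega(m)$ (indeed $\opt_{\mathrm{3SC}}\ge 2n/3 = m$ since each $a_i$ lies in exactly two sets, or more simply because the $5m$ sets have bounded size so covering the $\Theta(m)$ elements needs $\Omega(m)$ of them) and the fact that the \SPSC instance has size $\Theta(m)$, we get $\opt_{\mathrm{Stab}} = O(\opt_{\mathrm{3SC}})$, giving the constant~$\alpha$. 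For the second inequality, given a feasible \ConstrainedStabbing solution~$S$ of length $\len{S}$, let $\ell = |S|$; since each chosen segment has length $\ge 1$, we have $\ell \le \len{S}$, and the sets corresponding to the segments in~$S$ form a feasible cover of size~$\ell$ (because each segment covers exactly the elements of its set, and every rectangle, hence every element, is stabbed). Therefore
\[
|S_{\mathrm{3SC}}| - \opt_{\mathrm{3SC}} \;\le\; \ell - \opt_{\mathrm{3SC}} \;\le\; \len{S} - \opt_{\mathrm{3SC}} \;\le\; \len{S} - \opt_{\mathrm{Stab}}/(1+\epsilon),
\]
and using $\opt_{\mathrm{Stab}} \le (1+\epsilon)\opt_{\mathrm{3SC}}$ together with $\len{S}\le(1+\epsilon)\ell$ one extracts $|S_{\mathrm{3SC}}| - \opt_{\mathrm{3SC}} = O(\len{S} - \opt_{\mathrm{Stab}})$, giving the constant~$\beta$. (The cleanest route: show directly that from~$S$ one recovers a cover whose size minus optimum is at most a constant times $\len{S}-\opt_{\mathrm{Stab}}$, by charging the ``extra'' segments one-to-one.)

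The main obstacle I expect is bookkeeping the length accounting so that the additive slack is genuinely controlled by a constant factor and not inflated by the $\epsilon$-terms: one must make sure that the discrepancy $\len{S} - \opt_{\mathrm{Stab}}$ is not swamped by the cumulative $\epsilon m$ of segment-length rounding. This is handled by taking $\epsilon$ small enough relative to the implicit constants (equivalently, working with the exact lengths and noting they differ by at most $\epsilon$ per segment, with at most $O(m)$ segments, while $\opt_{\mathrm{Stab}} = \Theta(m)$ — so the relative error is a controllable constant that can be absorbed). A secondary, purely technical point is verifying that the segments can indeed be realized with rational coordinates of polynomial bit-length after the scaling, so that the reduction remains polynomial; this is immediate from the grid-based placement already used in the proof of Theorem~\ref{thm:apx-card-stab}. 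Once these length estimates are in place, the $L$-reduction is complete and \APX-hardness of \ConstrainedStabbing follows from \cite{papadimitriou-approximation}.
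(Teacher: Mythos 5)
Your overall route is the same as the paper's: reuse the construction from Theorem~\ref{thm:apx-card-stab}, normalize the geometry so that every candidate segment has length in $[1,1+\epsilon]$, and verify the two $L$-reduction conditions. Condition~(1) goes through as you describe (indeed $\coptStab(\xStab)\le(1+\epsilon)\coptSSC(\xSSC)$ already gives $\alpha=2$ without the detour through $\coptSSC(\xSSC)=\Omega(m)$). The genuine gap is in condition~(2). You correctly identify the obstacle --- your chain of inequalities leaves an additive error of order $\epsilon\cdot\coptStab(\xStab)=\Theta(\epsilon m)$ --- but your proposed fix, taking $\epsilon$ to be a sufficiently small \emph{constant}, does not close it. The $L$-reduction inequality must hold for every feasible $\yStab$, including one with $\costStab(\yStab)-\coptStab(\xStab)=0$, while the additive term $\Theta(\epsilon m)$ grows with the instance, so no constant $\beta$ can absorb it. Concretely, with a fixed $\epsilon$ a solution using $k+1$ segments can be strictly shorter than one using $k$ segments as soon as $k>1/\epsilon$; hence the minimum-\emph{length} stabbing solution need not correspond to a minimum-\emph{cardinality} cover at all, and the cardinality--length correspondence breaks down exactly where condition~(2) needs it.

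The paper's resolution is to let the perturbation depend on the instance size: it sets $\delta=1/(10m)$, so that the total slack over all $5m$ candidate segments is $5m\delta=1/2<1$. Integrality then does the work. Since every segment has length $1$ or $1+\delta$ and cardinalities are integers, one gets that any feasible $\yStab$ contains at least as many segments as an optimum solution to $\xStab$, and that this optimum cardinality equals $\coptSSC(\xSSC)$. A two-case analysis finishes the proof: either $|\yStab|$ equals the optimum cardinality, in which case $\costSSC(\ySSC)-\coptSSC(\xSSC)=0$ and condition~(2) is trivial, or it exceeds it, in which case
\[
\costStab(\yStab)-\coptStab(\xStab)\;>\;|\yStab|-(1+\delta)\,|\yStab^*|\;\ge\;1-5m\delta\;=\;\tfrac12,
\]
so the leftover additive $1/2$ can be absorbed into $\beta=2$. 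Your construction and both inequalities survive essentially verbatim once you replace ``$\epsilon$ a small constant'' by $\epsilon=\Theta(1/m)$ and add this integrality step; as written, your argument for condition~(2) is not valid.
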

\begin{proof}
	For an~${L}$-reduction, it suffices to find two constants~${\alpha}$
	and~${\beta}$ such that for every \SPSC instance~${\xSSC}$ it holds:
	\begin{listArabic}
		\item We can efficiently construct a \ConstrainedStabbing
		instance~${\xStab}$ with \[\coptStab(\xStab) \le \alpha \cdot
		\coptSSC(\xSSC)\] \label{Lred:one} where~$\coptStab(\xStab)$ is
		the total length of an optimum solution to~$\xStab$
		and~$\coptSSC(\xSSC)$ is the cardinality of a minimum set cover
		to~$\xSSC$.
		\item For every feasible solution~${\yStab}$ to~${\xStab}$, we can
		efficiently construct a feasible solution~${\ySSC}$ to~${\xSSC}$
		with
		\begin{equation*}
		\costSSC(\ySSC) - \coptSSC(\xSSC) \le \beta \cdot \left(\costStab(\yStab) - \coptStab(\xStab) \right)\formulaPunctuationSpace,
		\end{equation*}
		where~${\costSSC(\ySSC)}$ denotes the cardinality of~${\ySSC}$,
		and~${\costStab(\yStab)}$ denotes the total length of~${\yStab}$.
		\label{Lred:two}
	\end{listArabic}
	
	Given a \SPSC instance~${\xSSC}$ with~${5m}$ subsets we construct a
	\Stabbing instance~${\xStab}$ as in the proof of
	Theorem~\ref{thm:apx-card-stab} with the following
	specifications: Every rectangle corresponding to elements in~${A}$
	has width equal to~${1 + \delta}$ for~${\delta = 1 / 10m}$ and the
	intersection~${\intAs}$ of all these rectangles has width equal
	to~${1}$.  For~${1 \le t \le m}$, we choose the lengths of the line
	segments corresponding to~${\{w_t, x_t\}}$ and~${\{y_t, z_t\}}$ to
	be equal~${1}$.  Thus, every line segment has length either~${1}$
	or~${1+\delta}$.
	
	Consequently, any optimum solution to~${\xSSC}$ implies a feasible
	solution to~${\xStab}$ with cost at most~${(1+\delta)\cdot
		\coptSSC(\xSSC)}$.  Hence, we can bound~${\coptStab(\xStab)}$ from
	above by
	\[{(1+\delta)\cdot \coptSSC(\xSSC) < 2 \cdot
		\coptSSC(\xSSC)}\formulaPunctuationSpace.\]
	This shows Property~\ref{Lred:one} of~${L}$-reduction
	with~${\alpha=2}$.
	
	Now, given a feasible solution~${\yStab}$, we will first observe
	that it cannot consist of less line segments than an optimum
	solution.  Let~$x$ be the cardinality of any optimum solution
	to~$\xStab$. 
	Recall that every line segment has length~${1}$ or~${1 + \delta}$
	and that any feasible solution has at most~${5m}$ line segments.
	Thus, $x \le \coptStab(\xStab)$ and, consequently,
	\begin{alignat*}{2}
	x~&\le&~~& \costStab(\yStab) \\&\le&& |\yStab|(1+\delta) \\&\le&&
	|\yStab| + 5m\delta \\&=&& |\yStab| +
	\frac{1}{2}\formulaPunctuationSpace.
	\end{alignat*}
	Hence, the inequality holds only if~$\yStab$ contains at least~$x$
	lines segments.
	
	Next, let~${\ySSC}$ consist of all sets corresponding to the line
	segments in~${\yStab}$. Observe that~${\ySSC}$ is feasible.  To show
	Property~\ref{Lred:two}, we consider two cases.  In the first
	case,~${\yStab}$ has the same number of line segments as the optimum
	solution.  We immediately get
	\[{\costSSC(\ySSC) - \coptSSC(\xSSC) = 0}\] and the inequality of
	Property~\ref{Lred:two} holds.  In the second case,~${\yStab}$ has
	more line segments than the optimum solution.  Thus,~${x<|\yStab|
		\le 5m}$.  With the definition of~${\delta}$ we, obtain
	\begin{alignat*}{2}
	\costStab(\yStab) - \coptStab(\xStab) &>&~~& |\yStab| -
	x(1+\delta) \\&\ge&& 1-x\delta \\&\ge&& 1-5m\delta \\&=&&
	\frac{1}{2}\formulaPunctuationSpace.
	\end{alignat*}
	Furthermore, we have
	\begin{alignat*}{2}
	\coptStab(\xStab) &\le&~~& (1 + \delta) \cdot \coptSSC(\xSSC)
	\\&\le&& \coptSSC(\xSSC) + \delta 5m \\&\le&& \coptSSC(\xSSC) +
	\frac{1}{2}\formulaPunctuationSpace.
	\end{alignat*}
	On the other hand,~${\costSSC(\ySSC) \le \costStab(\yStab)}$.
	Putting things together, we get
	\begin{alignat*}{2}
	\costSSC(\ySSC) - \coptSSC(\xSSC) &\le&~~& \costStab(\yStab) -
	\coptStab(\xStab) + \frac{1}{2} \\&\le&& 2 \cdot
	\left(\costStab(\yStab) - \coptStab(\xStab)\right)
	\end{alignat*}
	and Property~\ref{Lred:two} holds for~${\beta = 2}$.
\end{proof}

\section{A Constant-Factor Approximation Algorithm for Stabbing}
\label{sec:const-appr}

In this section, we present a constant-factor approximation algorithm
for \Stabbing.  

First, we model \Stabbing as a set cover problem, and we revisit 
the standard linear programming relaxation for set cover and the
concept of shallow-cell complexity; see Sections~\ref{sec:lp}
and~\ref{sec:shall-cell-compl}. Then, we observe that there are \Stabbing
instances with high shallow-cell complexity.
This limiting fact prevents us from obtaining 
any constant approximation factor 
if applying the generalization of Chan et
al.~\cite{chan-etal12-weighted-geom-sc} in a direct way; see
Section~\ref{sec:shall-cell-compl}.
In order to bypass this limitation, we decompose any \Stabbing
instance into two disjoint families of low shallow-cell complexity.
Before describing the decomposition in
Section~\ref{sec:handl-gener-case}, we show how to merge solutions to
these two disjoint families in an approximation-factor preserving way;
see Section~\ref{sec:decomposition-lemma}.
Then, in Section~\ref{sec:x-laminar}, we observe that 
these families have sufficiently small shallow-cell complexity
to admit a constant-factor approximation.

\subsection{Set Cover and Linear Programming}
\label{sec:lp}

An instance~${(U,\mathcal{F}, c)}$ of weighted \problemName{Set Cover}
is given by a finite universe~${U}$ of~${n}$ elements, a family~${\cal
  F}$ of subsets of~${U}$ that covers~${U}$, and a cost
function~${c\colon \cal F\rightarrow\mathbb{Q}^+}$.  The objective is
to find a sub-family~$\cal S$ of~$\cal F$ that also covers~${U}$ and
minimizes the total cost~${c(\mathcal{S})}={\sum_{S \in \mathcal{S}}
  c(S)}$.

An instance~${(R,F)}$ of \ConstrainedStabbing, given by a set~$R$ of
rectangles and a set~$F$ of line segments, can be seen as a special
instance of weighted \problemName{Set Cover} where the rectangles
in~${R}$ are the universe~${U}$, the line segments in~${F}$ form the
sets in~${\mathcal{F}}$, and a line segment~${s\in F}$ ``covers'' a
rectangle~${r}$ if and only if~${s}$ stabs~${r}$.  Unconstrained
\Stabbing can be modeled by \problemName{Set Cover} as follows.  We
can, without loss of generality, consider only feasible solutions
where the end points of any line segment 
lie on the left or right boundaries of rectangles and
where each line segment touches the top boundary of some rectangle.
Thus, we can restrict ourselves to feasible solutions that are subsets
of a set~${F}$ of~${\bigOh(n^3)}$ candidate line segments.  This shows
that \Stabbing is a special case of \ConstrainedStabbing and, hence,
of \problemName{Set Cover}.

Let $(U,\mathcal{F},c)$ be an instance of \problemName{Set
  Cover}. The standard LP relaxation LP$(U,\mathcal{F},c)$ for this
instance is as follows.
\begin{alignat*}{2}
  &\text{Minimize} & \sum_{S\in \mathcal{F}} c(S)z_S\\
  &\text{subject to} \quad& \sum_{S\in\mathcal{F},S\ni e}z_S\geq 1 & \text{\quad for all } e\in U\\
  && z_S\geq 0 & \text{\quad for all } S\in \mathcal{F}\,.
\end{alignat*}
The optimum solution to this LP provides a lower bound on $\opt$. An
algorithm is called 
\emph{LP-relative $\alpha$-approximation algorithm} for a class~$\Pi$
of set cover instances if it rounds any feasible solution
$\mathbf{z}=(z_S)_{S \in \mathcal{F}}$
to the above standard LP relaxation for some instance
$(U,\mathcal{S},c)$ in this class to a feasible integral solution 
$\mathcal{S}\subseteq\mathcal{F}$ of cost~$c(\mathcal{S})\leq \alpha
\sum_{S\in \mathcal{F}} c(S)z_s$.

\subsection{Shallow-Cell Complexity}\label{sec:shall-cell-compl}
We define the shallow-cell complexity for classes that consist of
instances of weighted \problemName{Set Cover}.  Informally, the
shallow-cell complexity is a bound on the number of equivalent classes
of elements that are contained in a small number of sets.  Here is the
formal definition.
\begin{definition}[Chan et
  al.~\cite{chan-etal12-weighted-geom-sc}]\label{def:scc}
  Let~${f(m,k)}$ be a function non-decreasing in~${m}$ and~${k}$.  An
  instance~${(U,\mathcal{F},c)}$ of weighted \problemName{Set Cover}
  has shallow-cell complexity~${f}$ if the following holds for
  every~${k}$ and~${m}$ with~${1\le k \le m \le |\mathcal{F}|}$, and
  every sub-family~${\mathcal{S}\subseteq\mathcal{F}}$ of~${m}$ sets:
  All elements that are contained in at most~${k}$ sets
  of~${\mathcal{S}}$ form at most~${f(m,k)}$ equivalence classes
  (called \emph{cells}), where two elements are equivalent if they are
  contained in precisely the same sets of~${\mathcal{S}}$.  A class of
  instances of weighted \problemName{Set Cover} has shallow-cell
  complexity~${f}$ if all its instances have shallow-cell
  complexity~${f}$.
\end{definition}
Chan et al.\ proved that if a set cover problem has low shallow-cell
complexity then quasi-uniform sampling yields an LP-relative
approximation algorithm with good performance.
\begin{theorem}[Chan et
  al.~\cite{chan-etal12-weighted-geom-sc}]\label{thm:shall-cell-compl-apx}
  Let~${\phi(m)}$ be a non-decreasing function, and let~$\Pi$ be a
  class of instances of weighted \problemName{Set Cover}.  If~$\Pi$
  has shallow-cell complexity~${m\phi(m)k^{\bigOh(1)}}$, then~$\Pi$
  admits an LP-relative approximation algorithm (based on
  quasi-uniform sampling) with approximation
  ratio~${\bigOh(\max\{1,\log\phi(m)\})}$.
\end{theorem}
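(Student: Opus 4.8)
This is precisely the main theorem of Chan et al.~\cite{chan-etal12-weighted-geom-sc}, building on Varadarajan's quasi-uniform sampling~\cite{varadarajan10-quasi-uniform-sampling}, so in this \manuscript the natural course is to invoke it as a black box; the sketch below indicates how a self-contained proof would go. Fix an instance $(U,\mathcal F,c)\in\Pi$ with $|U|=n$ and a feasible solution $\mathbf z=(z_S)_{S\in\mathcal F}$ to the standard LP relaxation, and write $W=\sum_{S\in\mathcal F}c(S)z_S$. The task is to round $\mathbf z$ to an integral cover $\mathcal S\subseteq\mathcal F$ with $c(\mathcal S)=\bigOh(\max\{1,\log\phi(|\mathcal F|)\})\cdot W$, which is the LP-relative guarantee (and specializes to the stated ratio when $\mathbf z$ is an optimal LP solution, since then $W$ lower-bounds $\opt$). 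The rounding runs in $\bigOh(\log\phi(|\mathcal F|))$ phases; each phase adds to $\mathcal S$ a sub-family of expected cost $\bigOh(W)$ and leaves a residual instance consisting only of the still-uncovered elements. Since scaling $\mathbf z$ by a constant changes $W$ by only a constant, we may round the few large $z_S$ up to $1$ and assume the rest lie in $(0,1)$.

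The engine of a phase is the \emph{quasi-uniform sample}: a random sub-family $\mathcal R\subseteq\mathcal F$ with $\prob{S\in\mathcal R}=\min\{1,\alpha z_S\}$ for a suitable constant $\alpha$, drawn \emph{not} independently but from a joint distribution built by a ``peeling'' process over a random permutation of $\mathcal F$, which injects just enough negative dependence. Two properties are needed. First, linearity of expectation still gives $\expct{c(\mathcal R)}=\bigOh(\alpha W)=\bigOh(W)$. Second -- and this is the crux -- for every element $e$, the probability that $e$ is left uncovered by $\mathcal R$ while lying in $d$ or more sets of $\mathcal R$ decays exponentially in $d$; consequently, with high probability every uncovered element is \emph{shallow}, i.e.\ contained in only $k=\bigOh(\log\phi(|\mathcal F|))$ sets of $\mathcal R$. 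It is exactly here that the quasi-uniform (rather than fully independent) structure is exploited: an independent rounding would instead force $k=\bigOh(\log n)$.

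The shallow-cell complexity hypothesis now enters, applied to the sub-family $\mathcal R$ with $m=|\mathcal R|$: the cells at depth at most $k$ in $\mathcal R$ number at most $f(m,k)=m\,\phi(m)\,k^{\bigOh(1)}$. We collapse each surviving cell to a single representative -- legitimate because $\mathbf z$ restricted to the sets not yet chosen stays feasible for the residual instance and has value at most $W$, so the next phase again costs only $\bigOh(W)$ -- and combine the count $f(m,k)$ with the exponential tail of the ``uncovered-and-at-depth-$d$'' event, choosing $k=\bigOh(\log\phi(|\mathcal F|))$ so that the two effects balance. A (delicate) accounting then shows that $\bigOh(\log\phi(|\mathcal F|))$ phases drive the residual instance down to where every remaining element is covered by $\bigOh(1)$ sets, after which one final rounding step of cost $\bigOh(W)$ finishes. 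Summing the $\bigOh(\log\phi(|\mathcal F|))$ phase costs gives total cost $\bigOh(\max\{1,\log\phi(|\mathcal F|)\})\cdot W$; the $k^{\bigOh(1)}$ factor is harmless (adding only $\bigOh(\log k)=\bigOh(\log\log\phi)$ inside the logarithm), and $\max\{1,\cdot\}$ absorbs the case $\phi=\bigOh(1)$.

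The main obstacle, and the reason this was a genuine breakthrough rather than routine, is to engineer the quasi-uniform sample so that the bounded-expected-cost property and the exponential ``uncovered-and-deep'' tail hold \emph{simultaneously}: the negative dependence injected by the peeling must be strong enough to produce the exponential tail yet weak enough to preserve $\expct{c(\mathcal R)}=\bigOh(W)$, and the induced residual instances must shrink fast enough that the process closes in $\bigOh(\log\phi)$ phases rather than $\bigOh(\log n)$. This calibration is the technical heart of~\cite{varadarajan10-quasi-uniform-sampling,chan-etal12-weighted-geom-sc}, which is why we are content to cite the statement.
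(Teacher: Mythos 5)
The paper does not prove this theorem at all --- it is an imported result, stated with attribution to Chan et al.~\cite{chan-etal12-weighted-geom-sc} and used as a black box --- and your proposal does exactly the same, correctly identifying that a citation is the appropriate ``proof'' here. Your accompanying sketch of the quasi-uniform sampling argument is a faithful high-level account of the cited work and does not conflict with anything in the paper.
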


\begin{figure}[tb]
  \begin{subfigure}[b]{.47\textwidth}
    \centering
    \includegraphics[page=3]{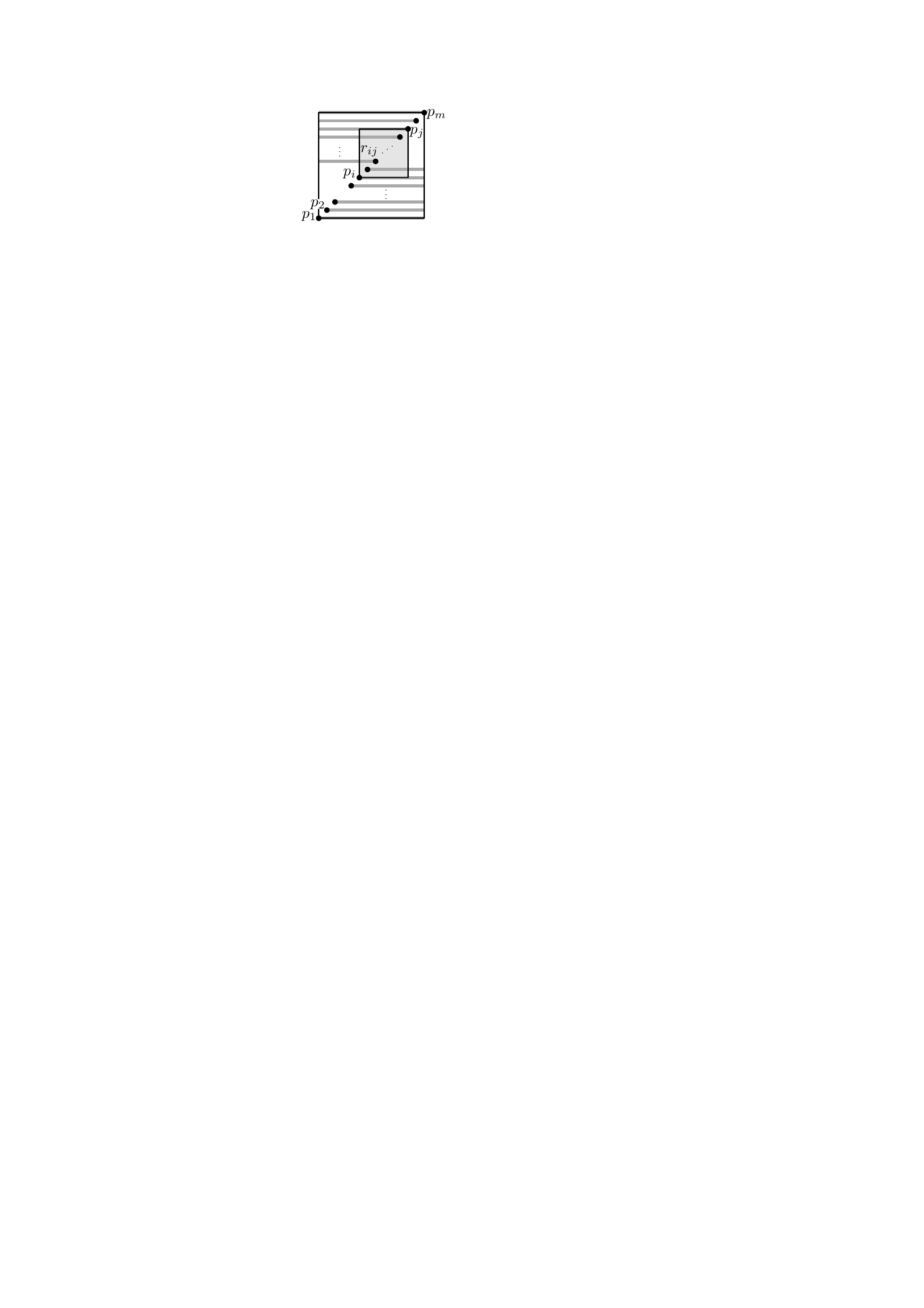}
    \caption{for \Stabbing}
    \label{fig:stabbing-counterexample}
  \end{subfigure}
  \hfill
  \begin{subfigure}[b]{.47\textwidth}
    \centering
    \includegraphics[page=4]{scc-counterexample}
    \caption{for \problemName{Horizontal--Vertical Stabbing}}
    \label{fig:hvs-counterexample}
  \end{subfigure}
  \caption{Instances with high shallow-cell complexity.}
  \label{fig:scc-counterexample}
\end{figure}
Unfortunately, there are instances of \Stabbing that have high
shellow-cell complexity, so we cannot directly obtain a sub-logarithmic
performance via Theorem~\ref{thm:shall-cell-compl-apx}.  These
instances can be constructed as follows; see
\fig~\ref{fig:stabbing-counterexample}.  Let~${m}$ be an even positive
integer.  For~${i=1,\dots,m}$, define the point~${p_i=(i,i)}$.  For
each pair~${i,j}$ with~${1\le i \le m/2 < j \le m}$, let~${r_{ij}}$ be
the rectangle with corners~${p_i}$ and~${p_j}$.  Now, consider the
following set~${\mathcal{S}}$ of~${m}$ line segments.
For~${i=1,\dots,m/2}$, the set~${\mathcal{S}}$ contains the
segment~${s_i}$ with endpoints~${p_i}$ and~${(m,i)}$.
For~${i=m/2+1,\dots,m}$, the set~${\mathcal{S}}$ contains the
segment~${s_i}$ with endpoints~${(1,i)}$ and~${p_i}$.

We want to count the number of rectangles that are stabbed by at most
two segments in~${\mathcal S}$.  Consider any~${i}$ and~${j}$
satisfying~${1\leq i\leq m/2<j\leq m}$.  Observe that the
rectangle~${r_{ij}}$ is stabbed precisely by the segments~${s_i}$
and~${s_j}$ in~${\mathcal{S}}$.  Hence, according to
Definition~\ref{def:scc}, our instance consists of at least~${m^2/4}$
equivalence classes for~${k=2}$.  Thus, if our instance has shallow
cell-complexity $f$ for some suitable function $f$, we
have~${f(m,2)\finO \Omega(m^2)}$.  Since~${f}$ is non-decreasing, we
also have~${f(m,k)\finO \Omega(m^2)}$ for~${k\ge 2}$.  Hence,
Theorem~\ref{thm:shall-cell-compl-apx} implies only
an~${\bigOh(\log n)}$-approximation algorithm for \Stabbing where we
use the above-mentioned fact (see Section~\ref{sec:lp}) that we can
restrict ourselves to $m=O(n^3)$ many candidate segments.

\subsection{Decomposition Lemma for Set
  Cover}\label{sec:decomposition-lemma}

Our trick is to decompose general instances of \Stabbing (which may
have high shallow-cell complexity) into partial instances of low
complexity with a special, laminar structure.  We use the following
simple decomposition lemma, which holds for arbitrary set cover
instances.
\begin{lemma}\label{lem:decomp-set-cover}	
  Let~$\Pi$,~$\Pi_1$,~$\Pi_2$ be classes of
  \problemName{Set Cover} where~$\Pi_1$ and~$\Pi_2$
  admit LP-relative~$\alpha_1$- and~$\alpha_2$-approximation
  algorithms, respectively.  The class~$\Pi$ admits an
  LP-relative~${(\alpha_1+\alpha_2)}$-approximation algorithm if, for
  every instance~$(U,\mathcal{F},c)\in \Pi$, the 
  family~$\mathcal{F}$ can be partitioned into~$\mathcal{F}_1,\mathcal{F}_2$
  such that, for any partition of~$U$ into~$U_1,U_2$ where~$U_1$ is
  covered by~$\mathcal{F}_1$ and~$U_2$ by~$\mathcal{F}_2$, the
  instances~$(U_1,\mathcal{F}_1,c)$ and $(U_2,\mathcal{F}_2,c)$ are
  instances of~$\Pi_1$ and~$\Pi_2$, respectively.
\end{lemma}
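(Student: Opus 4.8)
The plan is to combine the two given rounding algorithms into one by \emph{splitting the fractional cover}. Concretely, we design an algorithm for $\Pi$ that takes an instance $(U,\mathcal{F},c)\in\Pi$ together with a feasible solution $\mathbf{z}=(z_S)_{S\in\mathcal{F}}$ of its standard LP relaxation, partitions $\mathcal{F}$ into $\mathcal{F}_1,\mathcal{F}_2$ as granted by the hypothesis, restricts $\mathbf{z}$ to $\mathcal{F}_1$ and to $\mathcal{F}_2$, rescales each restriction up to feasibility on a suitable sub-instance, applies the LP-relative $\alpha_1$- resp.\ $\alpha_2$-approximation algorithm, and outputs the union of the two returned sub-families. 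The only genuinely non-obvious ingredient is how to split the universe $U$ between $\mathcal{F}_1$ and $\mathcal{F}_2$; choosing the right threshold is what makes the two rescaling factors cancel against $\alpha_1$ and $\alpha_2$, so that the overall blow-up is exactly $\alpha_1+\alpha_2$ (the naive threshold $1/2$ would only give $2\max\{\alpha_1,\alpha_2\}$).

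In detail, for $e\in U$ write $a_e=\sum_{S\in\mathcal{F}_1,\,e\in S}z_S$ and $b_e=\sum_{S\in\mathcal{F}_2,\,e\in S}z_S$, so that $a_e+b_e\ge 1$. Split $U$ into $U_1=\{e\in U: a_e\ge \alpha_1/(\alpha_1+\alpha_2)\}$ and $U_2=U\setminus U_1$. The first step is to check that this is a valid partition for invoking the hypothesis: every $e\in U_1$ satisfies $a_e>0$, hence lies in some set of $\mathcal{F}_1$, so $U_1$ is covered by $\mathcal{F}_1$; and every $e\in U_2$ satisfies $b_e\ge 1-a_e>\alpha_2/(\alpha_1+\alpha_2)>0$, hence lies in some set of $\mathcal{F}_2$, so $U_2$ is covered by $\mathcal{F}_2$. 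Since the hypothesis holds for every such partition, $(U_1,\mathcal{F}_1,c)\in\Pi_1$ and $(U_2,\mathcal{F}_2,c)\in\Pi_2$; this is exactly why the lemma must quantify over all valid partitions, as the algorithm cannot control which one $\mathbf{z}$ induces.

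The second step is the rescaling. Let $\mathbf{w}_1$ on $\mathcal{F}_1$ be given by $w_{1,S}=\tfrac{\alpha_1+\alpha_2}{\alpha_1}z_S$ and $\mathbf{w}_2$ on $\mathcal{F}_2$ by $w_{2,S}=\tfrac{\alpha_1+\alpha_2}{\alpha_2}z_S$. By the choice of threshold, for $e\in U_1$ we have $\sum_{S\in\mathcal{F}_1,\,e\in S}w_{1,S}=\tfrac{\alpha_1+\alpha_2}{\alpha_1}a_e\ge 1$, so $\mathbf{w}_1$ is feasible for the LP relaxation of $(U_1,\mathcal{F}_1,c)$, and symmetrically $\mathbf{w}_2$ is feasible for that of $(U_2,\mathcal{F}_2,c)$. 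Running the LP-relative $\alpha_1$-approximation algorithm for $\Pi_1$ on $\mathbf{w}_1$ produces $\mathcal{S}_1\subseteq\mathcal{F}_1$ covering $U_1$ with $c(\mathcal{S}_1)\le\alpha_1\sum_{S\in\mathcal{F}_1}c(S)w_{1,S}=(\alpha_1+\alpha_2)\sum_{S\in\mathcal{F}_1}c(S)z_S$, and the $\alpha_2$-approximation algorithm for $\Pi_2$ on $\mathbf{w}_2$ produces $\mathcal{S}_2\subseteq\mathcal{F}_2$ covering $U_2$ with $c(\mathcal{S}_2)\le(\alpha_1+\alpha_2)\sum_{S\in\mathcal{F}_2}c(S)z_S$. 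Then $\mathcal{S}=\mathcal{S}_1\cup\mathcal{S}_2$ covers $U_1\cup U_2=U$ and satisfies $c(\mathcal{S})\le c(\mathcal{S}_1)+c(\mathcal{S}_2)\le(\alpha_1+\alpha_2)\sum_{S\in\mathcal{F}}c(S)z_S$, which is exactly the required LP-relative guarantee; all steps run in polynomial time.

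I do not expect a real obstacle here: the entire content is (i) using the weighted threshold $\alpha_1/(\alpha_1+\alpha_2)$ instead of the obvious $1/2$, and (ii) verifying that the partition of $U$ thus induced is always valid, so that the hypothesis supplies membership in $\Pi_1$ and $\Pi_2$. Degenerate situations — an empty side of the split, or $\sum_{S\in\mathcal{F}_i}c(S)z_S=0$ — need no special handling, since the feasibility check and the cost estimates above go through verbatim (and the approximation ratios $\alpha_1,\alpha_2$ are positive, so the rescaling is well defined).
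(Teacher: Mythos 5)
Your proof is correct and follows essentially the same route as the paper's: the same threshold $\alpha_1/(\alpha_1+\alpha_2)$ for splitting $U$, the same rescaling of the restricted fractional solutions, and the same final accounting (the paper additionally caps the rescaled values at $1$, which is immaterial for LP feasibility). Your explicit check that the induced partition of $U$ is valid for invoking the hypothesis is a welcome detail the paper leaves implicit.
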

\begin{proof}	
  Let $\mathbf{z}=(z_S)_{S \in \mathcal{F}}$ be a feasible solution to
  LP$(U,\mathcal{F},c)$. Let $U_1,U_2=\emptyset$ initially. Consider
  an element $e\in U$. Because of the constraint
  $\sum_{S\in\mathcal{F},S\ni e}z_S\geq 1$ in the LP relaxation and
  because of~$\mathcal{F}=\mathcal{F}_1\cup\mathcal{F}_2$, 
  at least one of the two cases~${\sum_{S\in\mathcal{F}_1,S\ni
    e}z_S\geq \alpha_1/(\alpha_1+\alpha_2)}$ 
    and~$\sum_{S\in\mathcal{F}_2,S\ni e}z_S\geq
  \alpha_2/(\alpha_1+\alpha_2)$ occurs.  In the first case, we add $e$
  to $U_1$.  In the second case, we add $e$ to $U_2$.  We execute this
  step for each element $e\in U$.

  Now, consider the instance $(U_1,\mathcal{F}_1,c)$. For each
  $S\in\mathcal{F}_1$, set
  $z_S^1:=\min\{(\alpha_1+\alpha_2)/\alpha_1z_S,1\}$.  Since
  $\sum_{S\in\mathcal{F}_1,S\ni e}z_S\geq
  \alpha_1/(\alpha_1+\alpha_2)$ for all $e\in U_1$, we have that
  $\mathbf{z}^1=(z_S^1)_{S \in \mathcal{F}_1}$ forms a feasible solution
  to LP$(U_1,\mathcal{F}_1,c)$. Next, we apply the LP-relative
  $\alpha_1$-approximation algorithm to this instance to obtain a
  solution $\mathcal{S}_1\subseteq\mathcal{F}_1$ that covers
  $U_1$ and whose cost is at 
  most~${\alpha_1 \sum_{S\in \mathcal{F}_1} c(S)z_S^1\leq
  (\alpha_1+\alpha_2)\sum_{S\in \mathcal{F}_1} c(S)z_S}$. Analogously, we can compute a 
  solution~$\mathcal{S}_2\subseteq\mathcal{F}_2$ to $(U_2,\mathcal{F}_2,c)$ of
  cost at most~$(\alpha_1+\alpha_2)\sum_{S\in \mathcal{F}_2}
  c(S)z_S$. 

  To complete the proof, note that
  $\mathcal{S}_1\cup\mathcal{S}_2$ is a feasible solution to
  $(U,\mathcal{F},c)$ of cost at most~$(\alpha_1+\alpha_2)\sum_{S\in
    \mathcal{F}_1\cup\mathcal{F}_2} c(S)z_S$.  Hence, our algorithm 
  is an LP-relative $(\alpha_1+\alpha_2)$-approximation algorithm.
\end{proof}

\subsection{x-Laminar Instances} 
\label{sec:x-laminar}

\begin{definition}\label{def:x-laminar-instances}
  An instance of \ConstrainedStabbing is called \emph{$x$-laminar} if
  the projection of the segments in this instance onto the $x$-axis
  forms a laminar family of intervals. That is, any two of these
  intervals are either interior-disjoint or one is contained in the
  other.
\end{definition}

\begin{lemma}\label{lem:scc-x-laminar-instances}
  The shallow-cell complexity of an $x$-laminar instance of
  \ConstrainedStabbing can be upper bounded by $f(m,k)=mk^2$.  Hence,
  such instances admit a constant-factor LP-relative approximation
  algorithm.
\end{lemma}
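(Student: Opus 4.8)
The plan is to bound the number of cells directly from the geometry of an $x$-laminar instance and then invoke Theorem~\ref{thm:shall-cell-compl-apx}. Fix a sub-family $\mathcal{S}$ of $m$ segments. Because the $x$-projections form a laminar family of intervals, they naturally form a forest under containment; to simplify the bookkeeping I would first add a dummy ``root'' interval covering everything, so that the $x$-projections form a single tree $T$ whose nodes are the $m$ segments (the dummy root contributes only a constant). Now consider any rectangle $r$ (element) that is stabbed by at most $k$ segments of $\mathcal{S}$; these stabbing segments are exactly the ones whose $x$-projection contains the $x$-interval of $r$ and whose $y$-coordinate lies between the bottom and top edges of $r$. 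The first condition alone forces the stabbing set to be a subset of a \emph{root-to-node path} in $T$ of length at most $k$: if a segment's $x$-interval contains $r$'s $x$-interval and another's also does, then by laminarity one of the two $x$-intervals is contained in the other, so the set of such segments is a chain in $T$.

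The key step is to count how many distinct cells (distinct stabbing sets) can arise. I would argue as follows. Each cell's stabbing set $Q$ is a subset of some chain $C$ (a root-to-node path) in $T$. There are at most $m$ leaves to anchor such a path, hence at most $m$ maximal chains; but more carefully, for a fixed element the relevant chain is determined by the \emph{deepest} stabbing segment, so there are at most $m$ choices for that deepest segment. Having fixed a chain $C$ of length $\le k$, the stabbing set $Q\subseteq C$ is determined by which segments of $C$ have their $y$-coordinate in $r$'s $y$-range: sorting the $\le k$ segments of $C$ by $y$-coordinate, the set of those falling inside an interval $[\,y_{\min}(r),y_{\max}(r)\,]$ is a \emph{contiguous subinterval} of this sorted order, of which there are $O(k^2)$ many. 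Multiplying, the number of cells is at most $m\cdot O(k^2)=O(mk^2)$, so the shallow-cell complexity is $f(m,k)=mk^2$ (absorbing constants, or one can be slightly more careful to get the clean bound $mk^2$).

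Finally, I would invoke Theorem~\ref{thm:shall-cell-compl-apx} with $\phi(m)=O(1)$: since $f(m,k)=mk^2$ is of the form $m\phi(m)k^{O(1)}$ with $\phi$ constant, the theorem yields an LP-relative approximation algorithm with ratio $O(\max\{1,\log\phi(m)\})=O(1)$, which is exactly the claimed constant-factor LP-relative approximation.

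\textbf{Main obstacle.} The delicate part is the counting argument for a fixed chain: one must be careful that the stabbing set of an element really is a contiguous block in the $y$-sorted order of the chain, and that each cell is charged to a unique (deepest segment, contiguous block) pair without overcounting. A subtlety is that two different elements may have the same deepest stabbing segment but different chains above it — but laminarity forces the chain above a fixed node to be unique, so this is fine; the argument hinges on using laminarity twice (once to get a chain, once to make the chain canonical via its deepest element).
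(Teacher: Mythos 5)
Your proposal is correct and follows essentially the same route as the paper: anchor each cell at the shortest (deepest) stabbing segment ($m$ choices) and observe that, by laminarity, the stabbing set is a contiguous block in the top-to-bottom order of \emph{all} segments whose projection contains that anchor's projection ($k^2$ choices, since the block contains the anchor and has length at most $k$), then invoke Theorem~\ref{thm:shall-cell-compl-apx} with $\phi(m)=O(1)$. One small imprecision: the chain $C$ above the anchor can contain far more than $k$ segments, so the $O(k^2)$ count must come from the block containing the anchor and having length at most $k$, not from ``the $\le k$ segments of $C$'' as you wrote --- which is exactly how the paper phrases it.
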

\begin{proof}
  To prove the bound on the shallow-cell complexity, consider a set
  $\mathcal{S}$ of $m$ segments.  Let $1\leq k\leq m$ be an
  integer. Consider an arbitrary rectangle $r$ that is stabbed by at
  most $k$ segments in $\mathcal{S}$. Let $\mathcal{S}_r$ be the set
  of these segments. Consider a shortest segment~$s\in\mathcal{S}_r$.
  By laminarity, the projection of any segment in $\mathcal{S}_r$ onto
  the $x$-axis contains the projection of $s$ onto the $x$-axis. Let
  $C_s=(s_1,\dots,s_\ell)$ be the sequence of \emph{all} segments in
  $\mathcal{S}$ whose projection contains the projection of~$s$,
  ordered from top to bottom. The crucial point is that the set~$\mathcal{S}_r$ 
  forms a contiguous sub-sequence
  $s_i,\dots,s_{i+|\mathcal{S}_r|-1}$ of $C_s$ that contains $s=s_j$
  for some~$i\leq j\leq i+|\mathcal{S}_r|-1$.  Hence, $\mathcal{S}_r$
  is uniquely determined by the choice of $s\in\mathcal{S}$ (for which
  there are $m$ possibilities), the choice of $s_i$ with
  $i\in\{j-k,\dots,j\}$ within the sequence~$C_s$ (for which there are
  at most $k$ possibilities), and the cardinality of~$\mathcal{S}_r$
  (for which there are at most $k$ possibilities). This implies that
  $\mathcal{S}_r$ is one of $mk^2$ many sets that define a cell. This
  completes our proof since $r$ was picked arbitrarily.
\end{proof}

\subsection{Decomposing General Instances into Laminar Instances}
\label{sec:handl-gener-case}

\begin{lemma}\label{lem:decomp-into-laminar}
  Given an instance $I$ of (unconstrained) \Stabbing with rectangle
  set $R$, we can compute an instance $I'=(R,F)$ of
  \ConstrainedStabbing with the following properties.  The set $F$ of
  segments in $I'$ has cardinality $O(n^3)$, it can be decomposed into
  two disjoint $x$-laminar sets $F_1$ and $F_2$, and ${\opt_{I'}\leq
    6\cdot\opt_{I}}$.
\end{lemma}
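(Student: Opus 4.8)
The plan is to build $F$ as a set of $O(n^3)$ canonical segments together with a hierarchical decomposition of the $x$-axis, and then to color these segments with two colors so that each color class is $x$-laminar while, crucially, the coloring only costs a constant factor against $\opt_I$. First I would recall from Section~\ref{sec:lp} that an optimal unconstrained solution can be assumed to consist of segments whose endpoints lie on left/right rectangle boundaries and that touch some top boundary; this gives a candidate pool of $O(n^3)$ segments. The issue is that the projections of these candidates onto the $x$-axis need not be laminar. To fix this, I would impose a laminar structure \emph{on the $x$-axis} — essentially a balanced binary recursive partition (a segment tree / interval tree over the $O(n)$ distinct $x$-coordinates of rectangle edges) of depth $O(\log n)$ — and then, for each rectangle $r$, decide which candidate segments stabbing $r$ we keep in $F$ so that $r$ can always be stabbed by a segment whose $x$-projection is one of the $O(\log n)$ canonical intervals covering $[\Rl(r),\Rr(r)]$ in this hierarchy.

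The key geometric step is the following. For a rectangle $r$, consider the canonical node $\nu(r)$ of the hierarchy whose interval is the \emph{smallest} canonical interval containing $r$'s $x$-span. Then $r$'s $x$-span either (a) is contained in the interval of one child of $\nu(r)$ — impossible by minimality — or (b) straddles the split point of $\nu(r)$, so it decomposes into a left part lying in (a suffix of) the left child and a right part lying in (a prefix of) the right child. The plan is to replace each original candidate segment $s$ stabbing $r$ by at most two ``canonical'' segments: one whose $x$-projection is exactly the canonical interval of the left child of $\nu(r)$ (or a standard canonical piece thereof) and one for the right child, each placed at the same height $y(s)$ as $s$. Laminarity across the whole family is then inherited from laminarity of canonical intervals of a binary hierarchy. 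To obtain \emph{two} laminar families $F_1,F_2$ I would split by parity of depth in the hierarchy, or more simply observe that ``left-child canonical segments'' and ``right-child canonical segments'' are each, separately, $x$-laminar (siblings' intervals are disjoint, and nesting within one side is inherited), giving the partition $F=F_1\cup F_2$ with each $F_i$ $x$-laminar, hence of shallow-cell complexity $mk^2$ by Lemma~\ref{lem:scc-x-laminar-instances}.

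For the cost bound $\opt_{I'}\le 6\opt_I$, I would take an optimal solution $S$ to $I$ of total length $\opt_I$ and turn each segment $s\in S$ into a bounded number of canonical segments of total length $O(\len s)$. A single horizontal segment $s$ at height $y$ spanning $[a,b]$ crosses at most $O(\log n)$ canonical intervals, but naively charging each yields a $\log n$, not a constant, blow-up; the trick is that $s$ only needs to certify the \emph{rectangles it stabs}, and for each such rectangle $r$ we only need the (at most two) canonical pieces around $\nu(r)$ that lie within $s$'s span and within $r$'s span. A careful accounting shows that the canonical pieces needed over all rectangles stabbed by $s$ can be covered by $s$ plus a constant number of canonical translates, each of length at most $\len s$ — because a canonical piece used for $r$ is contained in $r$'s $x$-span, which is contained in $s$'s span — so the total replacement length for $s$ is at most (a small constant)$\cdot\len s$, and summing over $s\in S$ gives a constant-factor bound, which with the factor-$2$ loss from splitting into two families yields the stated constant $6$.

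\textbf{Main obstacle.} The hard part is precisely this last accounting step: showing that a single segment $s$ spawns only a \emph{constant} number of canonical segments of total length $O(\len s)$, despite the hierarchy having depth $\Theta(\log n)$. The resolution hinges on the observation that we never need a canonical interval that is strictly larger than the $x$-span of some rectangle $s$ stabs (we always cut at $\nu(r)$, not higher), so the canonical pieces live ``near the bottom'' of the hierarchy relative to each rectangle; bounding their number and total length by $O(\len s)$ — as opposed to $O(\len s\cdot\log n)$ — is where all the care goes, and it is what forces the particular choice of canonical intervals (e.g.\ using an interval-tree-style decomposition where each point's $x$-span is cut into exactly $2$ canonical pieces rather than a segment-tree-style $O(\log n)$ pieces).
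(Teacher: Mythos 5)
There is a genuine gap, and it sits exactly at your ``key geometric step.'' If a rectangle $r$'s $x$-span straddles the split point of the node $\nu(r)$ and you replace a stabbing segment $s$ by two canonical segments whose projections are the intervals of the \emph{left child} and the \emph{right child} of $\nu(r)$, then neither of those two segments contains $r$'s full $x$-span, so neither of them stabs $r$ --- stabbing requires a single segment crossing both the left and the right edge of $r$, and this demand cannot be split into pieces the way a covering demand can. The only single canonical interval of your hierarchy that contains $r$'s span is the interval of $\nu(r)$ itself, and that interval can be arbitrarily longer than $r$'s span (e.g.\ when the span straddles the root's split point), so no constant stretch factor is available from one balanced hierarchy. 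Your ``main obstacle'' paragraph senses that something is off but locates the difficulty in the wrong place: the problem is not bounding the number of canonical pieces spawned per segment, it is that pieces do not stab and that a single laminar family cannot contain every short interval inside a comparably short member. (Also, your stated reason for needing \emph{two} families is off: all canonical intervals of one binary hierarchy already form a single laminar family, so splitting into ``left-child'' and ``right-child'' segments buys nothing for laminarity.)

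The paper's resolution is the idea your proposal is missing: a \emph{shifting} trick. After scaling so the longest candidate segment has length $1/3$, it takes the dyadic laminar family $\mathcal{I}_1=\{[j/2^i,(j+1)/2^i]\}$ and a second copy $\mathcal{I}_2$ shifted right by $1/3$, and proves that every interval of length at most $1/3$ is contained in a member of $\mathcal{I}_1\cup\mathcal{I}_2$ at most $6$ times longer (the choice of $1/3=\sum_{\ell\ge 1}(-1)^{\ell-1}/2^\ell$ makes the shifted grid line up with the straddled dyadic split point at every level, up to parity). Each of the $O(n^3)$ candidate segments is then \emph{stretched once} --- never split --- into a segment of $F_1$ or $F_2$, which preserves the stabbing property and the cardinality and gives $\opt_{I'}\le 6\,\opt_I$ immediately, with no per-segment accounting needed. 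To repair your argument you would have to abandon the split-at-$\nu(r)$ step and introduce a second, shifted hierarchy in this spirit.
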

\begin{proof}
Let~$F'$ be the set of~$O(n^3)$ candidate segments as defined in Sec.~\ref{sec:lp}: 
For every segment~$s$ of~$F'$, the left endpoint of~$s$ 
  lies on the left boundary of some rectangle, the right endpoint
  of~$s$ lies on the right boundary of some rectangle, and~$s$
  contains the top boundary of some rectangle.
Recall that~$F'$ contains the optimum solution.

  Below, we stretch each of the segments in $F'$ by a factor of at
  most $6$ to arrive at a set~$F$ of segments having the claimed
  properties. By scaling the instance we may assume that the longest
  segment in $F'$ has length $1/3$.

  For any $i,j\in\mathbb{Z}$ with $i\geq 0$, let $I_{ij}$ be the
  interval $[j/2^i,(j+1)/2^i]$. Let $\mathcal{I}_1$ be the family of all
  such intervals $I_{ij}$. We say that $I_{ij}$ has level $i$. Note
  that $\mathcal{I}_1$ is an $x$-laminar family of intervals
  (segments). Let $\mathcal{I}_2$ be the family of intervals that
  arises if each interval in $\mathcal{I}_1$ is shifted to the right by
  the amount of~$1/3$.  That is, $\mathcal{I}_2$ is the family of
  all intervals of the form~$I_{ij}+1/3:=[j/2^i+1/3,(j+1)/2^i+1/3]$ (for any $i,j\in\mathbb{Z}$ with $i\geq 0$). Clearly,
  $\mathcal{I}_2$ is~$x$-laminar, too.
  
  We claim that any arbitrary interval $J=[a,b]$ of length at most
  $1/3$ is contained in an interval $I$ that is at most $6$ times
  longer and that is contained in $\mathcal{I}_1$ or in
  $\mathcal{I}_2$. This completes the proof of the lemma since then
  any segment in~$F'$ can be stretched by a factor of at most~$6$ so
  that its projection on the $x$-axis lies in $\mathcal{I}_1$ (giving
  rise to the segment set $F_1$) or in $\mathcal{I}_2$ (giving rise to
  the segment set $F_2$). Setting $F=F_1\cup F_2$ completes the
  construction of the instance $I'=(R,F)$.

  To show the above claim, let $s$ be the largest non-negative integer
  with~${b-a\leq 1/(3\cdot 2^{s})}$. If $J$ is contained in the
  interval $I_{sj}$ for some integer $j$, we are done 
  because~${b-a>1/(6\cdot 2^{s})}$ by the choice of $s$. If $J$ is not contained
  in any interval $I_{sj}$, then there exists some integer~$j$ such
  that $j/2^{s}\in J=[a,b]$ and thus $a\in I_{s,j-1}$. Since
  $b-a\leq 1/(3\cdot 2^{s})$, we have that $J$ is completely contained
  in the interval $I':=I_{s,j-1}+1/(3\cdot 2^{s})$ and in the 
  interval~${I'':=I_{s,j}-1/(3\cdot 2^{s})}$.
  
  We complete the proof by showing that one of the intervals $I', I''$
  is actually contained in~$\mathcal{I}_2$. To this end, note that
  $1/3=\sum_{\ell=1}^\infty(-1)^{\ell-1}/2^\ell$.
  Hence, if $s$ is even, the interval $I'-1/3$ lies in $\mathcal{I}_1$,
  and if $s$ is odd, the interval $I''-1/3$ lies in $\mathcal{I}_1$.
\end{proof}
Now, we apply the decomposition lemma to
Lemmas~\ref{lem:scc-x-laminar-instances}
and~\ref{lem:decomp-into-laminar} and obtain our main result.
\begin{theorem}\label{thm:const-fact-stab}
  \Stabbing admits a constant-factor LP-relative approximation algorithm.
\end{theorem}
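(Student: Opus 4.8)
The plan is to simply chain the three preceding lemmas. Let $I$ be a \Stabbing instance with rectangle set $R$, modeled as a set cover instance over the set $F'$ of $O(n^3)$ candidate segments described in Section~\ref{sec:lp}, with the cost function $c(s)=\len s$. First I would invoke Lemma~\ref{lem:decomp-into-laminar} to obtain the \ConstrainedStabbing instance $I'=(R,F)$ with $F=F_1\cup F_2$, where $F_1$ and $F_2$ are $x$-laminar, $|F|=O(n^3)$, and $\opt_{I'}\le 6\,\opt_I$. The construction in the proof of that lemma in fact furnishes a map $\sigma\colon F'\to F$ that stretches each candidate segment by a factor of at most $6$ while keeping its $y$-coordinate fixed; hence $\sigma(s)\supseteq s$ as point sets, so $\sigma(s)$ stabs every rectangle that $s$ stabs, and $c(\sigma(s))\le 6\,c(s)$.

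Next I would apply the decomposition lemma (Lemma~\ref{lem:decomp-set-cover}) to $I'$, viewed as a set cover instance, with $\Pi_1=\Pi_2$ the class of $x$-laminar \ConstrainedStabbing instances and with the partition $F=F_1\cup F_2$. The hypothesis of Lemma~\ref{lem:decomp-set-cover} is satisfied: for \emph{any} partition of $R$ into $R_1,R_2$ with $R_i$ covered by $F_i$, the instance $(R_i,F_i,c)$ is again $x$-laminar, since $F_i$ — and hence the family of $x$-projections of its segments — is unchanged when we restrict the rectangle set. By Lemma~\ref{lem:scc-x-laminar-instances} together with Theorem~\ref{thm:shall-cell-compl-apx} (applied with $\phi\equiv 1$, so that $\log\phi(m)=0$), each class $\Pi_i$ admits an LP-relative $\alpha$-approximation algorithm for some absolute constant $\alpha$. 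Thus Lemma~\ref{lem:decomp-set-cover} yields an LP-relative $2\alpha$-approximation algorithm for the class of set cover instances of the form of $I'$.

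Finally I would transfer the given fractional solution from $I$ to $I'$. Given a feasible solution $\mathbf z=(z_s)_{s\in F'}$ of LP$(R,F',c)$, define $\mathbf z'=(z'_t)_{t\in F}$ by $z'_t=\sum_{s\in F':\,\sigma(s)=t}z_s$. Since $\sigma(s)$ stabs every rectangle stabbed by $s$, for every $r\in R$ we get $\sum_{t\ni r}z'_t\ge\sum_{s\ni r}z_s\ge 1$, so $\mathbf z'$ is feasible for LP$(R,F,c)$; moreover $\sum_{t\in F}c(t)z'_t=\sum_{s\in F'}c(\sigma(s))z_s\le 6\sum_{s\in F'}c(s)z_s$. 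Rounding $\mathbf z'$ with the $2\alpha$-approximation algorithm for $I'$ produces a segment set $\mathcal S\subseteq F$ that stabs all of $R$ and has cost at most $2\alpha\sum_{t}c(t)z'_t\le 12\alpha\sum_{s}c(s)z_s$. As $\mathcal S$ is a feasible \Stabbing solution for $I$ whose cost is $O(1)$ times the LP value, this constitutes an LP-relative $O(1)$-approximation algorithm for \Stabbing.

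The bulk of the work lives in the lemmas already established, so there is no single hard step remaining; the only point that needs a little care is the last one — making the argument genuinely \emph{LP-relative} by pushing the fractional solution through the stretching map $\sigma$ with merely a constant-factor loss, which is exactly what the factor-$6$ guarantee of Lemma~\ref{lem:decomp-into-laminar} buys us.
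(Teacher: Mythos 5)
Your proof is correct and follows exactly the route the paper takes: the paper's own ``proof'' is the single sentence that Theorem~\ref{thm:const-fact-stab} follows by applying Lemma~\ref{lem:decomp-set-cover} to Lemmas~\ref{lem:scc-x-laminar-instances} and~\ref{lem:decomp-into-laminar}. You have merely filled in the details the paper leaves implicit, and your care in pushing the fractional solution through the stretching map to keep the guarantee LP-relative with respect to the original instance is a legitimate point the paper glosses over.
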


Complementing Lemmas~\ref{lem:scc-x-laminar-instances}
and~\ref{lem:decomp-into-laminar}, 
\fig~\ref{fig:stabbing-counterexample} shows that the union of two
$x$-laminar families of segments may have shallow-cell complexity with
quadratic dependence on~$m$.  This
shows that the property of having low shallow-cell complexity
is not closed under taking unions.

\section{Further Applications of the Decomposition Lemma}
\label{sec:applications}

In this section we demonstrate that our decomposition technique can be
applied in other settings, too.

\paragraph{Horizontal--Vertical Stabbing.}
In this new variant of \Stabbing, a rectangle may
be stabbed by a horizontal or by a vertical line segment (or by both).
Using the results of Section~\ref{sec:handl-gener-case} and the
decomposition lemma where we decompose into horizontal and vertical
segments, we immediately obtain the following result.
\begin{corollary}\label{cor:hor-vert-stabbing}
  \problemName{Horizontal--Vertical Stabbing} admits an LP-relative
  constant-factor approximation algorithm.
\end{corollary}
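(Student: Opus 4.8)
The plan is to reduce \problemName{Horizontal--Vertical Stabbing} to the already-established constant-factor result for \Stabbing (Theorem~\ref{thm:const-fact-stab}) by invoking the decomposition lemma (Lemma~\ref{lem:decomp-set-cover}) with the natural split into horizontal and vertical stabbing segments. First I would model the problem as a \problemName{Set Cover} instance $(U,\mathcal{F},c)$ in the usual way: the universe $U$ is the set $R$ of rectangles, and $\mathcal{F}$ consists of candidate horizontal \emph{and} vertical segments. As in Section~\ref{sec:lp}, it suffices to take only canonical candidates — a horizontal segment whose endpoints lie on left/right rectangle boundaries and which touches some top boundary, and symmetrically a vertical segment whose endpoints lie on bottom/top boundaries and which touches some left (or right) boundary — giving $O(n^3)$ candidate segments of each orientation, hence $|\mathcal{F}|=O(n^3)$.

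Next I would set $\mathcal{F}_1$ to be the horizontal candidate segments and $\mathcal{F}_2$ the vertical ones; clearly $\mathcal{F}=\mathcal{F}_1\cup\mathcal{F}_2$ is a partition. Given any partition of $U=R$ into $U_1$ (covered by horizontal segments) and $U_2$ (covered by vertical segments), the instance $(U_1,\mathcal{F}_1,c)$ is exactly an instance of (constrained) \Stabbing, and $(U_2,\mathcal{F}_2,c)$ is an instance of \Stabbing rotated by $90^\circ$ — i.e.\ the analogous problem with vertical segments, which by symmetry admits the same constant-factor LP-relative approximation. Concretely, $(U_1,\mathcal{F}_1,c)$ is a \ConstrainedStabbing instance whose segment set is $x$-laminar after the stretching of Lemma~\ref{lem:decomp-into-laminar}; but more simply, both sub-instances fall into the class for which Theorem~\ref{thm:const-fact-stab} (combined with Lemma~\ref{lem:decomp-into-laminar} and Lemma~\ref{lem:scc-x-laminar-instances}) already guarantees an LP-relative $O(1)$-approximation. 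So I would take $\Pi_1=\Pi_2$ to be that class with $\alpha_1=\alpha_2=O(1)$, and Lemma~\ref{lem:decomp-set-cover} yields an LP-relative $(\alpha_1+\alpha_2)=O(1)$-approximation for $\Pi$.

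The only subtlety — and the one point that needs a sentence of care rather than being truly hard — is checking that the hypothesis of Lemma~\ref{lem:decomp-set-cover} is met: for \emph{every} way of splitting $R$ into $U_1,U_2$ with $U_i$ coverable by $\mathcal{F}_i$, each restricted instance must lie in the claimed class. This holds because the class used for Theorem~\ref{thm:const-fact-stab} is closed under taking arbitrary sub-instances (restricting the rectangle set and working with the same candidate-segment construction), and an $x$-laminar (resp.\ $y$-laminar) restriction of a laminar family is still laminar. I note in passing, as the text already remarks, that one cannot bypass the decomposition by treating horizontal-and-vertical segments as a single family: \fig~\ref{fig:hvs-counterexample} exhibits a laminar family of mixed-orientation segments with shallow-cell complexity quadratic in $m$, so a direct application of Theorem~\ref{thm:shall-cell-compl-apx} would only give an $O(\log n)$ bound. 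The decomposition into the two orientations is precisely what restores low complexity on each side.
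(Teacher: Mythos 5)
Your proposal is correct and follows essentially the same route as the paper: the paper proves Corollary~\ref{cor:hor-vert-stabbing} in one line by partitioning the segment family into horizontal and vertical segments, applying the results of Section~\ref{sec:handl-gener-case} to each orientation (the vertical case by symmetry), and combining via the decomposition lemma (Lemma~\ref{lem:decomp-set-cover}). Your additional remarks on sub-instance closure and on why a direct application of Theorem~\ref{thm:shall-cell-compl-apx} fails are consistent with the paper's discussion around \fig~\ref{fig:hvs-counterexample}.
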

\Fig~\ref{fig:hvs-counterexample} shows that a laminar family of
horizontal segments and vertical segments may have a shallow-cell
complexity with quadratic dependence on $m$.  Thus,
Corollary~\ref{cor:hor-vert-stabbing} is another natural example where
low shallow-cell complexity is not closed under union and where the
decomposition lemma 
gives a constant-factor approximation although the shallow-cell
complexity is high.

\paragraph{Stabbing 3D-Boxes by Squares.}
In the 3D-variant of \Stabbing, we want to stab 3D-boxes with
axis-aligned squares, minimizing the sum of the areas or the sum of
the perimeters of the squares.  Here, ``stabbing'' means ``completely
cutting across''.
By combining the same idea with shifted quadtrees---the 2D-equivalent
of laminar families of intervals---we obtain a constant-factor
approximation for this problem.  It is an interesting question if our
approach can be extended to handle also arbitrary rectangles but this
seems to require further ideas.

\paragraph{Covering Points by Anchored Squares.}
Given a set~$P$ of points that need to be covered and a set~$A$ of
anchor points, we want to find a set of axis-aligned squares such
that each square contains at least one anchor point,
the union of the squares covers~$P$, and the total area or the total perimeter 
of the squares is minimized.  Again, with the help of shifted
quadtrees, we can apply the decomposition lemma.  In this case, we
do not even need to apply the machinery of quasi-uniform sampling;
instead, we can use dynamic programming on the decomposed instances.
This yields a deterministic algorithm with a concrete constant
approximation ratio ($4 \cdot 6^2$, without polishing).

\section{Conclusion}

We have seen that \Stabbing is \NP-hard and that it admits
an~${\bigOh(1)}$-approxi\-mation algorithm.  Since our positive
results relies on a general result regarding the shallow-cell
complexity of the problem, it would be interesting to design a direct,
combinatorial $c$-approxi\-mation algorithm with a concrete
constant~$c$ that makes use of the geometry underlying the problem.

On the negative side, it remains open whether \Stabbing is \APX-hard,
which is the case for \ConstrainedStabbing and \CardinalityStabbing.
Do the latter two problems admit
constant-factor approximation algorithms?  So far, we have only
an~${\bigOh(\log\log\opt)}$-approximation algorithm for
\CardinalityStabbing via an existing approximation algorithm for
piercing 3D-boxes \cite{aronov10-eps-nets-rectangles},
see Corollary~\ref{cor:loglogopt} in Appendix~\ref{sec:piercing}.
(Here, $\opt$ denotes the size of an optimum solution.)

Finally, it would be interesting to examine natural problems of high
shallow-cell complexity of unsettled approximability and try to
partition them (possibly by our decomposition technique) into
instances of low-shallow cell complexity, as in
Section~\ref{sec:applications}.

\appendix

\section{Structural Properties and Applicability of Existing Techniques}
\label{sec:relat-probs}

Since our problem is---at least in its general form in the setting of
line segments---new, we investigate the applicability of existing
techniques for (geometric) \problemName{Set Cover}.  We provide
instances of \Stabbing where the greedy algorithm (and natural
variants of it) have performance~${\Omega(\log n)}$; see
Section~\ref{sec:greedy}.  Then we explore the structural relation of
\Stabbing to existing geometric set cover (or, equivalently, hitting
set) problems; see Section~\ref{sec:piercing}.

\subsection{Greedy Algorithm for Set Cover}
\label{sec:greedy}

The greedy algorithm has approximation ratio~${\ln n}$ for
\problemName{Set Cover} on~${n}$ elements.  It is known that this
result is the best possible unless
$\runtimeclass{P}=\NP$~\cite{feige98-setcover}.

The greedy algorithm---translated to \Stabbing---works as follows.
Start with an empty set~${S}$ of segments.  Pick a segment~${s}$ that
minimizes the cost efficiency~${\len{s}/n_s}$ where~${n_s}$ is the
number of rectangles that are stabbed by~${s}$.  Add~${s}$ to~${S}$
and remove the rectangles that are stabbed by~${s}$ from~${R}$.
Repeat these steps until~${R}$ becomes empty.  Eventually, output the
resulting set~${S}$.  This algorithm certainly has approximation
ratio~${\bigOh(\log n)}$ for \Stabbing.

While the bound~${\bigOh(\log n)}$ is tight for general
\problemName{Set Cover} this does not immediately imply tightness for
\Stabbing as well.  Unfortunately, there are instances of \Stabbing
where the greedy algorithm (and natural variants of it) have
ratio~${\Omega(\log n)}$.

\begin{figure}[tb]
	\centering
	\includegraphics{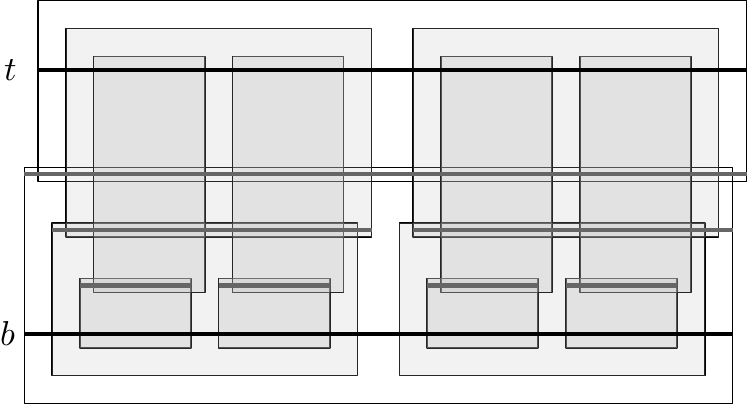}
	\caption{Instance where the greedy algorithm has
		performance~${\Omega(\log n)}$.  The black segments belong to the
		optimum solution and the gray segments belong to the output of the
		greedy algorithm.  To make the drawing easier to read, we moved
		the rectangles of~${T}$ (those stabbed by~${t}$) slightly to the
		right and to the bottom. In our instance, the bottom edges of the
		rectangles in~${T}$ coincide with the top edges of their
		counterparts in~${B}$ (which are stabbed by~${b}$), and there are
		no two top edges with the same vertical projection.  }
	\label{fig:greedy-counterexample}
\end{figure}
Consider the instance shown in \fig~\ref{fig:greedy-counterexample}.
We introduce two segments~${t}$ and~${b}$ of length~${1}$.  Then we
construct a set~${B}$ of nested rectangles that are all stabbed
by~${b}$.  The set~${B}$ is subdivided into levels~${0, 1,\dots,
	\ell}$ according to the nesting hierarchy (see figure).  At
level~${i}$, there are~${2^i}$ pairwise disjoint rectangles of
width~${(1-i\epsilon)/2^i}$ for a sufficiently small
positive~${\epsilon}$.  We slightly perturb the top edges of the
rectangles in~${B}$ so that the top edges of the rectangles in~${B}$
have pairwise different~$y$-coordinates.  Next, we construct a
set~${T}$ of rectangles.  For each rectangle~${r\in B}$ we create a
corresponding rectangle~${r'}$ in~${T}$ of the same width such that
the bottom edge of~${r'}$ coincides with the top edge of~${r}$
and~${r'}$ is stabbed by~${t}$.

We now analyze how the greedy algorithm performs on this instance.
First, we verify that the first segment~$s$ picked by the algorithm
contains the top edge of some rectangle~${r\in B}$ and has endpoints
that lie on the left and right edge of some rectangle~${r'\in B}$.
If~${s}$ were not containing the top edge of any rectangle in~${B}$,
we could vertically move it until it contains such a top edge and
simultaneously stabs one rectangle more than before; a contradiction
to the greedy choice.  On the other hand, if the endpoints of~${s}$
were not lying on a left and right edge of the same rectangle, then,
by our construction, there would be a small positive interval on~${s}$
which is not contained in the rectangles stabbed by~${s}$. We could
cut the interval out of~${s}$ and obtain one or two new line segments,
where at least one of them has a better cost efficiency than~${s}$; a
contradiction.

Now, consider a segment~${s}$ that is lying on the top edge of some
rectangle~${r\in B}$ and is containing the vertical boundaries of some
rectangle~${r'\in B}$.  Let~${i\in\{0,\dots,\ell\}}$ be the level
of~${r'}$.  Observe that~${s}$ has length~${(1-i\epsilon)/2^i}$ and
stabs~${\sum_{j=0}^{\ell-i} 2^j + 1 = 2^{\ell-i+1}}$ many rectangles.
(Note that~${s}$ also stabs the rectangle corresponding to~${r}$
in~${T}$.)  Therefore,~${s}$ has cost
efficiency~${(1-i\epsilon)/2^{\ell+1}}$, which is minimized for the
biggest-possible value of~${i}$.  From this we can conclude that the
algorithms picks~${s}$ such that~${r'}$ belongs to the highest
level~${i}$, which implies that~${r'}$ and~${r}$ coincide. Thus,~${s}$
is the top edge of some rectangle in~${B}$ with the highest
level~${i}$.  In subsequent iterations, the algorithm continues
selecting the top edge of a rectangle in~${B}$ that has the highest
level among the remaining rectangles.  Overall the algorithm produces
a solution that consists of all top edges of rectangles in~${B}$ which
has cost~${\Omega(\log n)}$ since the highest level~${\ell}$ is
in~${\Omega(\log n)}$.  The solution~${\{t,b\}}$, however, has only
cost~${2}$, which completes our claim.

The example above suggests the following natural variation of the
greedy algorithm.  In each step, pick the segment that minimizes the
ratio of its length to the total \emph{width} of the (previously
unstabbed) rectangles it stabs.  We can easily modify the instance so
that also this algorithms performs bad.  In the first step, we remove
all rectangles of odd levels and do not change the level
enumeration. 
Thus, all levels are now even.  In the second step, we create copies
of each rectangle so that a rectangle at level~${i}$ has
multiplicity~${\ceil{ 2^i/(1-i\epsilon)}}$.  This multiplicity will
ensure that the total weight of equivalent rectangles is roughly~${1}$
and not smaller than~$1$.  Note that the number of levels is still
in~${\Omega(\log n)}$ although we increased the number~$n$ of
rectangles.

To this end, we show that the modified greedy algorithm picks again
all top edges of the rectangles in~${B}$, always greedily picking one
from the currently highest level.  Suppose this were not the case and
consider the first segment~${s}$ not picked in this manner.  By the
same discussion as in the unweighted case, the segment~${s}$ lies on a
top edge of a rectangle~${r\in B}$ and is touching the horizontal
boundaries of a rectangle~${r'\in B}$.  Let~${i}$ be the level
of~${r'}$. By our assumption,~${i}$ is not the highest level, hence,
the highest level is at least~${i+2}$ (as all levels are even).  Thus,
we can find a segment~${s'}$ that lies on a top edge of some
rectangle~${r''\in B}$ and, at the same time, touches the horizontal
boundaries of some rectangle of level~${i+2}$.

Let~${w}$ be the total width of the rectangles stabbed by~${s'}$
excluding the rectangles of~${T}$ corresponding to~${r''}$.  Note that
the total width of those excluded rectangles is at least~${1}$.
Hence, the cost efficiency of~${s'}$ is at
most \[\frac{1-(i+2)\epsilon}{2^{i+2} (w +
	1)}\formulaPunctuationSpace.\]

Next, consider~${s}$. The total width of~${r'}$ and its copies is at
most
\[\ceil{\frac{2^i}{1-i\epsilon}}\cdot\frac{1-i\epsilon}{2^i} < 1 +
\frac{1-i\epsilon}{2^i}\formulaPunctuationSpace.\] The same bound
holds for the total width of the rectangles of~${T}$ corresponding
to~${r}$ as the level of~${r}$ is not smaller than~${i}$.  Thus, the
total width of the rectangles stabbed by~${s}$ is at most
\[4w + 2\left(1 + \frac{1-i\epsilon}{2^i}\right) < 4w +
4\formulaPunctuationSpace.\] Hence, the cost efficiency of~${s'}$ is
greater than \[\frac{1-i\epsilon}{2^i \left(4w + 4\right)} =
\frac{1-i\epsilon}{2^{i+2} \left(w + 1\right)}\] and thus bigger than
the cost efficiency of~${s'}$; a contradiction to the greedy choice.

Note that none of the segments returned by the algorithm is redundant
so that a post-processing that removes unnecessary segment parts does
not help.

\subsection{Relation to \problemName{Piercing}}
\label{sec:piercing}

In this section, we consider how our stabbing problems relate to the
well-studied hitting set problem for axis-aligned rectangles (or boxes
in higher dimensions), which we call \problemName{Piercing}. In this
problem, we are given a set~${R}$ of axis-aligned rectangles (or
boxes) and a set~${P}$ of points.  We want to hit all rectangles using
a minimum number of points from~${P}$.  We also consider the weighted
version where each point has a positive weight and we want to minimize
the total weight of the points selected.  Similarly to \Stabbing, also
this problem can be expressed naturally in terms of \problemName{Set
	Cover}: The rectangles are the elements to be covered, and the
piercing points are the sets.  This correspondence allows us to
compare stabbing and piercing by asking whether a given set cover
instance has a realization as either of them.  We will show that every
stabbing instance corresponds directly in this way to a piercing
instance in dimension three.  Just in dimension two, however, not
every stabbing instance can be realized as a piercing instance.  This
shows that \Stabbing is structurally different from two-dimensional
\problemName{Piercing}.

\begin{theorem}
	Any set cover instance~${(\universe,\sets)}$ arising from \Stabbing
	can be realized as an instance of weighted \problemName{Piercing} in
	dimension~${3}$.
\end{theorem}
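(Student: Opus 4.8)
The plan is to exhibit an explicit change of coordinates. Recall from Section~\ref{sec:lp} that a set cover instance $(\universe,\sets)$ coming from \Stabbing has the rectangle set $R$ as its universe and a finite family $F$ of horizontal candidate segments as its sets, where a segment~$s$ (at height~$\Ly$ and with $x$-extent $[\Ll,\Lr]$) contains a rectangle $r=[\Rl,\Rr]\times[\Rb,\Rt]$ precisely when $\Rb\le\Ly\le\Rt$, $\Ll\le\Rl$, and $\Lr\ge\Rr$; the weight of~$s$ is $\len{s}=\Lr-\Ll$. The key observation is that these three conditions are each one-sided in one of the three independent parameters $(\Ly,\Ll,\Lr)$ that describe a segment, so, viewed in that parameter space, they cut out an axis-parallel box.

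Accordingly, I would map each candidate segment~$s$ to the point $p_s:=(\Ly,\Ll,\Lr)\in\RR^3$ with weight $\len{s}$, and each rectangle~$r$ to the axis-aligned box $B_r:=[\Rb,\Rt]\times(-\infty,\Rl]\times[\Rr,+\infty)$. Then $p_s\in B_r$ holds if and only if $\Rb\le\Ly\le\Rt$, $\Ll\le\Rl$, and $\Lr\ge\Rr$, i.e.\ if and only if $s$ stabs~$r$. Hence the incidence structure of the \problemName{Piercing} instance with boxes $\{B_r\}_{r\in R}$ and weighted points $\{p_s\}_{s\in F}$ matches that of $(\universe,\sets,c)$ set-for-set, element-for-element, and weight-for-weight, which is exactly what it means for $(\universe,\sets)$ to be realizable as weighted \problemName{Piercing} in dimension~$3$.

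The only cosmetic point to handle is that the boxes $B_r$ as defined are unbounded. Since $F$ is finite, all points $p_s$ lie in some bounded cube $Q=[-M,M]^3$, and replacing each $B_r$ by $B_r\cap Q$ changes no incidence ($p_s\in B_r$ iff $p_s\in B_r\cap Q$) while yielding a genuine bounded axis-aligned box. I would also note that distinct rectangles yield distinct boxes and distinct segments yield distinct points, so nothing collapses—though even collisions would be harmless for a set-cover realization.

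There is essentially no hard step here: the entire argument is the remark that "stabbing'' decomposes into three independent one-sided inequalities, one per coordinate of the segment-parameter space, so it becomes box containment. The only things to be careful about are matching the exact convention for stabbing (intersecting both the left and the right edge of the rectangle) with the box-membership conditions, and dealing with the unbounded boxes as above.
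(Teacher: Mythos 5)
Your proposal is correct and follows essentially the same route as the paper: both lift each segment to the point $(\Ll,\Lr,\Ly)$ (up to a permutation of coordinates) with weight $\Lr-\Ll$ and turn each rectangle into the axis-aligned region cut out by the three one-sided conditions $\Ll\leq\Rl$, $\Rr\leq\Lr$, $\Ly\in[\Rb,\Rt]$, then truncate the two unbounded directions using the finiteness/boundedness of the relevant segment endpoints. The only difference is cosmetic (the paper bounds the boxes by restricting the relevant values of $\Ll$ and $\Lr$, you intersect with a large cube), so there is nothing further to add.
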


\begin{proof}
	Starting with a ($2$-dimensional) stabbing instance, we will
	translate it to a~${3}$-dimensional piercing instance: Every
	rectangle becomes an axis-aligned box and every stabbing line
	segment becomes a piercing point.  Note that a stabbing line segment
	is defined by an interval~${[\Ll,\Lr]}$ and a height \Ly{}.  We lift
	it to the~${3}$-dimensional point~${(\Ll,\Lr,\Ly)}$ and assign it
	the weight~${\Lr-\Ll}$.  Consider a
	rectangle~${[\Rl,\Rr]\times[\Rb,\Rt]}$.  The line segment stabs this
	rectangle if and only if
	\begin{equation}
	\Ll\leq\Rl, \quad \Rr\leq\Lr,\quad \mathrm{and}\quad
	\Ly\in[\Rb,\Rt]\formulaPunctuationSpace. 
	\end{equation}
	This describes an axis-aligned box that is unbounded on one side
	of~$\Ll$ on the first coordinate axis and on one side of~$\Lr$ on
	the second coordinate axis. 
	We can observe that an optimal solution does not need to use any
	line segments with endpoints to the left of all rectangles or to the
	right of all rectangles.  This limits the relevant values of \Ll{}
	and \Lr{} and we can bound the box on all sides.
\end{proof}

Aronov et al.~\cite{aronov10-eps-nets-rectangles} describe
an~${\bigOh(\log\log\opt)}$-approximation algorithm for unweighted
\problemName{Piercing} in dimension~${3}$, where~${\opt}$ is the size
of a an optimum solution. This algorithm immediately gives us the same
bound for \CardinalityStabbing.  Their result does not carry over to
weighted \problemName{Piercing}, so we cannot use it to solve
\Stabbing.

\begin{corollary}
	\label{cor:loglogopt}
	There is an~${\bigOh(\log\log\opt)}$-approximation algorithm for
	\CardinalityStabbing, where~${\opt}$ is the size of an optimum
	solution.
\end{corollary}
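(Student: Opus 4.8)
The plan is to obtain this as an immediate consequence of the preceding theorem, combined with the $\epsilon$-net based hitting-set algorithm of Aronov, Ezra and Sharir~\cite{aronov10-eps-nets-rectangles} for axis-parallel boxes in three dimensions. First I would note that an instance of \CardinalityStabbing is, by definition, an \emph{unweighted} set cover instance in which the rectangles are the elements, the given candidate segments form the sets, and a segment ``contains'' a rectangle iff it stabs it; if one prefers the unconstrained cardinality version, one simply restricts in advance to the $\bigOh(n^3)$ candidate segments described in Section~\ref{sec:lp}, which does not change the optimum.

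Next I would apply the embedding from the proof of the previous theorem, ignoring the weights: map every rectangle $[\Rl,\Rr]\times[\Rb,\Rt]$ to the axis-parallel box $\{(\Ll,\Lr,\Ly): \Ll\le\Rl,\ \Rr\le\Lr,\ \Ly\in[\Rb,\Rt]\}$ (which is bounded on all six faces once the ranges of $\Ll$ and $\Lr$ are clipped to the relevant interval, exactly as argued there), and map every candidate segment $[\Ll,\Lr]\times\{\Ly\}$ to the point $(\Ll,\Lr,\Ly)$. A segment stabs a rectangle precisely when the corresponding point lies inside the corresponding box, so subsets of the segment set correspond bijectively to subsets of the point set, feasibility of a stabbing matches feasibility of the corresponding piercing, and the bijection preserves cardinality. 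Hence the \CardinalityStabbing instance and the resulting instance of \emph{unweighted} \problemName{Piercing} for boxes in $\mathbb{R}^3$ have the same optimum $\opt$.

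Finally I would run the algorithm of Aronov et al.\ on this box/point instance. Since it is an $\bigOh(\log\log\opt)$-approximation for unweighted \problemName{Piercing} of axis-parallel boxes in three dimensions, it returns a set of $\bigOh(\opt\log\log\opt)$ points piercing all boxes; pulling this set back through the bijection gives a feasible solution to \CardinalityStabbing of the same size, which proves the corollary. I do not anticipate any real obstacle: the only points to verify are that the construction of the previous theorem yields genuine axis-parallel boxes in dimension three (it does, including boundedness) and that the cited algorithm operates on a point set supplied as part of the input (it does, being driven by $\epsilon$-nets over a finite ground set) -- both routine.
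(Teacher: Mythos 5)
Your proposal is correct and follows essentially the same route as the paper: the paper also obtains the corollary by invoking its embedding of \Stabbing instances as \problemName{Piercing} instances for axis-parallel boxes in dimension three and then running the $\bigOh(\log\log\opt)$-approximation of Aronov et al.\ on the resulting unweighted instance. Your additional checks (boundedness of the boxes, cardinality preservation under the bijection) are exactly the points the paper's construction already handles.
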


Now, we show that such a correspondence does not exist in
dimension~${2}$: There exist stabbing instances that have no
corresponding piercing instance.  A set~${S\in\sets}$ in a set cover
instance~${(\universe,\sets)}$ is called \emph{universal}
if~${S=\universe{}}$.  Note that the universal set (if there exists
any) is not necessarily an optimum solution since we are dealing with
weighted \problemName{Set Cover}.

\begin{lemma} \label{lem:smallPiercing} Let~${(\universe,\sets)}$ be a
	\problemName{Set Cover} instance on~${n}$ elements that arises from
	a \problemName{Piercing} instance and contains the universal set.
	For any~${k}$,~${\sets}$ contains~${\bigOh(n)}$ distinct sets of
	cardinality~${k}$.
\end{lemma}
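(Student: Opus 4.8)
The plan is to reduce the statement to a combinatorial bound on how many distinct $k$-element subsets of the element set can be pierced by a single point. Write $R$ for the set of $n$ rectangles playing the role of the elements, and for a point $q$ in the plane let $R_q:=\{r\in R: q\in r\}$ be the set of rectangles pierced by $q$. By definition of the piercing instance, every set in $\sets$ has the form $R_p$ for some point $p$ in the given point set, so it suffices to show that the number of distinct sets $R_q$ of cardinality $k$, as $q$ ranges over \emph{all} of $\RR^2$, is $\bigOh(n)$. The only use I would make of the hypothesis that $\sets$ contains the universal set is the following: there is a point $\unipoint$ contained in every rectangle of $R$ (if $R=\emptyset$ the claim is trivial). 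After translating, I may assume $\unipoint$ is the origin, so every rectangle $r=[a_r,b_r]\times[c_r,d_r]$ satisfies $a_r\le 0\le b_r$ and $c_r\le 0\le d_r$.

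Next I would partition $\RR^2$ into the four closed quadrants determined by $\unipoint$, assigning the horizontal and vertical lines through $\unipoint$ arbitrarily, so that every point lies in at least one quadrant. Consider a query point $q=(q_x,q_y)$ in the upper-right quadrant, i.e.\ $q_x\ge 0$ and $q_y\ge 0$. Since $a_r\le 0\le q_x$ and $c_r\le 0\le q_y$ hold automatically, we get $r\in R_q$ if and only if $b_r\ge q_x$ and $d_r\ge q_y$; only the top-right corners of the rectangles are relevant. Now fix $q_y$ and let $D(q_y):=\{r\in R: d_r\ge q_y\}$; as $q_y$ decreases this set only grows, so it takes at most $n+1$ distinct values, forming a chain under inclusion. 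For a fixed set $D=D(q_y)$, the sets $\{r\in D: b_r\ge q_x\}$ obtained by varying $q_x$ again form a nested chain (they only grow as $q_x$ decreases), so at most one of them has cardinality exactly $k$, because distinct members of an inclusion chain have distinct sizes. Hence each of the at most $n+1$ possible sets $D$ gives rise to at most one $k$-element set $R_q$ with $q$ in the upper-right quadrant.

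The same analysis works for the other three quadrants: in each of them two of the four boundary inequalities defining a rectangle hold automatically (because $r$ contains the origin), leaving again a condition ``one chosen vertical boundary on one side of $q_x$ and one chosen horizontal boundary on one side of $q_y$'', which yields at most $n+1$ distinct $k$-element pierced sets per quadrant. Summing over the four quadrants gives $4(n+1)=\bigOh(n)$, with a constant independent of $k$, as claimed. I do not expect a serious obstacle here; the one point that needs care is that the argument must avoid bounding \emph{all} distinct pierced subsets -- there can be $\Theta(n^2)$ of those -- and instead exploit the prescribed cardinality $k$ together with the single common point $\unipoint$. Degeneracies (query points on the axes through $\unipoint$, or ties among the $b_r,c_r,d_r$) cause no trouble, since $R_q$ is determined outright by $q$ and the nested-chain observation needs no general-position assumption; it is nonetheless worth spelling out.
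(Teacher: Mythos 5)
Your argument is correct, and it reaches the bound by a genuinely different route than the paper. The paper also extracts the universal piercing point~$\unipoint$, but then reasons about the arrangement of the $n$ rectangles: it cuts the plane into $\bigOh(n)$ vertical slabs through the left and right rectangle edges, notes that every rectangle meeting a slab spans it horizontally and contains the height of~$\unipoint$, so the depth of the faces in a slab is unimodal when swept from top to bottom, and concludes that each slab has at most two faces of depth~$k$ -- giving $\bigOh(n)$ faces of depth~$k$ overall and hence $\bigOh(n)$ distinct $k$-sets. You instead split the plane into the four quadrants at~$\unipoint$, observe that in each quadrant membership of a rectangle reduces to a dominance condition on a single corner, and then use a two-level chain argument (at most $n+1$ distinct values of $D(q_y)$, each contributing at most one set of the prescribed cardinality~$k$) to get at most $n+1$ distinct $k$-sets per quadrant. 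Both proofs exploit exactly the same two resources -- the common point and the fixed cardinality~$k$ -- but yours dispenses with the arrangement/sweep machinery and is degeneracy-free by construction, whereas the paper's slab decomposition yields the slightly more local statement that every vertical slab contains at most two cells of each depth; the resulting constants are comparable ($4(n+1)$ versus roughly $4n+2$). Your closing remark about why one must use~$k$ and not count all pierced subsets is exactly the right sanity check, since Lemma~\ref{lem:largeStabbing} shows $\Omega(n^2)$ distinct sets can occur.
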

\begin{proof}
	Consider the faces of the arrangement on the plane induced by the
	set~${R}$ of~${n}$ rectangles of the \problemName{Piercing}
	instance.  Any points in the same face pierce exactly the same set
	of rectangles and are therefore the same set in terms of \sets{}.
	Call the number of rectangles pierced by points in a face the
	\emph{depth} of the face.  Since it is given that~${C}$ contains the
	universal set, there must be a face of depth~${n}$.  This face
	contains a point~${\unipoint\in P}$ and this point pierces all
	rectangles.  See \fig~\ref{fig:unipierce} for an example, where
	\theTick{} indicates \unipoint{}.
	
	\begin{figure}[tb]
		\centering 
		\subcaptionbox{\label{fig:unipierce}A \problemName{Piercing}
			instance with a universal
			point.}{~~\includegraphics{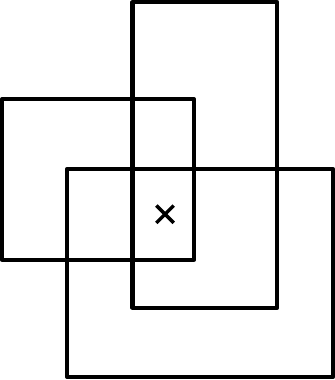}~~} \hfill
		\subcaptionbox{\label{fig:slabs}Slabs with depth
			values.}{\includegraphics{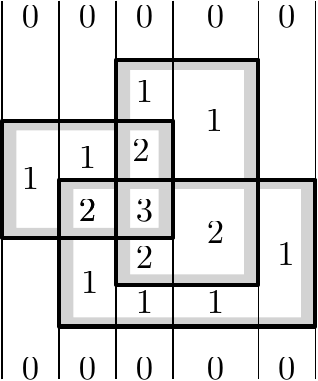}} \hfill
		\subcaptionbox{\label{fig:largeStabbing}A \Stabbing instance with
			many lines of equal
			cardinality.}{~~\includegraphics[page=4]{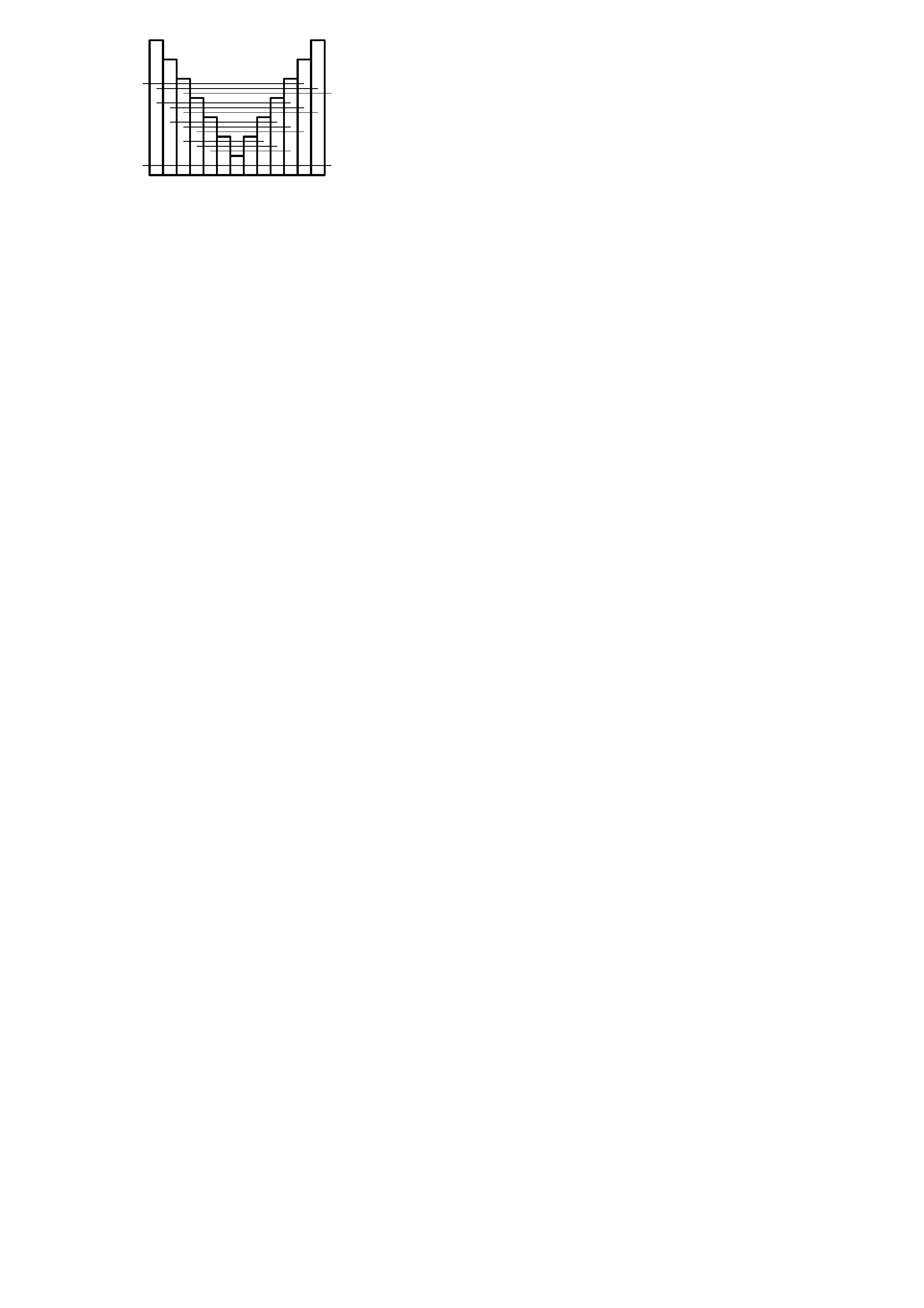}~~}
		\caption{Some structural properties of \problemName{Piercing} and
			\Stabbing.}
	\end{figure}
	
	Now, we consider a vertical line at every left and right edge of a
	rectangle.  This cuts the plane into~${\bigOh(n)}$ vertical slabs
	and within each slab, all faces are rectangles (see
	\fig~\ref{fig:slabs}).  In each slab, the topmost face has depth
	zero.  Traversing downward until the height of \unipoint{}, every
	next face \emph{increases} the depth by at least one.  Traversing
	further downward \emph{decreases} the depth by at least one for each
	face.  Hence, for any~${k}$, there are at most two faces with
	depth~${k}$ in a slab.  The number of faces of a certain depth
	bounds the number of distinct sets of that size, and the claimed
	bound follows.
\end{proof}

\begin{lemma} \label{lem:largeStabbing} For every odd~${n}$, there
	exist \problemName{Set Cover} instances on~${n}$ elements arising
	from \Stabbing instances that contain the universal set
	and~${\Omega(n^2)}$ distinct sets of equal cardinality.
\end{lemma}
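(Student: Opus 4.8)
The plan is to exhibit, for each odd $n$, an explicit set of $n$ rectangles whose induced \problemName{Set Cover} instance has both required properties. The guiding intuition comes from contrasting with Lemma~\ref{lem:smallPiercing}: a piercing point is pinned down by two coordinates, whereas a stabbing segment has three degrees of freedom (its two endpoints and its height), and it is exactly this extra parameter that we will exploit. Concretely, I would build a \emph{two-sided staircase}: nested rectangles $A_1\subsetneq A_2\subsetneq\dots$ hanging down-and-to-the-left and nested rectangles $B_1\subsetneq B_2\subsetneq\dots$ hanging down-and-to-the-right, all sharing a common low strip. One clean choice is $A_i=[-i,0]\times[0,i]$ and $B_j=[0,j]\times[0,j]$, with the number of $A$'s and $B$'s adding up to $n$ (using one more $A$ than $B$, say, to get the odd count). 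Since every $A_i$ and every $B_j$ contains the height $y=\tfrac12$, the full-width segment at that height stabs all $n$ rectangles, so the universal set is present in $\sets$.

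The core observation is that a horizontal segment whose left endpoint lies on the left edge of $A_i$, whose right endpoint lies on the right edge of $B_j$, and which sits at an integer height $c$ stabs \emph{precisely} $A_c,A_{c+1},\dots,A_i$ together with $B_c,B_{c+1},\dots,B_j$: reaching $A_{i'}$ horizontally needs $i'\le i$, lying at height $c$ inside $A_{i'}$ needs $i'\ge c$, and symmetrically for the $B$'s, while by construction there is nothing else to stab. Thus raising the height by one trims exactly one rectangle off each chain, so the cardinality $(i-c+1)+(j-c+1)=i+j-2c+2$ is constant along each diagonal $i+j-2c=\text{const}$. I would fix a target cardinality $k=\Theta(n)$ (say, $k$ slightly larger than the number of $A$'s), count the triples $(c,i,j)$ with $1\le c\le i$, $1\le c\le j$, both indices within range, and $i+j-2c+2=k$, and check by a one-line summation that this count is $\Theta(n^2)$. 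Each such triple is read off uniquely from its stabbed set ($c$ is the smallest index occurring, $i$ and $j$ the largest $A$- and $B$-indices), so the corresponding sets are pairwise distinct; hence $\sets$ contains $\Omega(n^2)$ distinct stabbable sets of cardinality exactly $k$.

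The routine parts are choosing tidy coordinates, verifying ``nothing else is stabbed'' (immediate from the explicit intervals), and the arithmetic of the diagonal count. The genuinely delicate point — and the step I would watch most carefully — is reconciling the two demands simultaneously: the universal-set requirement forces all $y$-intervals to share a point, hence to pairwise overlap, which on its face looks like it should \emph{prevent} the instance from isolating rectangles and thus from producing more than $O(n)$ distinct sets per cardinality (exactly as happens for piercing). The two-sided staircase defuses this because a segment at height $c$ still sees only a \emph{prefix} of each nested chain rather than the whole chain, and juxtaposing two independent chains upgrades the family of reachable ``prefix $\cup$ prefix'' sets from one-parameter to two-parameter while the height knob keeps cardinalities constant. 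Making this trade-off actually deliver \emph{equal} cardinalities — rather than cardinalities that drift with the parameters — is the heart of the construction; the rest is bookkeeping.
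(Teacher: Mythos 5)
Your proposal is correct and takes essentially the same approach as the paper: both build a two-sided (double) staircase in which a height parameter and two horizontal-extent parameters trade off against one another so that $\Theta(n^2)$ distinct stabbed sets all have the same cardinality, while a single low, full-width segment supplies the universal set. The only difference is cosmetic---you use two nested chains of rectangles sharing a corner at the origin, whereas the paper uses side-by-side unit-width rectangles whose heights grow away from the center---but the resulting set systems are combinatorially identical.
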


\begin{proof}
	Let~${\ell}$ be arbitrarily large and even.  For
	each~${i\in\{-\ell,..,\ell\}}$, we introduce a rectangle~${r_i}$.
	Thus, we have~${n=2\ell+1}$ rectangles, and we will place them in a
	double staircase as follows; see \fig~\ref{fig:largeStabbing}.  All
	rectangles have width~${1}$ and touch the~$x$-axis with their bottom
	edges.  For each~${i\in\{-\ell,..,\ell\}}$, the left edge of
	rectangle~${r_i}$ has $x$-coordinate~${i}$ and height~${|i|+1}$.
	Call the rectangles with negative index \emph{left} and the ones
	with positive index \emph{right}.  A stabbing line is said to have
	\emph{level} $i$ if its~${y}$-coordinate is in~${(i,i+1)}$.  At
	level~${0}$, we add a stabbing line that stabs every rectangle.
	
	Let~${k=\ell/2}$.  Now, we construct~${k}$ stabbing lines on each of
	the levels~${1}$ through~${k+1}$, each stabbing~${k+1}$ rectangles.
	Consider level~${i}$.  For every~${j}$ with~${1\leq j\leq k}$, the
	stabbing line~${s_{i,j}}$ stabs~${j}$ many left rectangles
	and~${k+1-j}$ many right rectangles.  This construction is uniquely
	defined, enough rectangles exist on these levels, and all of these
	line segments stab distinct sets of~$k+1$ rectangles.  Thus, the
	lemma holds: by construction, we have a universal set,
	and~${k\cdot(k+1)\finO \Omega(n^2)}$.
\end{proof}

\begin{theorem}
	There exist \problemName{Set Cover} instances that are realizable
	as~${2}$-dimensional \Stabbing but not as~${2}$-dimensional
	\problemName{Piercing}.
\end{theorem}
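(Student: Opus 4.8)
The plan is to combine the two preceding lemmas, which were designed precisely for this purpose. Lemma~\ref{lem:largeStabbing} produces, for every odd $n$, a \problemName{Set Cover} instance $(\universe,\sets)$ that is realizable as a $2$-dimensional \Stabbing instance, contains the universal set, and has $\Omega(n^2)$ distinct sets of one common cardinality. Lemma~\ref{lem:smallPiercing} says that any \problemName{Set Cover} instance on $n$ elements that arises from a $2$-dimensional \problemName{Piercing} instance \emph{and} contains the universal set has only $O(n)$ distinct sets of each fixed cardinality $k$. So the first step is simply to fix the instance $(\universe,\sets)$ from Lemma~\ref{lem:largeStabbing} for a sufficiently large odd $n$.

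Next I would argue by contradiction: suppose $(\universe,\sets)$ were also realizable as a $2$-dimensional \problemName{Piercing} instance. Since $(\universe,\sets)$ contains the universal set (the level-$0$ stabbing line in the construction of Lemma~\ref{lem:largeStabbing} stabs every rectangle), this \problemName{Piercing} realization is a \problemName{Set Cover} instance on $n$ elements that arises from \problemName{Piercing} and contains the universal set. Hence Lemma~\ref{lem:smallPiercing} applies and gives, for the particular cardinality $k+1 = \ell/2+1$ witnessed in Lemma~\ref{lem:largeStabbing}, at most $O(n)$ distinct sets of that cardinality. But Lemma~\ref{lem:largeStabbing} guarantees $\Omega(n^2)$ such sets. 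For $n$ large enough this is a contradiction, so no $2$-dimensional \problemName{Piercing} realization of $(\universe,\sets)$ can exist, which proves the theorem.

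The argument is essentially bookkeeping once the two lemmas are in hand; there is no real obstacle beyond making sure the hypotheses line up. The one point that needs a moment of care is that ``realizable as \problemName{Piercing}'' and ``arising from a \problemName{Piercing} instance'' refer to the same notion (namely, the abstract \problemName{Set Cover} instance being isomorphic to the element/set incidence structure of some rectangle/point configuration), so that the universal-set hypothesis transfers correctly and Lemma~\ref{lem:smallPiercing} is indeed applicable to the hypothetical realization. With that observed, the quadratic-versus-linear gap closes the proof.
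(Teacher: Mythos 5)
Your proposal is correct and follows exactly the paper's own argument: take the instance from Lemma~\ref{lem:largeStabbing} for sufficiently large odd $n$ and observe that its $\Omega(n^2)$ distinct sets of equal cardinality contradict the $\bigOh(n)$ bound of Lemma~\ref{lem:smallPiercing}, so no \problemName{Piercing} realization exists. The extra care you take about the universal-set hypothesis transferring to the hypothetical realization is a reasonable clarification but does not change the argument.
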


\begin{proof}
	Consider an arbitrarily large \problemName{Set Cover} instance
	on~${n}$ elements from Lemma~\ref{lem:largeStabbing}.  It is
	realizable as a \Stabbing instance, has the universal set, and
	contains~${\Omega(n^2)}$ distinct sets of equal size.  If it is
	large enough, then, by Lemma~\ref{lem:smallPiercing}, it does not
	have a realization as \problemName{Piercing}.
\end{proof}

\end{document}